\newif\iffull
\fulltrue

\documentclass[conference]{IEEEtran} %

\usepackage[svgnames,table]{xcolor}
\usepackage{ebproof}
\usepackage{mathpartir}
\usepackage[scaled=0.95]{zlmtt}
\usepackage{setspace}
\usepackage{amsmath}
\usepackage{mathrsfs}
\usepackage{amssymb}
\usepackage[]{todonotes}
\usepackage[inline]{enumitem}
\usepackage{tikz}
\usetikzlibrary{shapes, arrows.meta, calendar, matrix, calc, decorations, overlay-beamer-styles, positioning, fit, decorations.pathreplacing}

\usepackage{listings}
\usepackage{comment}
\usepackage{url}
\usepackage{float}
\usepackage[flushleft]{threeparttable}
\usepackage{array}
\usepackage{caption}
\usepackage{booktabs}
\usepackage{amsthm}
\usepackage{xparse}
\usepackage{thmtools}
\usepackage[normalem]{ulem}

\newcolumntype{C}[1]{>{\centering\arraybackslash}p{#1}}  %
\newcolumntype{M}[1]{>{\centering\arraybackslash}m{#1}}  %
\newcolumntype{L}[1]{>{\raggedright\arraybackslash}m{#1}}

\usepackage[colorlinks=true,urlcolor=black]{hyperref}

\ifCLASSOPTIONcompsoc
  \usepackage[nocompress]{cite}
\else
  \usepackage{cite}
\fi

\newcounter{parcounter}
\newenvironment{labeledparagraph}[1]{%
	\refstepcounter{parcounter}%
	\noindent\label{#1}%
	\begin{center}
		\begin{tabular}{| m{0.85\columnwidth}  m{0.01\columnwidth}}
			\begin{itshape}%
			}
			{%
			\end{itshape} & (\theparcounter) \\ 
		\end{tabular}%
	\end{center}%
}

\newcommand{\nextrule}{\hspace{1em}}

\newcommand{\negativespace}{\hspace{-0.1em}}
\newcommand{\cons}{ ,  ... , }
\newcommand{\concat}{{+}\mspace{-8mu}{+} }

\newcommand{\hash}{\rotatebox{45}{\#}}

\newcommand{\hexnumber}[1]{\ensuremath{0\texttt{x}{#1}}}

\newcommand{\xv}{{\mathit{st}}}

\newcommand{\val}{\mathit{v}}
\newcommand{\lval}{\mathit{lval}}

\newcommand{\bitv}{\overline{b}}

\newcommand{\listindexed}[2]{ #1_\mathrm{1} \cons #1_\mathrm{#2} }

\newcommand{\listindexedcol}[3]{ #1_{\mathrm{1}} \negativespace : \negativespace #2_{\mathrm{1}} ; ... ;\negativespace  #1_{\mathrm{#3}} \negativespace : \negativespace #2_{\mathrm{#3}} }

\newcommand{\tscope}{\gamma}

\newcommand{\stmt}{\mathit{s}}

\newcommand{\stmtif}[3]{\texttt{if} \ \texttt{#1} \ \texttt{then} \ \texttt{#2} \ \texttt{else} \ \texttt{#3}}

\newcommand\restrf[2]{{\left.\kern-\nulldelimiterspace #1 \vphantom{\big|} \right|_{#2}}}

\newcommand{\type}{\tau}

\newcommand{\band}{\ \land \ }

\newcommand{\typestrnew}[3]{\langle\listindexedcol{#1}{#2}{#3}\rangle}
\newcommand{\structval}[3]{\{ #1_1 = #2_1 \cons #1_{#3} = #2_{#3} \}}
\newcommand{\selexp}[3]{\texttt{select} \, e \, \{ #1_1 : #2_1 \cons #1_{#3} : #2_{#3} \} \, #2}

\newcommand{\lbl}{\ell}

\newcommand{\tbvn}[3]{\langle{#1}\rangle^{#2}_{#3}}
\newcommand{\sumindex}[2]{\underset{#1}{\textstyle \overset{#2}\sum}  \ }

\newcommand{\joinOnHigh}[2]{\mathrm{joinOnHigh}(#1, #2)}
\newcommand{\refine}[2]{\mathrm{refine}(#1,#2)}
\newcommand{\getlabel}{\mathrm{lbl}}
\newcommand{\labelof}[2]{\mathrm{lblOf}(#1,#2)}
\newcommand{\raisel}{\mathrm{raise}}

\newcommand{\join}{\mathrm{join}}
\newcommand{\isOut}{\mathrm{isOut}}
\newcommand{\tCall}{\mathrm{tCall}}

\newcommand{\emptyframe}{\bullet}
\newcommand{\lesstype}{\leq}

\newcommand{\consistent}[2]{\hspace{-0.2em} \underset{#1}{\overset{#2}{\sim}} \hspace{-0.2em}}

\newcommand{\tc}{T}
\newcommand{\tcv}{(C,F)}

\newcommand{\mem}{m}
\newcommand{\lequiv}[1]{\hspace{-0.2em} \underset{#1}{\sim} \hspace{-0.2em}}

\newcommand{\stmtctx}{E}
\newcommand{\stmtctxv}{(X,F)}
\newcommand{\stmtctxvn}[1]{(X_{#1},F)}

\newcommand{\expsem}[2]{#1 ( #2 ) }
\newcommand{\sem}[3]{#1 \xrightarrow{#2} #3}

\newcommand{\tightpar}[1]{{\smallskip\noindent\bf #1. }}
\newcommand{\parheading}[1]{{\smallskip\noindent~\bf{#1}}}

\newcommand{\low}[0]{\textcolor{blue}{\text{\emph{L}}}}
\newcommand{\high}[0]{\textcolor{red}{\text{\emph{H}}}}

\newcommand{\toolname}[0]{\textsc{Tap4s}}
\newcommand{\usecases}[0]{$5$}
\newcommand{\tests}[0]{$25$}

\newcommand{\casesItem}[1]{\noindent$\diamond$~\textbf{#1}}

\theoremstyle{plain}

\newtheorem{theorem}{Theorem}
\newtheorem{lemma}{Lemma}
\newtheorem{hypothesis}{Hyp}

\theoremstyle{definition}
\newtheorem{example}{Example}
\newtheorem{definition}{Definition}

\definecolor{P4commentgreen}{HTML}{006400} %
\definecolor{P4stringred}{HTML}{800000} %
\definecolor{P4typeteal}{HTML}{008080} %
\definecolor{P4keyblue}{HTML}{0000ff} %
\definecolor{P4constantpurple}{HTML}{800080} %
\definecolor{P4TableKeywords}{HTML}{0084c6}
\definecolor{P4PacketKeywords}{HTML}{ff557f}
\definecolor{P4Externs}{HTML}{E9692C}
\lstdefinelanguage{P4}{
	basicstyle=\fontsize{7.2}{8.3}\ttfamily\fontseries{l}\selectfont,
	fontadjust=true,
	keepspaces=true,
	tabsize=2,
	breaklines=false,
	breakautoindent=false,
	belowskip = 0pt,
	xleftmargin=13pt, %
	aboveskip = 0pt,
	lineskip=1.3pt,
	captionpos=b,
	numbersep=5pt,
	numbers=left, 
	morekeywords={
		abstract, action, apply, control,
		default, else, extern, exit, false,
		if, package, parser, return, select,
		state, switch, table, this, transition,
		true, type, typedef, value\_set, verify},
	keywordstyle=\bfseries\color{P4keyblue},
	morecomment=[l]{\/\/},
	morecomment=[s]{\/\*\*}{\*\/},
	morecomment=[s]{\/\*}{\*\/},
	commentstyle=\color{P4commentgreen},
	stringstyle=\color{P4stringred},
	morestring=[b]",
	classoffset=1,
	morekeywords={bit, bool, const, enum, error,
		header, header\_union, in, inout, int,
		list, match\_kind, out, string, struct, tuple,
		varbit, void},
	keywordstyle=\bfseries\color{P4typeteal},
	classoffset=2,
	morekeywords={actions, key, default\_action},
	keywordstyle=,
	classoffset=3,
	morekeywords={extract, emit, packet\_in, packet\_out},
	keywordstyle=,
	classoffset=4,
	morekeywords={mark_to_drop},
	keywordstyle=,
	classoffset=0,
	classoffset=2,
	moredelim=[is][\color{P4constantpurple}]{??}{??},
	classoffset=0,
}[keywords,comments]

\definecolor{keyword}{HTML}{0084c6}
\definecolor{darkblue}{RGB}{0,0,200}
\definecolor{darkred}{RGB}{200,0,0}
\lstdefinelanguage{policy}{
	basicstyle=\fontsize{8}{8}\selectfont\ttfamily,
	fontadjust=true,
	keepspaces=true,
	tabsize=4,
	breaklines=false,
	breakautoindent=false,
	belowskip = 5pt,
	lineskip=1.3pt,
	captionpos=b,
	numbers=none,
	literate={->}{{$\rightarrow$}}1,
	morekeywords={packet\_in, packet\_out, standard\_metadata},
	keywordstyle=\bfseries\color{keyword},
	classoffset=1,
	morekeywords={low},
	keywordstyle=\color{darkblue},
	classoffset=2,
	morekeywords={high},
	keywordstyle=\color{darkred},
	classoffset=0,
}[keywords,comments]

\definecolor{actionKeyword}{HTML}{ff557f}
\lstdefinelanguage{contract}{           
	basicstyle=\fontsize{8}{8}\selectfont\ttfamily,
	fontadjust=true,
	keepspaces=true,
	tabsize=2,
	breaklines=false,
	breakautoindent=false,
	belowskip = 5pt,
	lineskip=1.3pt,
	captionpos=b,
	numbers=none,
	literate={->}{{$\rightarrow$}}1,
	morekeywords={table, extern},
	keywordstyle=\bfseries\color{keyword},
	classoffset=1,
	morekeywords={low},
	keywordstyle=\color{darkblue},
	classoffset=2,
	morekeywords={high},
	keywordstyle=\color{darkred},
	classoffset=3,
	keywordstyle=\color{actionKeyword},
	escapeinside={(*@}{@*)}, %
}[keywords,comments]

\newcommand{\actiontt}[1]{\textcolor{darkblue}{\texttt{#1}}}

\ExplSyntaxOn

\NewDocumentCommand{\bitvector}{m}
{
	\tl_set:Nn \l_tmpa_tl { #1 }
	\scriptsize
	\setlength{\extrarowheight}{2pt}
	\setlength{\arraycolsep}{2pt}
	\begin{array}{ | *{ \tl_count:N \l_tmpa_tl } { c | } }
		\hline
		\seq_set_split:Nnn \l_tmpa_seq { } { #1 }
		\seq_use:Nn \l_tmpa_seq { & } \\
		\hline
	\end{array}
}

\ExplSyntaxOff

\begin{document}
\sloppy

\title{Securing P4 Programs by Information Flow Control}

\makeatletter
\newcommand{\linebreakand}{%
  \end{@IEEEauthorhalign}
  \hfill\mbox{}\par
  \mbox{}\hfill\begin{@IEEEauthorhalign}
}
\makeatother

\author{\IEEEauthorblockN{Anoud Alshnakat\IEEEauthorrefmark{1}, Amir M. Ahmadian\IEEEauthorrefmark{1}, Musard Balliu\IEEEauthorrefmark{1}, Roberto Guanciale\IEEEauthorrefmark{1} and Mads Dam\IEEEauthorrefmark{1}}
\IEEEauthorrefmark{1}\textit{KTH Royal Institute of Technology}
}

\maketitle

\begin{abstract}
 	Software-Defined Networking (SDN) has transformed network architectures by decoupling the control and data-planes, enabling fine-grained control over packet processing and forwarding. P4, a language designed for programming data-plane devices, allows developers to define custom packet processing behaviors directly on programmable network devices. This provides greater control over packet forwarding, inspection, and modification. However, the increased flexibility provided by P4 also brings significant security challenges, particularly in managing sensitive data and preventing information leakage within the data-plane.
 	
 	This paper presents a novel security type system for analyzing information flow in P4 programs that combines security types with interval analysis. 
 	The proposed type system allows the specification of security policies in terms of input and output packet bit fields rather than program variables.
 	We formalize this type system and prove it sound, guaranteeing that well-typed programs satisfy noninterference. 
 	Our prototype implementation, {\toolname}, is evaluated on several use cases, demonstrating its effectiveness in detecting security violations and information leakages.
\end{abstract}

\section{Introduction}\label{sec:intro}

Software-Defined Networking (SDN) \cite{goransson2016software} is a software-driven approach to networking that enables programmatic control of network configuration and packet processing rules. SDN achieves this by decoupling the routing process, performed in the control-plane, 
from the forwarding process performed in the data-plane. 
The control-plane is often implemented by a logically-centralized SDN controller that 
is responsible for network configuration and the setting of forwarding rules. 
The data-plane consists of network devices, such as programmable switches, that process and forward packets based on instructions received from the control-plane. 
Before SDN, hardware providers had complete control over the supported functionalities of the devices, leading to lengthy development cycles and delays in deploying new features. 
SDN has shifted this paradigm, allowing application developers and network engineers to implement specific network behaviors, such as deep packet inspection, load balancing, and VPNs, and execute them directly on networking devices.

Network Functions Virtualization (NFV) further expands upon this concept, enabling the deployment of multiple virtual data-planes over a single physical infrastructure \cite{matias2015toward}. 
SDN and NFV together offer increased agility and optimization, making them cornerstones of future network architectures. 
Complementing this evolution, the Programming Protocol-independent Packet Processors (P4) \cite{bosshart2014p4} domain-specific language has emerged as a leading standard for programming the data-plane's programmable devices, such as FPGAs and switches.
Additionally, P4 serves as a specification language to define the behavior of the switches as it provides a suitable level of abstraction, yet is detailed enough to accurately capture the behavior of the switch. It maintains a level of simplicity and formalism that allows for effective automated analysis~\cite{albab2022switchv}.

NFVs and SDNs introduce new security challenges that extend beyond the famous and costly outages caused by network misconfigurations\cite{chica2020security}.
Many data-plane applications process sensitive data, such as cryptographic keys and internal network topologies. %
The complexity of these applications, the separation of ownership of platform and data-plane in virtualized environments, and the
integration of third-party code facilitate undetected information leakages.
Misconfiguration may deliver unencrypted packets to a public network, bugs may leak sensitive packet metadata or routing configurations that expose internal network topology, and malicious code may build covert channels to exfiltrate data via
legitimate packet fields such as TCP sequence numbers and TTL fields \cite{zander2006covert}.

In this domain, the core challenge lies in the data dependency of what is observable, what is secret, and the packet forwarding behavior.
An attacker may be able to access only packets belonging to a specific subnetwork, only packets for a specific network protocol may be secret, and switches may drop packets based on the matching of their fields with routing configurations.
These data dependencies make information leakage a complex problem to address in SDN-driven networks.

Existing work in the area of SDN has focused on security of routing configurations by analyzing network flows that are characterized by port numbers and endpoints. %
However, these works ignore indirect flows that may leak information via other packet fields.
In the programming languages area, current approaches (including P4BID \cite{grewal2022p4bid}) substantially
ignore data dependencies and lead to overapproximations unsuitable for SDN applications. For example, the sensitivity of a field in a packet might depend on the packet's destination.

We develop a new approach to analyze information flow in P4 programs.
A key idea is to augment a security type system
(which is a language-based approach to check how information can flow in a program)
with interval analysis, which in the domain of SDNs can be used to
abstract over the network's parameters such as subnetwork segments, port ranges, and non-expired TTLs. Therefore, in our approach, in addition to a security label, the security type also keeps track of an interval.

The analysis begins with an input policy, expressed as an assignment of types to fields of the input packet. For instance, a packet might be considered sensitive only if its source IP belongs to the internal network.
The analysis conservatively propagates labels and intervals throughout the P4 program in a manner reminiscent of dynamic information flow control~\cite{le2006automata} and symbolic execution, cf. \cite{balliu2012encover}. This process is not dependent on a prior assignment of security labels to internal program variables, thus eliminating the need for the network engineer to engage with P4 program internals. The proposed analysis produces multiple final output packet typings, corresponding to different execution paths. %
These types are statically compared with the output security policy, which allows to relate observability of the output to intervals of fields of the resulting packets and their metadata.

The integration of security types and intervals is challenging.
On one hand, the analysis should be path-sensitive and be driven by values in the packet fields to avoid rejecting secure programs due to overapproximation. On the other hand the analysis must be sound and not miss indirect information flows.
Another challenge is that the behaviors of P4 programs depend on tables and external functions, but these components are not defined in P4.
We address this by using user-defined contracts that overapproximate their behavior. %

\tightpar{Summary of contributions}
\begin{itemize}
	\item We propose a security type system which combines security labels and abstract domains to provide noninterference guarantees on P4 programs.
	\item Our approach allows defining data-dependent policies without the burden of annotating P4 programs.
	\item We implement the proposed type system in a prototype tool {\toolname}\cite{tap4sTool} and evaluate the tool on a test suite and {\usecases} use cases.
\end{itemize}

\section{P4 Language and Security Challenges}\label{sec:background}

This section provides a brief introduction to the P4 language and its key features, while motivating the need for novel security analysis that strikes a balance between expressiveness of security policies and automation of the verification process.

P4 manipulates and forwards packets via a pipeline %
consisting of three stages: parser, match-action, and deparser. The parser stage dissects incoming packets, converting the byte stream into
structured header formats. In the match-action stage, these headers are matched against rules to determine the appropriate actions, such as modifying, dropping, or forwarding the packet to specific ports. Finally, the deparser stage reconstructs the processed packet back into a byte stream, ready for transmission over the network.

We use Program~\ref{prg:basic_congestion} as a running example throughout the paper.
The program implements a switch that manages congestion in the network of Fig. \ref{fig:basic_congestion}.
In an IPv4 packet, the Explicit Congestion Notification (ECN) field provides the status of congestion experienced by switches while transmitting the packet along the path from source to destination.
ECN value $0$ indicates that at least one of the traversed switches does not support the ECN capability, values $1$ and $2$ indicate that all traversed switches support ECN and the packet can be marked if congestion occurs, and $3$ indicates that the packet has experienced congestion in at least one of the switches.

\begin{figure}[ht]
	\centering
	{%
		\input{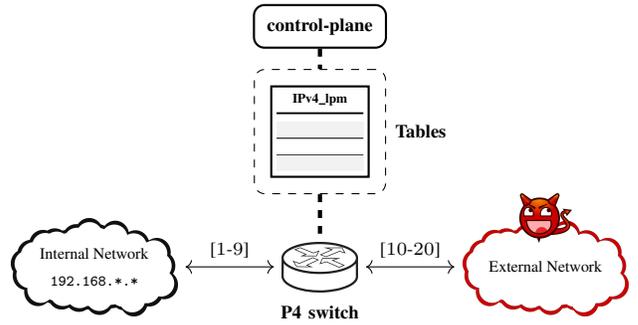}%
	}%
	\caption{\emph{Congestion notifier} network layout}
	\label{fig:basic_congestion}
\end{figure}

We assume for this example that the switch is the \emph{only} ingress and egress point for traffic entering and exiting the internal network, connecting it to external networks as shown in Fig. \ref{fig:basic_congestion}.
To illustrate our approach, we assume that the switch is designed to prevent any information leakage about internal network congestion to the external network.
In addition to standard packet forwarding, the switch sets ECN to $3$ if its queue length exceeds a predefined threshold. This holds only if the packet’s destination is within the internal network. Conversely, if the packet is destined to the external network, the switch sets the ECN field to 0. This indicates that ECN is not supported for outbound traffic and ensures that congestion signals experienced within the internal network are not exposed externally.\footnote{In scenarios with multiple ingress and egress points, where external traffic may fully traverse the internal network, the identification of outbound packets cannot rely solely on IP addresses. Instead, classification would need to be based on the forwarding port to allow the use of the ECN field while the packet is inside the internal network.}

\tightpar{P4 structs and headers}
Structs are records used to define the format of P4 packets.
Headers are special structs with an additional implicit boolean indicating the header's validity, which is set when the header is extracted.
Special function \texttt{isValid} (line 53) is used to check the validity of a header.

For example, the struct \texttt{headers} on line 3 has two headers of type \texttt{ethernet\_t} and \texttt{ipv4\_t},
as depicted in Fig. \ref{fig:packet_header}.
The fields of the \texttt{ethernet\_t} specify the source and destination MAC addresses and the Ethernet type.
The header \texttt{ipv4\_t} represents a standard IPv4 header with fields
such as ECN, time-to-live (TTL), and source and destination IP addresses.

\tightpar{Parser}
The parser dissects incoming raw packets (\texttt{packet} on line 12), extracts the raw bits, and groups them into headers.
The parser's execution begins with the \texttt{start} state and terminates either
in \texttt{reject} state or
\texttt{accept} state accepting the packet and moving to the next stage of the pipeline.

For example, \texttt{MyParser} consists of three states.
The parsing begins at the \texttt{start} state (line 17) and
transitions to \texttt{parse\_ethernet} extracting the Ethernet header from the input packet (line 22), which automatically sets the header's validity boolean to true.
Next, depending on the value of \texttt{hdr.eth.etherType}, which indicates the packet's protocol,
the parser transitions to either state \texttt{parse\_ipv4} or state \texttt{accept}.
If the value is \hexnumber{0800}, indicating an IPv4 packet, the parser transitions to state \texttt{parse\_ipv4} and extracts the IPv4 header (line 30).
Finally, it transitions to the state \texttt{accept} (line 31), accepts the packet, and moves to the match-action stage.

\begin{figure*}[ht]
	\centering
		\begin{tikzpicture}[node distance=0pt,outer sep=0pt]
		\node[draw, minimum height=20pt] (dstAddr) {\scriptsize\texttt{dstAddr}};
		\node[draw, minimum height=20pt, right=of dstAddr] (srcAddr) {\scriptsize\texttt{srcAddr}};
		\node[draw, minimum height=20pt, right=of srcAddr] (etherType) {\scriptsize\texttt{etherType}};
		\draw [decorate, decoration={brace,amplitude=10pt,mirror,raise=3pt},xshift=0pt,yshift=0pt]
		(dstAddr.south west) -- (etherType.south east) node [black, midway, yshift=-0.8cm] {\footnotesize{\texttt{ethernet\_t} fields}};

		\node[draw, minimum height=20pt, right=of etherType] (omitted) {\scriptsize $\qquad \hdots \qquad$};
		\node[draw, minimum height=20pt, right=of omitted] (ecn) {\scriptsize\texttt{ecn}};
		\node[draw, minimum height=20pt, right=of ecn] (omitted2) {\scriptsize $\qquad \hdots \qquad$};
		\node[draw, minimum height=20pt, right=of omitted2] (ttl) {\scriptsize\texttt{ttl}};
		\node[draw, minimum height=20pt, right=of ttl] (omitted3) {\scriptsize $\qquad \hdots \qquad$};
		\node[draw, minimum height=20pt, right=of omitted3] (srcAddrIP) {\scriptsize\texttt{srcAddr}};
		\node[draw, minimum height=20pt, right=of srcAddrIP] (dstAddrIP) {\scriptsize\texttt{dstAddr}};
		\draw [decorate, decoration={brace,amplitude=10pt,mirror,raise=3pt},xshift=0pt,yshift=0pt] (omitted.south west) -- (dstAddrIP.south east) node [black, midway, yshift=-0.8cm] {\footnotesize{\texttt{ipv4\_t} fields}};
	\end{tikzpicture}
	\caption{Packet header}
	\label{fig:packet_header}
	\hfil
	\newline
	\minipage[b]{0.5\textwidth}
	\lstinputlisting[
		language=P4,
		caption={},
		firstnumber=last]
	{./p4_examples/congestion_parser.p4}
	\endminipage\hfill
	\minipage[b]{0.5\textwidth}
	\lstinputlisting[
		language=P4,
		caption={},
		firstnumber=last]
	{./p4_examples/congestion_ctrl.p4}
	\endminipage\hfill
	\captionof{lstlisting}{Congestion notifier}
	\label{prg:basic_congestion}
	\vspace{-8pt}
\end{figure*}

\tightpar{Match-Action}
This stage processes packets as instructed by control-plane-configured tables.
A table consists of key-action rows and each row determines the action to be performed based on the key value.
Key-action rows are updated by the control-plane, externally to P4.
By applying a table, the P4 program matches the key value against table entries and executes the corresponding action.
An action is a programmable function performing operations on a packet, such as forwarding, modifying headers, or dropping the packet.

The match-action block \texttt{MyCtrl} of Program \ref{prg:basic_congestion} starts at line 34.
If the IPv4 header is not valid (line 53) the packet is dropped.
Otherwise, if the packet's destination (line 54-56) is the internal network, the program checks for congestion.
The standard metadata's \texttt{enq\_qdepth} field indicates the length of the queue that stores packets waiting to be processed.
A predefined \texttt{THRESHOLD} is used to determine the congestion status and store it in the \texttt{ecn} field (line 57 and 59).
Finally, the packet is forwarded by applying the \texttt{ipv4\_lpm} table (line 61).
This table, defined at line 46,
matches based on longest prefix (\emph{lpm}) of the IPv4 destination address (\texttt{hdr.ipv4.dstAddr}),
and has two actions (shown on line 48): \texttt{ipv4\_forward} which forwards the packet and \texttt{drop} which drops the packet. %
If no match exists, the default action on line 49 is invoked.

\tightpar{Calling conventions}
P4 is a heapless language, implementing a unique copy-in/copy-out calling convention that allows static allocation of resources.
P4 function parameters are optionally annotated with a direction (\texttt{in}, \texttt{inout} or \texttt{out}).
The direction indicates how arguments are handled during function invocation and termination,
offering fine-grained control over data visibility and potential side effects.

For example, \texttt{inout} indicates that the invoked function can both read from and write to a local copy of the caller's argument. %
Once the function terminates, the caller receives the updated value of that argument.
For instance, assume \texttt{hdr.ipv4.ttl} value is $10$ in line 43.
The invocation of \texttt{decrease} copies-in the value $10$ to parameter \texttt{x}, and the assignment on line 9 modifies \texttt{x} to value $9$.
Upon termination, the function copies-out the value $9$ back to the caller's parameter,
changing the value of \texttt{hdr.ipv4.ttl} to $9$ in line 44.

\tightpar{Externs}
Externs are functionalities that are implemented outside the P4 program and their behavior is defined by the underlying hardware or software platform.
Externs are typically used for operations that are either too complex or not directly expressible in P4's standard constructs. This includes operations like hashing, checksum computations, and cryptographic functions.
Externs can directly affect the global architectural state that is external to the P4 state,
but their effects to the P4 state are controlled by the copy-in/copy-out calling convention.

For example, the extern function \texttt{mark\_to\_drop} (line 38) signals to the forwarding pipeline that a packet should be discarded.
Generally, the packet is sent to the port identified by the standard metadata's \texttt{egress\_spec} field,
and dropping a packet is achieved by setting this field to the drop port of the switch.
The drop port's value depends on the target switch; we assume the value is $0$.

\subsection{Problem statement}\label{sec:problem_statement}

The power and flexibility of P4 to programmatically process and forward packets across different networks provides opportunities for security vulnerabilities such as information leakage and covert channels.
For instance, in Program \ref{prg:basic_congestion}, additionally to the ECN, the standard metadata's \texttt{enq\_qdepth} field, which indicates the length of the queue that stores packets waiting to be processed, indirectly reveals the congestion status of the current switch.

Programming errors and misconfigurations can cause information leakage. Consider the application of the \texttt{ipv4\_lpm} table (line 61) which forwards the packet to a table-specified port.
A bug in the branch condition on line 54, which checks the least significant bits of the \texttt{dstAddr} (e.g. by mistakenly checking \texttt{hdr.ipv4.dstAddr[7:0] == 192} instead),  would result in setting the \texttt{ecn} field on the packets leaving the internal network, thus causing  the packets forwarded to an external network to leak information about the internal network's congestion state.
Covert channels can also result from buggy or malicious programs. For example, by encoding the \texttt{ecn} field into the \texttt{ttl} field, an adversary can simply inspect \texttt{ttl} to deduce the congestion status.

To detect these vulnerabilities,
we set out to study the security of P4 programs by means of information flow control (IFC).
IFC tracks the flow of information within a program, preventing leakage from sensitive sources to public sinks.
Information flow security policies are typically expressed by assigning security labels to the sources and sinks and the flow relations between security labels describe the allowed (and disallowed) information flows.  In our setting, the sensitivity of sources (sinks) depends on predicates on the input (output) packets and standard metadata. Therefore, we specify the security labels of sources (i.e. input packet and switch state) by an \emph{input policy}, while the security labels of the sinks (i.e. output packet and switch state) are specified by an \emph{output policy}.

The input policy of Program \ref{prg:basic_congestion}, describing the security label of its sources is defined as:
\begin{labeledparagraph}{par:policy}
	If the switch's input packet has the protocol IPv4 (i.e. \textnormal{\texttt{hdr.eth.etherType}} is \textnormal{\hexnumber{0800}}) and its IPv4 source address \textnormal{\texttt{hdr.ipv4.srcAddr}} belongs to the internal network subnet \textnormal{\texttt{192.168.*.*}}, then the \textnormal{\texttt{ecn}} field is secret, otherwise it is public.
	All the other fields of the input packet are always public,
	while the switch's \textnormal{\texttt{enq\_qdepth}} is always secret.
\end{labeledparagraph}

Program \ref{prg:basic_congestion} should not leak sensitive information 
to external networks. An output policy defines public sinks by the ports associated with the external network and labels the fields of the corresponding packets as public. 

\begin{labeledparagraph}{par:policy_out}
	Packets leaving the switch through ports $10$ to $20$ are forwarded to the external network and are observable by attackers. 
	Therefore, all fields of such packets should be public. 
	All the other packets are not observable by attackers.
\end{labeledparagraph}

Our goal is to design a static security analysis that strikes a balance between expressiveness and automation of the verification process.
We identify three main challenges that a security analysis of P4 programs should address:
\begin{enumerate}
	\item Security policies are data-dependent.
	      For instance, the \texttt{ecn} field is sensitive only if the packet is IPv4 and its IP source address is in the range \texttt{192.168.*.*}.
	      
	\item The analysis should be value- and path-sensitive, reflecting the different values of header fields. 
		  For example, the value of the field \texttt{etherType} determines the packet's protocol and its shape.
		  This information influences the reachability of program paths; for instance if the packet is IPv4 the program will not go through the parser states dedicated to processing IPv6 packets.
	      
	\item Externs and tables behavior are not defined in P4.
Tables are statically-unknown components and configured at runtime.
	      For example, a misconfiguration of the \texttt{ipv4\_lpm} table may insecurely forward packets with sensitive fields to an external network.
\end{enumerate}

Note that P4 lacks many features that could negatively affect analysis precision, including heap, memory aliasing, recursion, and loops.

\tightpar{Threat model}
Our threat model considers a network attacker that knows the code of the P4 program and observes data on public sinks, as specified by a policy.
We also assume that the keys and the actions of the tables are public and observable,
but tables can pass secret data as the arguments of the actions. 
Because of the batch-job execution model,
security policies can be specified as data-dependent security types over the initial and final program states. 
We aim at protecting against storage channels pertaining to explicit and implicit flows,
while deferring other side channels, e.g. timing, to future work.

\section{Solution Overview}\label{sec:overview}

We develop a novel combination of security type systems and interval abstractions to check information flow policies.
We argue that our lightweight analysis of P4 programs provides a sweet spot balancing expressiveness, precision and automation. 

Data-dependent policies are expressed by security types augmented with intervals, 
and the typing rules ensure that the program has no information flows from secret (\high) sources to public (\low) sinks. 
Specifically, a security type is a pair $( I , \ell )$ of an interval $I$ indicating a range of possible values
and a security label $\ell \in \{{\low}, {\high}\}$.
For simplicity, we use the standard two-element security lattice
$\{{\low}, {\high}\}$ ordered by $\sqsubseteq$ with lub $\sqcup$.
For example, the type $(\langle 1,5 \rangle , \low )$ of the \texttt{ttl} field of the \texttt{ipv4} header specifies that the \texttt{ttl} field contains public data ranging between $1$ and $5$.

The security types allow to precisely express data-dependent policies such as (\ref{par:policy}). 
The input and output policies in our approach specify the shape of the input and output packets. 
Since packets can have many different shapes (e.g. IPv4 or IPv6), these policies may result in multiple distinct policy cases.
For example, input policy (\ref{par:policy}) results in two cases:

In the \emph{first input policy case}, the packet's \texttt{hdr.eth.etherType} is $\hexnumber{0800}$, its IPv4 source address is in the internal network of interval $\langle \texttt{192.168.0.0}, \texttt{192.168.255.255} \rangle$, \texttt{hdr.ipv4.ecn} and standard metadata's \texttt{enq\_qdepth} can contain any value (represented as $\langle*\rangle$) but are classified as {\high}, while all other header fields are $(\langle * \rangle , \low)$ (omitted here). We express this policy using our security types as follows:
\begin{flushleft}
	\footnotesize
	\vspace*{-6mm}
	\begin{align*}
		\texttt{hdr.eth.etherType} &: (\langle \hexnumber{0800}, \hexnumber{0800} \rangle , \low) \\
		\texttt{hdr.ipv4.srcAddr} &: (\langle \texttt{192.168.0.0}, \ \texttt{192.168.255.255} \rangle , \low) \\
		\texttt{hdr.ipv4.ecn} &: (\langle * \rangle , \high) \\
		\texttt{standard\_metadata.enq\_qdepth} &: (\langle * \rangle , \high)
	\end{align*}
\end{flushleft}

The intervals and labels in these security types describe the values and labels of the initial state of the program under this specific input policy case.

The \emph{second input policy case} describes all the packets where \texttt{hdr.eth.etherType} is not $\hexnumber{0800}$
or IPv4 source address is outside the range $\langle\texttt{192.168.0.0}, \ \texttt{192.168.255.255}\rangle$, 
all of the packet header fields are $(\langle *\rangle , \low)$, while the standard metadata's \texttt{enq\_qdepth} is still $(\langle * \rangle , \high)$.

Similarly, the output policy (\ref{par:policy_out}) can be expressed with the output policy case:
``if the standard metadata's \texttt{egress\_spec} is $(\langle 10,20 \rangle , \low)$, then all of the packet's header fields are $(\langle * \rangle , \low)$.''

It turns out that this specific output policy case is the only interesting one, even though output policy (\ref{par:policy_out}) can result in two distinct policy cases. 
In the alternative case, the fact that the attacker is unable to observe the output packet can be represented by assigning $(\langle * \rangle , \high)$ to all the fields of the packet.  
The flow relation among security labels, as determined by the ordering of the security labels, only characterizes flows from {\high} sources to {\low} sinks as insecure. 
This implies that any policy cases where the source is {\low} or the sink is {\high} cannot result in insecure flows.  
Thus, the alternative case is irrelevant and can be safely ignored.

Driven by the data-dependent types, we develop a new security type system that uses the intervals to provide a finer-grained assignment of security labels. 
Our interval analysis allows the type system to statically eliminate execution paths that are irrelevant to the security policy under consideration, thus addressing the second challenge of precise analysis.
For example, our interval analysis can distinguish between states where \texttt{hdr.eth.etherType} is \hexnumber{0800} 
and  states where it is not, essentially providing a path-sensitive analysis.
This enables the analysis to avoid paths where \texttt{hdr.eth.etherType} is \emph{not} \hexnumber{0800} when checking the policy of IPv4 packets. 
As a result, we exclude paths visited by non-IPv4 packets when applying the \texttt{ipv4\_lpm} table in line 61. 
This reduces the complexity of the analysis as we avoid exploring irrelevant program paths, and helps reduce false positives in the results.

Finally, to address the challenge of tables and externs, 
we rely on user-defined contracts which capture a bounded model of the component's behavior. 
Upon analyzing these components, the type system uses the contracts to drive the analysis.
For Program~\ref{prg:basic_congestion}, 
the contract for a correctly-configured table \texttt{ipv4\_lpm} ensures that if the packet's \texttt{hdr.ipv4.dstAddr} belongs to the internal network, then the action \texttt{ipv4\_forward} (line 40) forwards the packet to ports and MAC addresses connected to the internal network.

Even if the \texttt{ipv4\_lpm} table is correctly configured and its contract reflects that, bugs in the program can still cause unintended information leakage. For example, on line 54, the branch condition might have been incorrect and instead of checking the 8 most significant bits (i.e. [31:24]) of the \texttt{hdr.ipv4.dstAddr}, it checks the least significant bits (i.e. [7:0]). This bug causes the \texttt{hdr.ipv4.ecn} field in some packets destined for the external network to include congestion information, leading to unintended information leaks. 
Such errors are often overlooked but can be detected by our type system.

We ensure that the program does not leak sensitive information by checking  the final types produced by the type system  against an output policy. If these checks succeed, the program is deemed secure. 
The details of this process and the role of the interval information in the verification process are explained in Section~\ref{sec:type_system}.

\section{Semantics}\label{sec:semantics}
In this section, we briefly summarize a big-step semantics of P4.  
The language's program statements, denoted by $\stmt$, include standard constructs such as assignments, conditionals, and sequential composition. Additionally, P4 supports transition statements, function calls, table invocations, and extern invocations as shown in Fig.~\ref{fig:syntax_grammar}.

\begin{figure}[ht]
	\centering
	\vspace{-10pt}
	\fontsize{8}{13}\selectfont
	\begin{align*}
			v    & ::= \bitv
			\mid \structval{f}{v}{n}                          \\
			e    & ::= v
			\mid x
			\mid \ominus e
			\mid  e \oplus e'
			\mid e.f
			\mid e [\bitv:bitv']
			\mid \structval{f}{e}{n}                          \\
			lval & ::=  x \mid lval.f \mid lval[\bitv:\bitv'] \\
			s    & ::= \texttt{skip}
			\mid lval := e
			\mid s_1  ;  s_2
			\mid  \texttt{if} \; e \; \texttt{then} \; s_1 \; \texttt{else} \; s_2
			\mid \texttt{apply}  \; tbl \mid  \\ 
			     & \qquad f(e_1 \cons e_n)
			\mid \texttt{transition} \; {\selexp{\val}{\xv}{n}}
		\end{align*}
	\vspace{-10pt}
	\caption{Syntax}
	\label{fig:syntax_grammar}
\end{figure}

Values, represented by $\val$, are either big-endian bitvectors $\bitv$ (raw packets) or structs
$\structval{f}{\val}{n}$ (representing headers).

P4 states $\mem$ are mappings from variables $x$ to values $\val$. In this slightly simplified semantics, variables are either global or local. States can thus be represented as disjoint unions $(\mem_g,\mem_l)$, where $\mem_g$ ($\mem_l$) maps global (local) variables only. 

While externs in P4 can modify the architectural state, they cannot change the P4 state itself.
To simplify our model, we integrate the architectural state into P4's global state, treating it as a part of the global state.
Therefore, in our model the externs are allowed to modify the global state of P4.
To maintain isolation between the program's global variables and the architectural state, we assume that the variable names used to represent the global state are distinct from those used for the architectural state.

Expressions $e$ use a standard selection of operators including binary $\oplus$, unary $\ominus$, comparison $\otimes$, and struct field access, as well as bitvector slicing $e[b:a]$ extracting the slice from index $a$ to index $b$ of $e$, and $\mem(e)$ is the evaluation of $e$ in state $\mem$.
An lvalue $\lval$ is an assignable expression, either a variable, a field of a struct, or a bitvector slice. 
The semantics of expressions is standard and consists of operations over bitvectors and record access.

The semantics of statements uses a mapping $E$ from function names $f$ to pairs
$(\stmt,\overline{(x,d)})$, where $\overline{(x,d)}$ is the signature of $f$, a sequence of pairs $(x_i,d_i)$ of function parameters with their directions $d_i\in\{\texttt{in},\texttt{out},\texttt{inout}\}$.
Additionally, $E$ maps parser state names $st$ to their bodies. Furthermore,
since P4 programs may depend on external components, $E$ also maps externs $f$ and tables $t$ to their respective implementations.

\begin{figure}[ht!]
	{
	\centering
	\fontsize{9}{13}\selectfont
	{
	$
		\inferrule*[Lab=\textsc{S-Skip}]
		{
		}
		{
			{\stmtctx} : \sem{\mem}{\texttt{skip}}{\mem}
		}
	$
	\nextrule
	$
		\inferrule*[Lab=\textsc{S-Assign}]
		{
		m'=m[lval \mapsto \expsem{\mem}{e}]
		}
		{
		{\stmtctx} : \sem{\mem}{lval := e}{\mem'}
		}
	$
	\nextrule
	$
		\inferrule*[Lab=\textsc{S-Seq}]
		{
		{\stmtctx} : \sem{\mem}{\stmt_1}{\mem'} \\\\
		{\stmtctx}: \sem{\mem'}{\stmt_2}{\mem''}
		}
		{
		{\stmtctx} : \sem{\mem}{\stmt_1;\stmt_2}{\mem''}
		}
	$

	\nextrule

	$
		\inferrule*[Lab=\textsc{S-Cond-T}]
		{
		\expsem{\mem}{e}={\mathrm{true}} \\\\
		{\stmtctx} : \sem{\mem}{\stmt_1}{\mem'}
		}
		{
		{\stmtctx} : \sem{\mem}{\texttt{if } e \texttt{ then } \stmt_1 \texttt{ else } \stmt_2}{\mem'}
		}
	$
	\nextrule
	$
		\inferrule*[Lab=\textsc{S-Cond-F}]
		{
		\expsem{\mem}{e}={\mathrm{false}} \\\\
		{\stmtctx} : \sem{\mem}{\stmt_2}{\mem'}
		}
		{
		{\stmtctx} : \sem{\mem}{\texttt{if } e \texttt{ then } \stmt_1 \texttt{ else } \stmt_2}{\mem'}
		}
	$

	\nextrule

	$\inferrule*[Lab=S-Call]
		{
		(\stmt,\overline{(x,d)}) = E(f) \\
		\mem_l' =\{x_{i} \mapsto \expsem{(\mem_{g}, \mem_{l})}{e_{i}}\}\\\\
		E : \sem{(\mem_{g}, \mem_{l}')}{s}{(\mem'_{g}, \mem''_{l})}}
		{
		E : \sem{(\mem_{g}, \mem_{l})}{f(e_{1} \cons e_{n})}{(\mem'_{g}, \mem_{l})[e_{i} \mapsto \mem''_{l}(x_{i}) \mid \isOut(d_{i})]}}
	$

	\nextrule

	$
		\inferrule*[Lab=\textsc{S-Extern}]
		{
		(sem_{f},\overline{(x,d)}) = \stmtctx(f) \\
		\mem'_{l} =\{x_{i} \mapsto \expsem{(\mem_{g}, \mem_{l})}{e_{i}}\}\\\\
		(\mem'_{g}, \mem''_{l}) = sem_{f}(\mem_{g}, \mem'_{l})
		}
		{
		\stmtctx : \sem{\mem}{f(e_{1} \cons e_{n})}{(\mem'_{g}, \mem_{l})[e_{i} \mapsto \mem''_{l}(x_{i}) \mid \isOut(d_{i})]}
		}
	$}

	\nextrule

	$\inferrule*[Lab=S-Trans]
		{
		\xv' =  {\begin{cases}
			\xv_{i} & \text{if } \expsem{\mem}{e} = \val_{i} \\
			\xv     & \text{otherwise}
		\end{cases}} \\
		E : \sem{\mem}{E(\xv')}{\mem'}
		}
		{
		E : \sem{\mem}{\texttt{transition } {\selexp{\val}{\xv}{n}}}{\mem'}
		}
	$

	\nextrule

	$\inferrule*[Lab=S-Table]
		{
			(\overline{e}, sem_{tbl}) = E(tbl)\\
			sem_{tbl}(\expsem{(\mem_{g}, \mem_{l})}{e_{1}} \cons \expsem{(\mem_{g}, \mem_{l})}{e_{n}}) = (a, \overline{\val})\\
			(\stmt,(x_{1},\mathrm{none}) \cons (x_{n}, \mathrm{none})) = E(a) \\
			\mem_l' =\{x_{i} \mapsto {\val_i}\}\\
			E:\sem{(\mem_{g}, \mem_{l}')}{\stmt}{(\mem_{g}', \mem_{l}'')}
		}
		{
			E:\sem{(\mem_{g}, \mem_{l})}{\texttt{apply} \ tbl }{(\mem_{g}', \mem_{l})}
		}
	$

}
	\caption{Semantic rules}
	\label{fig:semantics}
\end{figure}

The semantic rules presented in Fig.~\ref{fig:semantics} rely on judgments of the form $E : \sem{\mem_{1}}{s}{\mem_{2}}$ to represent the execution of statement $s$ under mapping $E$ which starts from state $\mem_{1}$ and terminates in $\mem_{2}$. %

Many of the rules in Fig.~\ref{fig:semantics} are standard and are therefore not explained here.
Rule \textsc{S-Call} fetches the invoked function's body $\stmt$ and signature, and copies in the arguments into $\mem'_l$, which serves as the local state for the called function and is used to execute the function's body.
Note that the function's body can modify the global state, but cannot change the caller's local state due to P4's calling conventions.
After executing the function's body, the variables in final local state $\mem_l''$ must be copied-out according to the directions specified in the function's signature. Given a direction $d_i$, the auxiliary function $\mathrm{isOut}$ returns true if the direction is \texttt{out} or \texttt{inout}. We rely on this function to copy-out the values from $\mem_l''$ back to the callee only for parameters with \texttt{out} and \texttt{inout} direction.

For example, in Program~\ref{prg:basic_congestion} let $\mem_l = \{\texttt{hdr.ipv4.ttl} \mapsto 2 \}$ when invoking \texttt{decrease} at line 44.
The local state of \texttt{decrease} (i.e. the copied-in state) becomes $\mem_l' = \{\texttt{x} \mapsto 2 \}$.
After executing the function's body (line 8), the final local state will be $\mem_l'' = \{\texttt{x} \mapsto 1 \}$ while the global state $\mem_g$ remains unchanged.
Finally, the copying out operation updates the caller's state to $\mem''=(\mem_g, \{\texttt{ttl} \mapsto 1 \})$ by updating its local state.

The \textsc{S-Extern} rule is similar to \textsc{S-Call}.
The key difference is that instead of keeping a body in $E$, we keep the extern's behavior defined through $sem_f$.
This function takes a state containing the global $\mem_g$ and copied-in state $\mem_l'$ and returns (possibly) modified global and local states, represented as $sem_f(\mem_g,\mem_l')=(\mem_g',\mem_l'')$.
Finally, the extern rule preforms a copy-out procedure similar to the function call.

The \textsc{S-Trans} rule defines how the program transitions between parser states based on the evaluation of expression $e$. It includes a default state name $\xv$ for unmatched cases.
If in program state $\mem$, expression $e$ evaluates to value $\val_{i}$, the program transitions to state name $\xv_{i}$ according to the defined value-state pattern.
However, if the evaluation result does not match any of the $\val_{i}$ values, the program instead transitions to the default state $\xv$.
For example, assume that $\mem(\texttt{hdr.eth.etherType})= \hexnumber{0800}$ on line 23 of Program~\ref{prg:basic_congestion}.
The select expression within the transition statement will transition to the state \texttt{parse\_ipv4}, and executes its body.

Rule \textsc{S-Table} fetches from $E$ the table's implementation $sem_{tbl}$ and a list of expressions $\overline{e}$ representing table's keys. 
It then proceeds to evaluate each of these expressions in the current state $(\mem_{g}, \mem_{l})$, passing the evaluated values as key values to $sem_{tbl}$. The table's implementation $sem_{tbl}$ then returns an action $a$ and its arguments $\overline{\val}$.
We rely on $E$ again to fetch the body and signature of action $a$, however, since in P4 action parameters are directionless we use \texttt{none} in the signature to indicate there is no direction. Finally, similar to \textsc{S-Call} we copy-in the arguments into $\mem_l'$, which serves as the local state for the invoked action and is used to execute the action's body.
For example, let $ \mem(\texttt{hdr.ipv4.dstAddr}) = 192.168.2.2$ at line 61, and the semantics of table \texttt{ipv4\_lpm} contains:
\begin{align*}
	192.168.2.2 & \mapsto \texttt{ipv4\_forward (4A:5B:6C:7D:8E:9F, 5)}
\end{align*}
then the table invokes action \texttt{ipv4\_forward} with arguments \texttt{4A:5B:6C:7D:8E:9F} and \texttt{5}.

\section{Types and Security Condition}
In our approach types are used to represent and track both bitvector abstractions (i.e. intervals) and security labels, and we use the same types to represent input and output policies.

In P4, bitvector values represent packet fragments, where parsing a bitvector involves slicing it into sub-bitvectors (i.e. slices), each with different semantics such as payload data or header fields like IP addresses and ports.
These header fields are typically evaluated against various subnetwork segments or port ranges.
Since header fields or their slices are still bitvectors, they can be conveniently represented as integers, enabling us to express the range of their possible values as $I = \langle a,b \rangle$, the interval of integers between $a$ and $b$.

We say a bitvector $\val$ is typed by type $\type$, denoted as $\val : \type$, if
$\type$ induces a slicing of $v$ that associates each slice with a suitable interval $I$ and security label $\ell \in \{{\low}, {\high}\}$.
We use the shorthand ${I}^{\lbl}_{i}$ to represent a slice of length $i$, with interval $I \subseteq \langle 0, 2^{i}-1 \rangle$ and security label $\lbl$.
The bitvector type can therefore be presented as $\type={{I_n}^{\lbl_n}_{i_n}} {\cdots} {{I_1}^{\lbl_1}_{i_1}}$, representing a bitvector of length $\Sigma_{j=1}^n i_j$ with $n$ slices, where each slice $i$ has interval $I_i$ and security label $\lbl_i$.
Singleton intervals are abbreviated $\langle a \rangle$, $\langle \rangle$ is the empty interval, and $\langle * \rangle$ is the complete interval, that is, the range $\langle 0, 2^{i}-1 \rangle$ for a slice of length $i$.
Function $\getlabel(\type)$ indicates the least upper bound of the labels of slices in $\type$.

To illustrate this, let $\type_1$ be $\tbvn{*}{\high}{2} {\cdot} \tbvn{0,1}{\low}{3}$ which types a bitvector of length $5$ consisting of two slices.
The first slice has a length of $3$, with values drawn from the interval $\langle 0,1\rangle$ and security label {\low}.
The second slice, with a length of 2, has a security label {\high}, and its values drawn from the complete interval $\langle 0,3 \rangle$ (indicated by $*$).
Accordingly, $\getlabel(\type_1)$ evaluates to $\high \sqcup \low = \high$.

Type $\type$ is also used to denote a record type, where record ${\structval{f}{\val}{n}}$ is typed as ${\typestrnew{f}{\type}{n}}$ if each value $\val_i$ is typed with type $\type_i$. %

In this setting, the types are not unique, as it is evident from the fact that a bitvector can be sliced in many ways and a single value can be represented by various intervals. For example, bitvector
$\bitvector{100}$
can be typed as $\tbvn{4}{\low}{3}$,
or $\tbvn{*}{\low}{1} {\cdot} \tbvn{0,1}{\low}{2}$,
or $\tbvn{2}{\low}{2} {\cdot} \tbvn{*}{\low}{1}$. %

\tightpar{State types}
A type environment, or \emph{state type}, $\tscope = (\tscope_g,\tscope_l)$ is a pair of partial functions from variable names $x$
to types $\type$. %
Here, $\tscope_{g}$ and $\tscope_{l}$ represent global and local state types, respectively, analogous to the global ($\mem_{g}$) and local ($\mem_{l}$) states in the semantics.
We say that $\tscope$ can type state $\mem$, written as ${\tscope \vdash \mem}$, if for every $lval$ in the domain of $\mem$, the value $\mem(lval)$ belongs to the interval specified by $\tscope(lval)$; formally, $\forall \ lval \in \text{domain}(\mem), \ {\expsem{\mem}{lval} : \tscope(lval)}$. Note that the typing judgment ${\tscope \vdash \mem}$ is based on the interval inclusion and it is independent of any security labels. For example, if $\mem(x) = 257$ then $257$ is considered well-typed wrt. $\tscope$ if and only if $\tscope(x)$ is an interval that contains $257$ (e.g., $\tbvn{0,257}{}{}$, $\tbvn{257,257}{}{}$, or $\tbvn{100,300}{}{}$).

A state type might include a type with an empty interval; we call this state type \emph{empty} and denote it as $\bullet$.

Let $\labelof{\lval}{\tscope}$ be the least upper bound of the security labels of all the slices of ${\lval}$ in state type $\tscope$.
The states $m_1$ and $m_2$ are considered \emph{low equivalent with respect to} $\tscope$,
denoted as $\mem_1 \lequiv{\tscope} \mem_2$,
  if for all $\lval$ such that
  $\labelof{\lval}{\tscope} = \low$, then
  $\expsem{\mem_1}{\lval} = \expsem{\mem_2}{\lval}$ holds.

\begin{example}
  Assume a state type ${\tscope = \{x \mapsto { {\tbvn{*}{\high}{1}} {\cdot} {\tbvn{0,1}{\low}{2}}  } \}}$
  The following states $\mem_1 = \{x \mapsto \bitvector{000} \ \}$ and $\mem_2 = \{x \mapsto \bitvector{100} \ \}$ are low equivalent wrt. $\tscope$.
  However, states $\mem_1 = \{x \mapsto \bitvector{000} \ \}$ and $\mem_3 = \{x \mapsto \bitvector{101} \ \}$ are not low equivalent even though both can be typed by $\tscope$.
\end{example}

\tightpar{Contracts}
A table consists of key-action rows, and in our threat model, we assume the keys and actions of the tables are always public (i.e. {\low}), but the arguments of the actions \emph{can} be secret (i.e. {\high}). %
Given that tables are populated by the control-plane, the behavior of a table is unknown at the time of typing. We rely on user-specified contracts to capture a bounded model of the behavior of the tables.
In our model, a table's contract has the form $(\overline{e},\text{Cont}_{tbl})$, where $\overline{e}$ is a list of expressions indicating the keys of the table, and $\text{Cont}_{tbl}$ is a set of tuples $(\phi, (a, \overline{\tau}))$, where $\phi$ is a boolean expression defined on $\overline{e}$, and $a$ denotes an action to be invoked with argument types $\overline{\tau}$ when $\phi$ is satisfied.

For instance, the \texttt{ipv4\_lpm} table of Program \ref{fig:basic_congestion} uses \texttt{hdr.ipv4.dstAddr} as its key, and can invoke two possible actions: \texttt{drop} and \texttt{ipv4\_forward}. An example of a contract for this table is depicted in Fig. \ref{fig:table_contract_example}. This contract models a table that forwards the packets with $\texttt{hdr.ipv4.dstAddr} = \texttt{192.*.*.*}$ to ports $1$-$9$, the ones with $\texttt{hdr.ipv4.dstAddr} = \texttt{10.*.*.*}$ to ports $10$-$20$, and {\texttt{drop}}s all the other packets. Notice that in the first case, the first
argument resulting from the table look up is secret.
\begin{figure}[ht]
	\vspace{-\baselineskip}
	\footnotesize
	\centering
	\begin{align*}
		\big(\ 
		&[\texttt{hdr.ipv4.dstAddr}], \\
			\big\{
				&\big(\texttt{dstAddr}[31:24] = 192 , (\actiontt{ipv4\_forward} , [\tbvn{*}{{\high}}{48},\tbvn{1,9}{{\low}}{9}]) \big) \\
				&\big(\texttt{dstAddr}[31:24] = 10 , (\actiontt{ipv4\_forward} , [\tbvn{*}{{\low}}{48},\tbvn{10,20}{{\low}}{9}]) \big) \\
				&\big(\texttt{dstAddr}[31:24] \not= 192 \wedge \texttt{dstAddr}[31:24] \not= 10 , (\actiontt{drop}, \ []) \big)
			\big\}
		\big)
	\end{align*}
	\caption{The contract of \texttt{ipv4\_lpm} table}
	\label{fig:table_contract_example}
\end{figure}

The table contracts are essentially the security policies of the tables, where $\phi$ determines a subset of table rows that invoke the same action ($a$) with the same argument types ($\overline{\type}$). %
Using the labels in $\overline{\type}$, and given action arguments $\overline{\val}_1$ and $\overline{\val}_2$, we define $\overline{\val}_1 \lequiv{\overline{\type}} \overline{\val}_2$ as 
$|\overline{\val}_1| = |\overline{\val}_2| = |\overline{\type}|$ and for all $i$, 
$\val_{1_i} : \type_i $ and $\val_{2_i} : \type_i $, and
if $\getlabel(\type_i) = \low$ then $\val_{1_i} = \val_{2_i}$.
Note that $\getlabel(\type_i)$ returns the least upper bound of the labels of all $\type_i$'s slices, hence if there is even one {\high} slice in $\type_i$, $\getlabel(\type_i)$ would be {\high}.
We use mapping $T$ to associate table names $tbl$ with their contracts. 

We say that two mappings $E_1$ and $E_2$ are considered \emph{indistinguishable} wrt. $T$, denoted as ${E_1 \lequiv{T} E_2}$, if for all tables $tbl$ such that $({\overline{e}_1, sem_{1_{tbl}}) = E_1(tbl)}$, $({\overline{e}_2, sem_{2_{tbl}}) = E_2(tbl)}$, $(\overline{e},\text{Cont}_{tbl}) = T(tbl)$ then $\overline{e}_1 = \overline{e}_2 = \overline{e}$, and for all $(\phi, (a, \overline{\type})) \in \text{Cont}_{tbl} $, and for all arbitrary states $\mem_1$ and $\mem_2$,
such that $\mem_1(\overline{e}) = \mem_2(\overline{e}) = \overline v$ and $\mem_1(\phi) \Leftrightarrow \mem_2(\phi)$, 
if $\mem_1(\phi)$ then
$\overline v_{1}, \overline v_{2}$ exist such that
${sem_{1_{tbl}}(\overline{v}) = (a, \overline{\val}_1)}$, ${sem_{2_{tbl}}(\overline{v}) = (a, \overline{\val}_2)}$, and $\overline{\val}_1 \lequiv{\overline{\type}} \overline{\val}_2$.
In other words, $T$-indistinguishability of $E_1$ and $E_2$ guarantees that given equal key values, $E_1$ and $E_2$ return the same actions with $\overline{\type}$-indistinguishable arguments $\overline{v_1}$ and $\overline{v_2}$ such that these arguments are in bound wrt. their type $\overline{\type}$.

\tightpar{Security condition}
As explained in Section \ref{sec:overview} the input and output policy cases are expressed by assigning types to program variables.
State types, specifying security types of program variables, are used to formally express input and output policy cases. Hereafter, we use $\tscope_{i}$ and $\tscope_{o}$ to denote input and output policy cases, respectively.
Using this notation, the input policy, denoted by $\Gamma_{i}$, is represented as a set of input policy cases $\tscope_{i}$.
Similarly, the output policy is expressed as a set of output policy cases $\tscope_{o}$ and denoted by $\Gamma_{o}$.

Given this intuition, we say two states $\mem_{1}$ and $\mem_{2}$ are \emph{indistinguishable} wrt. a policy case $\tscope$ if $\tscope\vdash m_1$, $\tscope\vdash m_2$, and $\mem_1 \lequiv{\tscope} \mem_2$. 
Relying on this, we present our definition of noninterference as follows:

  \begin{definition}[Noninterference]\label{def:sec_cond}
    A program $s$ is \emph{noninterfering} wrt. the input and output policy cases $\tscope_{i}$ and $\tscope_{o}$,  and table contract mapping $T$,
     if for all mappings $E_1$, $E_2$ and states $\mem_{1}$, $\mem_{2}$, $m'_{1}$ such that:
    \begin{itemize}
      \item $E_1 \consistent{T}{} E_2$,
      \item $\tscope_{i} \vdash \mem_{1}$, $\tscope_{i} \vdash \mem_{2}$, and $\mem_1 \lequiv{\tscope_{i}} \mem_2$,
      \item  $E_1:\sem{m_{1}}{s}{m'_{1}}$
    \end{itemize}
    there exists a state $\mem'_2$ such that:
    \begin{itemize}
      \item $E_2:\sem{\mem_{2}}{s}{\mem'_{2}}$,
      \item  if $\tscope_{o} \vdash \mem'_{1}$, then $\tscope_{o} \vdash \mem'_{2}$
            and $\mem'_1 \lequiv{\tscope_{o}} \mem'_2$.
    \end{itemize}
  \end{definition}

The existential quantifier over the state $\mem'_2$ does not mean that the language is non-deterministic, in fact if such state exists it is going to be unique. 
This existential quantifier guarantees that our security condition is termination sensitive, meaning that it only accepts the cases where the program terminates for both initial states $\mem_1$ and $\mem_2$.

Intuitively, Definition~\ref{def:sec_cond}  relies on two different equivalence relations: one induced by the input policy case and one by the output policy case.
The former induces a  partial equivalence relation (PER) \cite{DBLP:journals/lisp/SabelfeldS01},  $P_{\tscope_{i}}(\mem_1, \mem_2) = \tscope_{i} \vdash \mem_1 \land \tscope_{i} \vdash \mem_2 \land \mem_1\lequiv{\tscope_{i}} \mem_2$, such that the domain contains
only states that satisfy the intervals of $\tscope_{i}$.
Similarly, the latter induces $Q_{\tscope_{o}}(\mem_1, \mem_2) = (\tscope_{o} \vdash \mem_1 \land \tscope_{o} \vdash \mem_2 \land \mem_1 \lequiv{\tscope_{i}} \mem_2) \vee (\tscope_{o} \not \vdash \mem_1 \land \tscope_{o} \not \vdash \mem_2 )$, which is an equivalence relation (ER).
A program is then noninterfering wrt. $\tscope_i$ and $\tscope_o$ if every class of the PER $P_{\tscope_i}$ is mapped to a class of the ER $Q_{\tscope_o}$.

  This condition implies the following intuitive assumptions: (1) the policy cases are public knowledge, (2) entailment of a state on the intervals of a policy case is public knowledge, (3) the states that do not entail the intervals of an input policy (i.e., those outside the domain of the PER) are considered entirely public, and their corresponding execution is unconstrained,
  (4) the attacker can observe whether the final state entails the intervals of the output policy, and (5) the attacker cannot observe any additional information about states that do not entail the intervals of the output policy.
 We use the following examples to further discuss our security condition.

\begin{example}
  \label{example:policydistinguish}
 
   Consider the input and output policy cases: 

  \vspace{-12pt}
  \begin{align*}
    &\tscope_{i} = \{a \mapsto {\tbvn{0,256}{\high}{9}},\; b \mapsto {\tbvn{*}{\low}{9}}\} \\
    &\tscope_{o} = \{a \mapsto {\tbvn{*}{\high}{9}},\; b \mapsto {\tbvn{1025,*}{\low}{9}}\}
  \end{align*}
  \vspace{-12pt}

  \begin{itemize}
    \item The inputs $a_1 = 257$ and $a_2 = 258$ are distinguishable by the attacker, since they do not fall in the {\high} interval $\tbvn{0,256}{}{}$ under $\tscope_{i}$. %
    \item The inputs $a_1 = 0$ and $a_2 = 256$ are indistinguishable, since they belong to the {\high} interval $\tbvn{0,256}{}{}$ under $\tscope_{i}$.
    \item Under $\tscope_{o}$, any value $b \geq 1025$  is distinguishable by the attacker, otherwise it is indistinguishable since it falls outside the  {\low} interval $\tbvn{1025,*}{}{}$.
  \end{itemize}

\end{example}

  We consider pairs of states $\mem_1$, $\mem_2$ such that $\tscope_{i} \vdash \mem_{1}$, $\tscope_{i} \vdash \mem_{2}$, and $\mem_1 \lequiv{\tscope_{i}} \mem_2$. For example,

  \begin{enumerate}
   \item $\mem_1(a) = a_1$ and $\mem_2(a) = a_2$, where $a_1 \ , \ a_2 \in \tbvn{0,256}{}{}$.
  \item $\mem_1(b)=\mem_2(b) = b_0$.
 \end{enumerate}
 
\noindent
We use the above policies and states to discuss the security condition of the following one-line programs:

\begin{enumerate}
  \item[$\bullet$] \texttt{b=a} 
  
  This program yields $m_1'(b)=a_1$ and $m_2'(b)=a_2$. Since $\mem'_1(b) \not\in \tbvn{1025,*}{}{}$, then $\tscope_o \not\vdash \mem_1'$, hence noninterference is trivially satisfied.
  Intuitively, despite the variable $a$ being {\high}, the output on the variable $b$  is not observable by the attacker. %

  \item[$\bullet$] \texttt{if (a<=1024) then b=a else skip} 
  
  This program yields $m_1'(b)=a_1$ and $m_2'(b)=a_2$. Since both $a_1$ and $a_2$ are in $\tbvn{0,256}{}{}$, the program executes \text{b=a} in the true branch. Thus, $\tscope_o \not\vdash \mem_1'$ and  noninterference is trivially satisfied.

  \item[$\bullet$] \texttt{b=a+1000} 
    
  \begin{enumerate}[label={\roman*}]
    \item[(i)] Let $a_1 =25$ and $a_2=26$. Then $m_1'(b) = 1025$ and $m_2'(b)=1026$ indicating $\tscope_o \vdash \mem_1'$ and $\tscope_o \vdash \mem_2'$, however $\mem'_1 \not\lequiv{\tscope_{o}} \mem'_2$, hence the program is interfering.
    \item[(ii)] Let $a_1 =25$ and $a_2=0$. Then $m_1'(b) = 1025$ and $m_2'(b)=1000$ indicating $\tscope_o \vdash \mem_1'$ and $\tscope_o \not\vdash \mem_2'$, hence the program is interfering.
  \end{enumerate}

\end{enumerate}

\begin{example}
  Assume program $\stmtif{y==1}{x=1}{x=x+1}$,
  input policy case ${\tscope_i = \{x \mapsto {\tbvn{*}{\high}{2}}, y \mapsto {\tbvn{1}{\low}{3}}\}}$,
  and initial states $\mem_1 = \{x \mapsto \bitvector{10}, y \mapsto \bitvector{001} \  \}$
  and $\mem_2 = \{x \mapsto \bitvector{01}, y \mapsto \bitvector{001} \ \}$.
  We can see that
  $\tscope_{i} \vdash \mem_{1}$, $\tscope_{i} \vdash \mem_{2}$, and $\mem_1 \lequiv{\tscope_{i}} \mem_2$.
  In a scenario where the only initial states are $\mem_{1}$ and $\mem_{2}$,
  executing this program would result in final states
  $\mem'_1 = \{x \mapsto \bitvector{01}, y \mapsto \bitvector{001} \  \}$
  and $\mem'_2 = \{x \mapsto \bitvector{01}, y \mapsto \bitvector{001} \ \}$, respectively.
  Given output policy case $\tscope_o = [x \mapsto {\tbvn{*}{\low}{2}}, y \mapsto {\tbvn{1}{\low}{3}}]$, 
  we say that this program is noninterfering wrt. $\tscope_o$ because
  $\tscope_{o} \vdash \mem'_{1}$, $\tscope_{o} \vdash \mem'_{2}$,
  and $\mem'_1 \lequiv{\tscope_{o}} \mem'_2$.
\end{example}

We extend the definition of noninterference to input policies $\Gamma_{i}$ and output policies $\Gamma_{o}$, requiring the program to be noninterfering for \emph{every pair} of input and output policy cases.
In our setting, the output policy, which indicates the shape of the output packets, describes what the attacker observes.
As such, it is typically independent of the shape of the input packet and the associated input policy.
Thus, our approach does not directly pair input and output policy cases. 
Instead, it ensures that the program is noninterfering for all combinations of input and output policy cases.

\label{sec:security_types}

\section{Security Type System}\label{sec:type_system}

We introduce a security type system that combines security types and interval abstractions.
Our approach begins with an input policy case and conservatively propagates labels and intervals of P4 variables.
In the following, we assume that the P4 program is well-typed.

\subsection{Typing of expressions}\label{sec:type_exp}
The typing judgment for expressions is $\tscope \vdash {e} : \type$. Rules for values, variables, and records are standard and omitted here.

P4 programs use bitvectors to represent either raw packets (e.g. \verb|packet_in packet| of line 12) or finite integers (e.g. \verb|x| of line 8).
While there is no distinction between these two cases at the language level, it is not meaningful to add or multiply two packets, as it is not extracting a specific byte from an integer representing a time-to-live value.
For this reason, we expect that variables used to marshal records have multiple slices but are not used in arithmetic operations, while variables used for integers have one single slice and are not used for sub-bitvector operations.
This allows us to provide a relatively simple semantics of the slice domain, which is sufficient for many P4 applications.

\begin{figure}[h]
	{
		\centering
		\fontsize{9}{13}\selectfont
		$\inferrule*[Lab=T-SingleSliceBs]
			{
			{\tscope}\vdash{e_{1}} :  \tbvn{I_{1}}{\lbl_{1}}{i}  \\
			{\tscope}\vdash{e_{2}} :  \tbvn{I_{2}}{\lbl_{2}}{i}
			}
			{
			{\tscope}\vdash{e_{1} \oplus e_{2}} :
			\tbvn{I_{1} \oplus I_{2}}{\lbl_{1} \sqcup \lbl_{2}}{i} \\\\
			{\tscope}\vdash \ominus e_{1} :
			\tbvn{ \ominus I_{1}    }{\lbl_{1}}{i}   \\\\
			{\tscope}\vdash e_{1} \otimes e_{2} :
			\tbvn{I_{1} \otimes I_{2}}{\lbl_{1} \sqcup \lbl_{2}}{i}
			}
		$

	}
\end{figure}

\textsc{T-SingleSliceBs} rule allows the reuse of standard interval analysis for
binary, unary, and comparison operations over bitvectors that have only \emph{one} single slice.
The resulting label is the least upper bound of labels associated with the input types.

\begin{figure}[h]
	{
		\centering
		\fontsize{9}{13}\selectfont
		$\inferrule*[Lab=T-AlignedSlice]
		{
			\tscope \vdash e:
			\tbvn{I_{n}}{\lbl_{n}}{i_{n}} {\cdots} \tbvn{I_{1}}{\lbl_{1}}{i_{1}}
		}
		{
			\tscope \vdash
			e[\sumindex{j=1}{b} i_j: \sumindex{j=1}{a} i_j ]:
			\tbvn{I_{b}}{\lbl_{b}}{i_{b}} {\cdots} \tbvn{I_{a}}{\lbl_{a}}{i_{a}}
		}
		$
		
		\nextrule
		
		$\inferrule*[Lab=T-NonAlignedSlice]
		{
			\tscope \vdash e:
			\tbvn{I_{n}}{\lbl_{n}}{i_{n}} {\cdots} \tbvn{I_{1}}{\lbl_{1}}{i_{1}}
		}
		{
			\tscope \vdash e[b:a]:
			\tbvn{*}{\bigsqcup \lbl_{i_{j}}}{b-a}
		}
		$
		
	}
\end{figure}

In the slicing rules, sub-bitvector (i.e. $e[b:a]$) preserves precision only if the slices of the input are aligned with sub-bitvector's indexes, otherwise sub-bitvector results in $\langle * \rangle$, representing all possible values.
The following lemmas show that interval and labeling analysis of expressions is sound:
\begin{lemma}
	Given expression $e$, state $\mem$, and state type $\tscope$ such that $\tscope \vdash \mem$, if the expression is well-typed $\tscope \vdash e:\type$, and evaluates to a value $\expsem{\mem}{e} = \val$, then:
	\begin{itemize}
		\item $\val$ is well-typed wrt. to the interval of type $\type$ (i.e. $\val:\type$).
		\item for every state $\mem'$ such that $\mem \lequiv{\tscope} \mem'$,
		      if $\getlabel(\type) = {\low}$, then $\expsem{\mem'}{e} = \val$.
	\end{itemize}
\end{lemma}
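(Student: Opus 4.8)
The plan is to proceed by structural induction on the typing derivation $\tscope \vdash e : \type$, since the semantics of expressions and the typing rules follow the same grammar. For each syntactic form of $e$ I would simultaneously establish both conclusions: the \emph{soundness of intervals} ($\val : \type$) and the \emph{soundness of labels} (low-equivalence of $\mem$ and $\mem'$ forces equal evaluation when $\getlabel(\type) = \low$). Keeping both invariants together in the induction hypothesis is essential, because the label conclusion for a compound expression must invoke the interval conclusion of its subexpressions (e.g., to argue the evaluated subvalues agree on their low slices).

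\emph{Base cases.} For a value $v$, evaluation is the identity and $v : \type$ holds by the (omitted) value rule, while $\getlabel(\type)$ being $\low$ gives equality trivially. For a variable $x$, the interval claim $\expsem{\mem}{x} : \tscope(x)$ is exactly the hypothesis $\tscope \vdash \mem$; the label claim follows directly from the definition of $\mem \lequiv{\tscope} \mem'$, which forces $\expsem{\mem}{x} = \expsem{\mem'}{x}$ whenever $\labelof{x}{\tscope} = \low$, and $\labelof{x}{\tscope}$ is precisely $\getlabel(\tscope(x))$.

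\emph{Inductive cases.} For the arithmetic and comparison operators governed by \textsc{T-SingleSliceBs}, I would rely on the assumed soundness of the underlying interval operations $\oplus$, $\ominus$, $\otimes$ over single-slice intervals: $\val_1 : \tbvn{I_1}{\lbl_1}{i}$ and $\val_2 : \tbvn{I_2}{\lbl_2}{i}$ yield $\val_1 \oplus \val_2 : \tbvn{I_1 \oplus I_2}{\cdot}{i}$. For the label claim, note $\getlabel = \lbl_1 \sqcup \lbl_2 = \low$ forces both $\lbl_1 = \low$ and $\lbl_2 = \low$; the induction hypothesis then gives $\expsem{\mem'}{e_1} = \val_1$ and $\expsem{\mem'}{e_2} = \val_2$, and applying the same operator yields equality. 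Field access $e.f$ and record construction are handled by projecting/combining the per-component typings. For \textsc{T-AlignedSlice}, since the requested indices coincide with slice boundaries, the slice of the value lands exactly on the corresponding subsequence of slices, preserving both interval and label; for \textsc{T-NonAlignedSlice}, the interval is $\langle * \rangle$, so interval soundness is immediate (any value lies in the full range), and the resulting label is the join $\bigsqcup \lbl_{i_j}$ of all slices touched by the cut, so $\getlabel = \low$ forces every contributing slice to be $\low$, and the induction hypothesis on $e$ gives equal evaluation of the full bitvector, hence of its slice.

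\emph{Main obstacle.} I expect the delicate part to be the \textsc{T-NonAlignedSlice} case, and more generally the bookkeeping of \emph{which} slices a sub-bitvector operation depends on. For the label soundness I must be careful that $\mem \lequiv{\tscope} \mem'$ only guarantees agreement on $\low$ slices of each lvalue; when a slice index falls \emph{inside} a single typed slice, the cut may mix bits from a $\high$ slice into the result, which is exactly why the rule conservatively raises the label to the join over all overlapped slices. The soundness argument therefore hinges on confirming that $\getlabel$ of the resulting type correctly over-approximates the label of every bit that can influence the extracted value, so that $\getlabel = \low$ genuinely rules out any dependence on a $\high$ bit. Establishing this requires a precise statement relating the bit-level dependence of the slicing semantics to the slice-level label join in the rule; once that alignment lemma is in place, the remaining cases are routine.
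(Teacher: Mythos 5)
Your overall plan (structural induction on the typing derivation, carrying the interval and label claims together) is the natural one, but note that the paper itself never proves this lemma: it reappears in the appendix only as the unproved auxiliary facts ``Expression reduction preserves the type'' and ``Expression evaluation of consistent states'' that the statement-level soundness proofs invoke. So your proposal has to stand on its own, and as written it has a genuine gap at the slicing rules.

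The gap is that you run the induction with the lemma's own statement as the invariant, and that invariant is too weak to pass through \textsc{T-AlignedSlice} and \textsc{T-NonAlignedSlice}; your non-aligned case already trips over it. In \textsc{T-AlignedSlice} the result type is a sub-run of the slices of $e$'s type, so $\getlabel$ of the result can be $\low$ while some slice of $e$ outside the extracted range is $\high$; then $\getlabel(\type_e)=\high$ and the induction hypothesis for $e$ is simply inapplicable, so ``preserving both interval and label'' is not justified by anything you have. In \textsc{T-NonAlignedSlice} you argue that $\getlabel = \low$ ``forces every contributing slice to be $\low$, and the induction hypothesis on $e$ gives equal evaluation of the full bitvector'' --- but the IH premise is that the join over \emph{all} of $e$'s slices is $\low$, not only the overlapped ones; if a non-overlapped slice is $\high$, then $\expsem{\mem}{e}$ and $\expsem{\mem'}{e}$ need not be equal for $\mem \lequiv{\tscope} \mem'$, so this step is invalid even though the conclusion you want (agreement of the extracted bits) is true.

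The fix is to strengthen the induction invariant to a slice-wise statement: for every run of $\low$-labelled slices of $\type$, the corresponding bits of $\expsem{\mem}{e}$ and $\expsem{\mem'}{e}$ coincide whenever $\mem \lequiv{\tscope} \mem'$. This subsumes the lemma (take all slices) and is exactly what both slicing cases need. Its variable base case is where the paper's definition of low-equivalence earns its keep: $\lequiv{\tscope}$ quantifies over all lvalues, including slice lvalues $x[b\!:\!a]$, so agreement on the $\low$ slices of a variable holds by definition rather than by derivation; values, \textsc{T-SingleSliceBs}, records, and field access then go through as you describe. You gesture at this need with your ``alignment lemma'' in the final paragraph, but you neither formulate it nor notice that, without it, the case analysis you actually wrote is unsound.
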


\subsection{Typing of statements}\label{sec:type_stm}
To present the typing rules for statements, we rely on some auxiliary notations and operations to manipulate state types, which are introduced informally here due to space constraints. The properties guaranteed by these operations are reported in {\iffull Appendix \ref{sec:type_stm_op_app}. \else the full version of the paper \cite{fullreport}. \fi} 

$\tscope[\lval \mapsto \type]$ indicates updating the
type of $\lval$, which can be a part of a variable, in state type $\tscope$.
 $\tscope \concat \tscope'$ updates $\tscope$ such that for every variable in the domain of $\tscope'$, the type of that variable in $\tscope$ is updated to match $\tscope'$.
$\refine{\tscope}{e}$
returns an overapproximation of $\tscope$ that satisfy the abstraction of
$\tscope$ and the predicate $e$.
$\join(\tscope_{1}, \tscope_{2})$
returns an overapproximation of $\tscope_{1}$, whose labels are at least as restrictive as $\tscope_{1}$ \emph{and} $\tscope_{2}$.
These operations tend to overapproximate, potentially causing a loss of precision in either the interval or the security label, as illustrated in the following example:
\begin{example}
  Let $x$ be mapped to an interval between $2$ and $8$, or in binary, bitvectors between $\bitvector{0010}$ and  $\bitvector{1000}$, in $\tscope$. That is,
$\tscope = \{ x \mapsto   \tbvn{2,8}{{\low}}{4} \}$.
	The following update ${\tscope[x[3:3] \mapsto \tbvn{0}{{\high}}{1} ]}$ modifies the slice $x[3:3]$ and results in the state type
	$\{ x \mapsto \tbvn{0}{{\high}}{1} {\cdot} \tbvn{*}{{\low}}{3} \}$.
	Here, lvalue $x[2:0]$ loses precision because after updating $x[3:3]$,
	the binary representation of the interval of lvalue $x[2:0]$ would be between $\bitvector{010}$ and $\bitvector{000}$,
	that is every 3-bit value except $\bitvector{001}$.
	Such value set cannot be represented by a single continuous interval, hence we overapproximate to the complete interval $\langle * \rangle$.

	Similarly, the operation $\refine{\tscope}{x[3:3] < 1}$ updates the interval of lvalue $x[3:3]$
	which results in $ \{ x \mapsto \tbvn{0}{{\low}}{1} {\cdot}  \tbvn{*}{{\low}}{3} \}$
	where lvalue $x[2:0]$ again loses precision.

	On the other hand, an operation such as $\text{join}(\tscope,  \{ x \mapsto \tbvn{*}{{\high}}{1} {\cdot} \tbvn{*}{{\low}}{3}  \})$
	does not modify the intervals of $\tscope$, but since the $\tbvn{*}{{\high}}{1}$ slice overlaps with a slice of $x$ in $\tscope$
	its label should be raised, which results in $\tscope' = \{ x \mapsto   \tbvn{2,8}{{\high}}{4} \}$.
\end{example}

The security typing of statement $\stmt$ uses judgments of the form $T, pc, \tscope \vdash {\stmt} : \Gamma$, where $pc$ is the security label of the current program context, $T$ is a static mapping, and $\tscope$ is a state type.
We use $T$ to map a parser state name ($st$) or function name ($f$) to their bodies. For functions, $T$ also returns their signatures.
Moreover, as described in Section \ref{sec:security_types}, we also use $T$ to map externs and tables to their contracts.
The typing judgment concludes with $\Gamma$, which is a set of state types.
In our type system, the security typing is not an on-the-fly check that immediately rejects a program when encountering an untypeable statement.
Instead, we proceed with typing the program and produce a state type for each path and accumulate all of those in a final set $\Gamma$. This is done in order to increase precision, by minimizing the need to unify, and hence overapproximate, intermediate typings during type derivation. This is indeed one of the key technical innovations of our type system, as explained in more detail below.
Once the final set $\Gamma$ is obtained, the state types within $\Gamma$ are then verified against the output security policy $\Gamma_o$, ensuring that they meet all the output policy cases $\tscope_{o}$ in $\Gamma$.

In the following rules, we use $\raisel(\type, \lbl)$ to return a type where each label $\lbl'$ within $\type$ has been updated to $\lbl' \sqcup \lbl$.

\begin{figure}[h]
	{
		\centering
		\fontsize{9}{13}\selectfont
		$\inferrule*[Lab=T-Assign]
		{
			\tscope \vdash {e} : \type \\
			\type' = \raisel(\type, pc) \\
			\tscope' = \tscope[\lval \mapsto \type']
		}
		{
			T, pc, \tscope \vdash \lval := e : \{\tscope'\}
		}
		$

	}
\end{figure}

\parheading{\textsc{T-Assign}} rule follows the standard IFC convention. It updates the type of the left-hand-side of the assignment (i.e. $lval$) with the type of expression $e$ while raising its security label to the current security context $pc$ in order to capture indirect information flows.

\begin{figure}[h]
	{
		\centering
		\fontsize{9}{13}\selectfont
		$\inferrule*[Lab=T-Seq]
		{
			T, pc, \tscope \vdash {\stmt_1} : \Gamma_1 \\
			\forall \tscope_{1} \in \Gamma_{1}. \ T, pc, \tscope_1 \vdash \stmt_2 : \Gamma^{\tscope_{1}}_2 \\
			\Gamma' =  \bigcup_{\tscope_1 \in \Gamma_1} \Gamma^{\tscope_{1}}_2
		}
		{
			T, pc, \tscope \vdash {\stmt_1};{\stmt_2} : \Gamma'
		}
		$
	
	}
\end{figure}

\parheading{\textsc{T-Seq}} types the sequential composition of two statements. This rule type checks the first statement $\stmt_1$, gathering all possible resulting state types into an intermediate set $\Gamma_1$. Then, for each state type in this intermediate set, the rule type checks the second statement $\stmt_2$, and accumulates all resulting state types into the final state type set $\Gamma'$.

\begin{figure}[h]
	{
		\centering
		\fontsize{9}{13}\selectfont
		$\inferrule*[Lab=T-Cond]
		{
			\tscope \vdash e : \type \\
			\lbl = \getlabel(\type)\\
			pc' = pc \sqcup \lbl \\\\
			T, pc', (\refine{\tscope}{e}) \vdash \stmt_1 : \Gamma_1 \\
			T, pc', (\refine{\tscope}{\neg e}) \vdash \stmt_2 : \Gamma_2
		}
		{
			T, pc, \tscope \vdash \texttt{if } e \texttt{ then } \stmt_1 \texttt{ else } \stmt_2 :
			\joinOnHigh{\Gamma_{1} \cup \Gamma_{2}}{\lbl}
		}
		$
		
	}
\end{figure}

\parheading{\textsc{T-Cond}} rule types the two branches using state types refined with the branch condition and its negation, which results in the state type sets $\Gamma_1$ and $\Gamma_2$, respectively.
The final state type set is a simple union of $\Gamma_1$ and $\Gamma_2$.

However, in order to prevent implicit information leaks, if the branch condition is {\high}, the security labels of $\Gamma_1$ and $\Gamma_2$ should be joined. We do this by the auxiliary function $\mathrm{joinOnHigh}$, defined as follows:
\begin{align*}
	\joinOnHigh{\Gamma}{\lbl} =  {\begin{cases}
						\join(\Gamma) & \text{if } \lbl = {\high} \\
						\Gamma   & \text{otherwise}
				\end{cases}}
\end{align*}
where the join operator has been lifted to $\Gamma$ and defined as $\join(\Gamma) = \{\join(\tscope, \Gamma) \mid \tscope \in \Gamma\}$, $\join(\tscope, \{\tscope'\} \cup \Gamma) = \join(\join(\tscope, \tscope'), \Gamma)$ and $\join(\tscope, \varnothing) = \tscope$.

\begin{example}\label{example:cond}
	Consider the conditional statement on line 56 of Program \ref{prg:basic_congestion}, where initially $\tscope = \{ \texttt{enq\_qdepth} \mapsto \tbvn{*}{{\high}}{19}, \texttt{hdr.ipv4.ecn} \mapsto \tbvn{*}{{\low}}{2}, \dots \}$.
	Since the label of \texttt{enq\_qdepth} is {\high}, after the assignment on line 57, \texttt{hdr.ipv4.ecn} becomes {\high} in $\Gamma_{1}$.
	However, since there is no \texttt{else} branch, $\stmt_2$ is trivially \texttt{skip}, meaning that \texttt{hdr.ipv4.ecn} remains {\low} in $\Gamma_{2}$.
	Typically, in IFC, the absence of an update for \texttt{hdr.ipv4.ecn} in the \texttt{else} branch leaks that the if statement's condition does not hold. To prevent this, we join the security labels of all state types if the branch condition is ${\high}$.
	Therefore, in the final state set $\Gamma'$, \texttt{hdr.ipv4.ecn} is labeled ${\high}$.
\end{example}

Even on joining the security labels, \textsc{T-Cond} does not merge the final state types in order to maintain abstraction precision.
To illustrate this consider program $\stmtif{b}{x[0:0]=0}{x[0:0]=1}$,
where the $pc$ and the label of $b$ are both ${\low}$,
and an initial state type $\tscope = \{ x \mapsto \tbvn{*}{{\high}}{3} ; b \mapsto \tbvn{*}{{\low}}{1}\}$.
After typing both branches, the two typing state sets are
$\Gamma_1 = \{\{ x \mapsto \tbvn{*}{{\high}}{2} {\cdot} \tbvn{0}{{\low}}{1} \}\} $
and $\Gamma_2 = \{\{ x \mapsto \tbvn{*}{{\high}}{2} {\cdot} \tbvn{1}{{\low}}{1} \}\}$.
Performing a union after the conditional preserves the labeling and abstraction precision of \texttt{x[0:0]},
whereas merging them would result in a loss of precision.

\begin{figure}[h]
	{
		\centering
		\fontsize{9}{13}\selectfont
		$\inferrule*[Lab=T-Trans]
		{
			{\tscope} \vdash {e}: \type \\
			\lbl = \getlabel(\type) \\
			pc' = pc \sqcup \lbl \\\\
			\tscope'_i=  \refine{\tscope}{e=v_i \land \bigwedge_{j<i} e \neq v_{j}} \\
			T, pc', \tscope'_i \vdash T(st_i) : \Gamma_i \\\\
			\tscope'_d=  \refine{\tscope}{\bigwedge_{i} e \neq v_{i}} \\
			T, pc', \tscope'_d \vdash T(st) : \Gamma_d \\\\
			\Gamma' = \Gamma_d \cup ( \bigcup_{i} \Gamma_i ) \\
			\Gamma'' =  \joinOnHigh{\Gamma'}{\lbl}
		}
		{
			T, pc, \tscope \vdash \texttt{transition } {\selexp{\val}{\xv}{n}}: \Gamma''
		}
		$
		
	}
\end{figure}

\parheading{\textsc{T-Trans}} rule types parser transitions. Similar to \textsc{T-Cond}, it individually types each state's body and then joins or unions the final state types based on the label of $pc$.

\begin{figure}[h]
	{
		\centering
		\fontsize{9}{13}\selectfont
		$\inferrule*[Lab=T-EmptyType]
		{
		}
		{
			T, {\low}, \emptyframe \vdash \stmt: \Gamma
		}
		$
		
	}
\end{figure}

\parheading{\textsc{T-EmptyType}}
Refining a state type might lead to an empty abstraction for some variables.
We call these states empty and denote them by {$\bullet$}.
An empty state indicates that there is no state $\mem$ such that $\bullet \vdash \mem$.
The rule states that from an empty state type,
any statement can result in any final state type,
since there is no concrete state that matches the initial state type.
Notice that $\Gamma$ can simply be \emph{empty} and allow the analysis to prune unsatisfiable paths.
This rule applies \emph{only} when $pc$ is {\low}.
For cases where $pc$ is {\high}, simply pruning the empty states is \emph{unsound}, as illustrate by the following example:

\begin{example}\label{example:invalid_state}
	Assume the state type 
	$\tscope = \{ \texttt{enq\_qdepth} \mapsto \tbvn{5}{{\high}}{19}, \texttt{hdr.ipv4.ecn} \mapsto \tbvn{*}{{\low}}{2}, \hdots \}$, 
	upon reaching the conditional statement on line 56 of Program \ref{prg:basic_congestion}.
	The refinement of the \texttt{then} branch under the condition $\texttt{enq\_qdepth} \ge \texttt{THRESHOLD}$
	(where \texttt{THRESHOLD} is a constant value $10$) results in the empty state
	$\bullet = \{ \texttt{enq\_qdepth} \mapsto \tbvn{}{{\high}}{19}, \hdots \}$, where $\tbvn{}{{\high}}{19}$ denotes an empty interval.
	If we prune this empty state type, the final state type set $\Gamma'$ contains only the state types obtained from the \texttt{else} branch (which is \texttt{skip}).
	This is unsound because a {\low}-observer would be able see that the value of \texttt{hdr.ipv4.ecn} has remained unchanged
	and  infer that the {\high} field \texttt{enq\_qdepth} was less than $10$.
\end{example}

There is a similar problem of implicit flows in dynamic information flow control, where simply upgrading a {\low} variable to {\high} in only one of the branches when $pc$ is {\high} might result in partial information leakage.
This is because the variable contains {\high} data in one execution while it might remain {\low} on an alternative execution.
To overcome this problem, many dynamic IFC methods employ the so-called no-sensitive-upgrade (NSU) check~\cite{austin2009efficient}, which terminates the program's execution whenever a {\low} variable is updated in a {\high} context.
Here, to overcome this problem, we type all the statements in all branches whenever the $pc$ is ${\high}$,
even when the state type is empty \cite{1550998,BalliuSS17}.
For instance, in Example \ref{example:invalid_state}, we type-check the then branch under an empty state type, and by rule \textsc{T-Cond} the security labels of the final state types of both branches are joined, resulting in \texttt{hdr.ipv4.ecn}'s label being ${\high}$ in all the final state types.

\begin{figure}[h]
	{
		\centering
		\fontsize{9}{13}\selectfont
		$\inferrule*[Lab=T-Call]
			{
			  \gamma \vdash  \vec e : \vec \type \\
			  \tCall(T, f, pc, \vec \type, \gamma, \Gamma)
			}
			{
			  T, pc, \gamma \vdash f(\vec e) : \Gamma
		  }
		  $
		  
	}
  \end{figure}

\parheading{\textsc{T-Call}} rule types function calls.
It individually types the function arguments $e_{i}$ to obtain their types $\type_{i}$, and passes them to auxiliary function $\tCall$, defined as:
\begin{align*}
	&{(\stmt, \overline{(x,d)})  = T(f)} \quad
	{\tscope_f =\{x_{i} \mapsto \type_{i}\}} \quad
	{T, pc, (\tscope_g, \tscope_f) \vdash \stmt : \Gamma'}
	\\
	&\Gamma = \{(\tscope'_{g}, \tscope_{l})  [e_{i} \mapsto \tscope'_{f}(x_{i}) \mid \isOut(d_{i})] \mid (\tscope'_{g}, \tscope'_{f}) \in \Gamma' \}
\end{align*}
which retrieves the function's body $\stmt$ and its signature $\overline{(x,d)}$ from the mapping $T$.
Creates a new local state type $\tscope_{f}$ by assigning each argument to its corresponding type (i.e. copy-in), and then types the function's body to obtain the resulting state type set $\Gamma'$.
Finally, $\tCall$ produces $\Gamma$ by copying out the \texttt{out} and \texttt{inout} parameters (identified by the $\isOut$ function), which means updating the passed lvalues (i.e. $e_{i}$) with the final types of their corresponding parameters (i.e. $\tscope'_{f}(x_{i})$).

\begin{example}
	Assume that at line 44 of Program~\ref{prg:basic_congestion}, the \texttt{ttl} in the state type is mapped to $\tbvn{1,10}{{\low}}{8}$.
	Calling \texttt{decrease} entails creating a new local state type and copying in the arguments, which yields $\tscope_\texttt{decrease} = \{ x \mapsto \tbvn{1,10}{{\low}}{8} \}$.
	Typing the function's body (\texttt{x = x - 1}) results in the state type $\tscope'_\texttt{decrease} = \{ x \mapsto \tbvn{0,9}{{\low}}{8} \}$.
	The final $\Gamma''$ is produced by copying out arguments back to the initial state type which would map \texttt{ttl} to $\tbvn{0,9}{{\low}}{8}$.
\end{example}

In contrast to standard type systems, we directly type the body of the function, instead of typing functions separately and in isolation.
The main reason is that the intervals and labels of the types of actual arguments can be different
for each invocation of the function.
Notice that the nested analysis of the invoked function does not hinder termination of our analysis since P4 does not support recursion, eliminating the need to find a fix point for the types \cite{HuntS06}.

\begin{figure}[h]
	{
		\centering
		\fontsize{9}{13}\selectfont
		$\inferrule*[Lab=T-Table]
		{
			(\overline{e},\text{Cont}_{\mathrm{tbl}})= T(tbl) \\
			\tscope \vdash {{e}_i} : {\type}_i \\
			\lbl = \bigsqcup_{i} \getlabel(\type_i)  \\
			pc' = pc \sqcup \lbl \\
			\forall (\phi_j, (a_j, \overline{\tau}_j)) \in \text{Cont}_{\mathrm{tbl}}. \\
			\gamma_{j} = \refine{\tscope}{\phi_j} \\
			\tCall(T, a_{j}, pc', \overline \type_{j}, (\tscope_{g}, \tscope_{l}), \Gamma_{j})
		}
		{
			T, pc, \tscope \vdash \texttt{apply} \ tbl : \joinOnHigh{\cup_{j} \Gamma_{j}}{\lbl}
		}
		$

	}
  \end{figure}

  \parheading{\textsc{T-Table}} rule is similar to \textsc{T-Cond} and \textsc{T-Call}.
It relies on user-specified contracts to type the tables. A contract, as introduced in Section \ref{sec:security_types}, has the form $(\overline{e},\text{Cont}_{tbl})$, where $\text{Cont}_{tbl}$ consists of a set of triples $(\phi, (a, \overline{\type}))$.
Each triple specifies a condition $\phi$, under which an action $a$ is executed with arguments of specific types $\overline{\type}$.
A new context $pc'$ is produced by the initial $pc$ with the least upper bound of the labels of the keys.
 
For each triple $(\phi_j, (a_j, \overline{\tau}_j))$, \textsc{T-Table} relies on $\tCall$ to type the action $a_j$'s body under $pc'$, similar to \textsc{T-Call},
and accumulates the resulting state types into a set (i.e. $\cup_{j} \Gamma_{j}$). Finally, \textsc{T-Table} uses $\joinOnHigh{\cup_{j} \Gamma_{j}}{\lbl}$ to join their labels if $\ell$ was {\high}.

\begin{example}\label{example:table}

  Given the table contract depicted in Fig. \ref{fig:table_contract_example}, assume a state type $\tscope$ where $pc$ is $\low$ and \texttt{hdr.ipv4.dstAddr} is typed as $\tbvn{192}{{\low}}{8} {\cdot} \tbvn{168}{{\low}}{8} {\cdot} \tbvn{*}{{\low}}{16}$.
  According to \textsc{T-Table}, refining $\tscope$ produces three state types, out of which only one is not empty: $\refine{\tscope}{\texttt{dstAddr}[31:24] = 192}$. This refined state is used to type the action \texttt{ipv4\_forward} with arguments $[\tbvn{*}{{\high}}{48},\tbvn{1,9}{{\low}}{9}]$. The two empty states should be used to type the actions \texttt{ipv4\_forward} (with arguments $[\tbvn{*}{{\low}}{48},\tbvn{10,20}{{\low}}{9}]$) and \texttt{drop}. However, these states can be pruned by \textsc{T-EmptyType}, since $pc'$ is $\low$.

\end{example}

\begin{figure}[h]
	{
		\centering
		\fontsize{9}{13}\selectfont
		$\inferrule*[Lab=T-Extern]
		{
			(\tscope_g, \tscope_l) \vdash  e_{i} : \type_{i} \\
			(\text{Cont}_{\mathrm{E}},(x_{1},d_{1}) \cons (x_{n}, d_{n})) = T(f) \\\\
			\tscope_f =\{x_{i} \mapsto \type_{i}\} \\
			\forall (\tscope_i, \phi,\tscope_t) \in \text{Cont}_{\mathrm{E}}. \ (\tscope_g, \tscope_f) \sqsubseteq \tscope_i \\\\
			{\begin{aligned}\Gamma' = \{ \tscope' \concat \raisel(\tscope_t, pc) \mid \  & (\tscope_i, \phi,\tscope_t) \in \text{Cont}_{\mathrm{E}}
					\\ &\band \refine{(\tscope_{g}, \tscope_f)}{\phi} = \tscope' \neq \bullet  \}\end{aligned}} \\\\
			\Gamma'' =
			\{(\tscope'_{g}, \tscope_{l}) [e_{i} \mapsto \tscope'_{f}(x_{i}) \mid \isOut(d_{i})]
			\mid (\tscope'_{g}, \tscope'_{f}) \in \Gamma'\}
		}
		{
			T, pc, (\tscope_g, \tscope_l) \vdash f(e_{1}\cons e_{n}) : \Gamma''
		}
		$
		
	}
\end{figure}

\parheading{\textsc{T-Extern}} types the invocation of external functions.
It is similar to \textsc{T-Call} with the main difference that the semantics of external functions are not defined in P4,
therefore, we rely on user-specified contracts to approximate their behavior.
An extern contract is a set of tuples $(\tscope_i, \phi, \tscope_t)$, where $\tscope_i$ is the input state type,
$\phi$ is a boolean expression defined on the parameters of the extern,
and $\tscope_{t}$ indicates the state type components updated by the extern function (i.e., its side effects).

$\tscope_i$ denotes a contract-defined state type that must be satisfied prior to the invocation of the extern,
and the rule \textsc{T-Extern} ensure that the initial state $(\tscope_g, \tscope_f)$ is at most as restrictive as $\tscope_i$.
This approach is standard in type systems where functions are type-checked in isolation using predefined pre- and post-typing environments.
For each $(\tscope_i, \phi,\tscope_t)$ tuple in the contract, \textsc{T-Extern} refines the initial state type $(\tscope_g, \tscope_l)$ by $\phi$ yielding $\tscope'$, and filters out all $\tscope'$s that do not satisfy $\phi$ (i.e., the refinement $\refine{(\tscope_{g}, \tscope_f)}{\phi}$ is $\bullet$).
This is sound because we assume for all the variables appeared in $\phi$, the least upper bound of their labels within $\tscope_i$ is less restrictive than the lower bound of $\tscope_t$.
We raise the label of all elements in the $\tscope_{t}$ to $pc$ to capture indirect flows arising from updating the state type $\tscope'$ in a {\high} context, and then use use $\concat$ operation to update $\tscope'$ with the types in $\tscope_{t}$.
The final state type set $\Gamma'$ is produced by copying out the \texttt{out} and \texttt{inout} parameters from $\tscope'$.

\begin{example}
	In Program \ref{prg:basic_congestion}, let the contract for \texttt{mark\_to\_drop} at line 38 be defined as:
	\begin{align*}
		(\{ \texttt{egress\_spec} \mapsto \tbvn{*}{{\low}}{9} \}, \mathrm{true}, \{ \texttt{egress\_spec} \mapsto \tbvn{0}{{\low}}{9} \})
	\end{align*}
	which indicates that given an input state type $\{\texttt{egress\_spec} \mapsto \tbvn{*}{{\low}}{9} \}$
	the extern always sets the value of \texttt{egress\_spec} to zero.
	Assuming an initial state type $\tscope= \{ \texttt{egress\_spec} \mapsto \tbvn{7}{{\low}}{9} \}$. 
	Since the condition of the contract is $\mathrm{true}$ the refinement in this state type does not modify $\tscope$.
	This state type will be updated with the contract's $\tscope_t$ to become $\{ \texttt{egress\_spec} \mapsto \tbvn{0}{{\low}}{9} \}$
	if $pc$ is ${\low}$, otherwise $\{ \texttt{egress\_spec} \mapsto \tbvn{0}{{\high}}{9} \}$.
\end{example}

To guarantee the abstraction soundness of externs, for any input state $\mem$ to the externs semantics $\mem' = sem_f(\mem)$ and the contracts set $(\tscope_i,\phi,\tscope_t)$ must satisfy the following properties:
\begin{enumerate}
	\item Every input state $\mem$ must satisfy some condition in the contract set $\phi$, i.e., $\exists \phi \ . \phi(\mem)$
	\item All modified variables in output state $\mem'$ must be in the domain of $\tscope_t$,
	and their abstraction types in $\tscope_t$ must hold,
	i.e. $\{x. \ \mem(x) \neq \mem'(x) \} \subseteq \text{domain}(\tscope_t)$, and for all $x \in \text{domain}(\tscope_t)$ holds $\mem'(x):\tscope_t(x)$.
\end{enumerate}

Additionally, to guarantee the labeling soundness of externs, the contracts must satisfy the following properties:
\begin{enumerate}
	\item Conditions must preserve secrecy with respect to the output state type.
	For all variable names $\{x_1 \cons x_n\}$ appearing in the contract's condition $\phi$, holds ${\getlabel(\tscope_i(x_1)) \sqcup ... \sqcup \getlabel(\tscope_i(x_n)) \sqsubseteq lb(\tscope_t)}$.
	
	\item Extern semantics must preserve low-equivalence.
	Given any states $\mem_1$ and $\mem_2$,
	If $\phi(\mem_1)$ and $\mem_1\lequiv{\tscope_i}\mem_2$, then
	$\mem_1' = sem_f(\mem_1)$, $\mem_2' = sem_f(\mem_2)$, then the difference between the two output states must be also low equivalent
	$(\mem_1' \setminus \mem_1)\lequiv{\tscope_t}(\mem_2' \setminus  \mem_2)$.
\end{enumerate}

\subsection{Soundness}\label{sec:soundness}

Given initial state types $\tscope_{1}$ and $\tscope_{2}$,
and initial states $\mem_{1}$ and $\mem_{2}$, we write ${\mem_{1}} \consistent{\tscope_1}{\tscope_2} {\mem_{2}}$
to indicate that $\tscope_{1} \vdash \mem_{1}$, $\tscope_{2} \vdash \mem_{2}$, and ${\mem_{1} \lequiv{\tscope_{1} \sqcup \tscope_{2}} \mem_{2}}$.

The type system guarantees that a well-typed program terminates, and the final result is well-typed wrt. at least one of the resulting state types.

\begin{lemma}[Soundness of abstraction and labeling]\label{lemma:Soundness_labeling}
	Given initial state types $\tscope_{1}$ and $\tscope_{2}$,
	and initial states $\mem_{1}$ and $\mem_{2}$,
	such that $T, pc, \tscope_{1} \vdash s : \Gamma_{1}$ and
	$T, pc, \tscope_{2} \vdash s : \Gamma_{2}$, and
	$E_{1} \lequiv{T} E_{2}$, and $\mem_{1} \consistent{\tscope_{1}}{\tscope_{2}} \mem_{2}$
	then
	there exists $\mem'_{1}$ and $\mem'_{2}$ such that $E_{1}:\sem{\mem_{1}}{s}{\mem'_{1}}$,
	$E_{2}:\sem{\mem_{2}}{s}{\mem'_{2}}$,
    $\tscope'_{1} \in \Gamma_{1}$,
	$\tscope'_{2} \in \Gamma_{2}$,
	and $\mem'_{1} \consistent{\tscope'_{1}}{\tscope'_{2}} \mem'_{2}$.
\end{lemma}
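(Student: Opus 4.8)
I would prove the lemma by induction on the program $s$, ordered by the well-founded measure that combines the syntactic size of $s$ with the nesting depth of the function, action, and extern bodies entered through \textsc{T-Call}, \textsc{T-Table}, and \textsc{T-Extern}; this is well founded precisely because P4 has neither recursion nor loops, so each body typing is a strictly smaller instance. Both typing derivations $T, pc, \tscope_{1} \vdash s : \Gamma_{1}$ and $T, pc, \tscope_{2} \vdash s : \Gamma_{2}$ are driven by the same $s$, so I carry both executions through the induction in lockstep, maintaining the consistency relation $\mem_{1} \consistent{\tscope_{1}}{\tscope_{2}} \mem_{2}$ as the relational invariant, i.e. $\tscope_{1} \vdash \mem_{1}$, $\tscope_{2} \vdash \mem_{2}$, and $\mem_{1} \lequiv{\tscope_{1} \sqcup \tscope_{2}} \mem_{2}$. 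The easy cases \textsc{S-Skip} and \textsc{T-Assign} use the expression soundness lemma: the evaluated value lies in the interval of $\type$, giving $\tscope' \vdash \mem'$, while raising the label to $pc$ together with the low-equivalence clause of that lemma preserves $\lequiv{}$. \textsc{T-Seq} applies the induction hypothesis twice, threading an intermediate pair of states and state types. \textsc{T-Call} copies in argument types (expression lemma), recurses on the body, and copies out the \texttt{out}/\texttt{inout} parameters via $\isOut$, with consistency preserved by the properties of $\concat$ and the update operation. The rule \textsc{T-EmptyType} is vacuous under its own side condition, since no concrete state satisfies an empty state type while the hypothesis supplies $\tscope_{1} \vdash \mem_{1}$.

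\textbf{Conditionals, transitions, and the high guard.} For \textsc{T-Cond} I split on the guard label $\lbl = \getlabel(\type)$. If $\lbl = \low$, the expression lemma forces both runs down the same branch, and a refinement property ($\refine{\tscope}{e} \vdash \mem$ whenever $\tscope \vdash \mem$ and $\mem(e)$ holds, leaving labels unchanged) shows the refined state types remain a consistent pair; the induction hypothesis then lands in $\Gamma_{1}^{T} \cup \Gamma_{1}^{F} = \Gamma_{1}$. If $\lbl = \high$ but both runs still take the same branch, I argue as before and note that $\joinOnHigh{\cdot}{\high}$ only raises labels and widens intervals, so, because $\lequiv{}$ is preserved when labels are raised and interval widening preserves $\vdash$, the joined state types still satisfy the invariant. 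The genuinely hard subcase is divergence (one run takes $\stmt_{1}$, the other $\stmt_{2}$), where the two-execution hypothesis does not apply. I therefore prove a separate \emph{confinement lemma}: if $T, \high, \tscope \vdash s : \Gamma$, $\tscope \vdash \mem$, and $E : \sem{\mem}{s}{\mem'}$, then some $\tscope' \in \Gamma$ satisfies $\tscope' \vdash \mem'$ and $\mem \lequiv{\tscope'} \mem'$ — a high context only touches slices that are \high\ in the resulting type, and the deliberate typing of pruned, empty-interval branches when $pc = \high$ is exactly what makes this hold. Applying confinement to each run separately, and using that after $\joinOnHigh{\cdot}{\high}$ a slice is \low\ in the chosen $\tscope_{i}'$ only if it was \low\ and unchanged in every branch, I chain $\mem_{1}'(\lval) = \mem_{1}(\lval) = \mem_{2}(\lval) = \mem_{2}'(\lval)$ for every $\lval$ that is \low\ in $\tscope_{1}' \sqcup \tscope_{2}'$, which is the required low-equivalence. \textsc{T-Trans} is identical up to the select-expression bookkeeping.

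\textbf{Tables and externs.} In \textsc{T-Table} the new ingredient is $E_{1} \lequiv{T} E_{2}$. When the keys type \low, the expression lemma gives identical key values, so $T$-indistinguishability yields the \emph{same} action $a$ and arguments with $\overline{\val}_{1} \lequiv{\overline{\type}} \overline{\val}_{2}$; together with each $\val : \type$ this makes the copied-in local state types consistent, and I recurse into the action body through $\tCall$ exactly as in \textsc{T-Call}. When the keys type \high, $pc'$ becomes \high\ and distinct rows may fire in the two runs, which I discharge with the confinement lemma and $\joinOnHigh{\cdot}{\high}$ as in the divergent conditional. \textsc{T-Extern} is analogous, with the stated contract obligations playing the role of $\lequiv{T}$: the abstraction obligations give $\tscope_{t} \vdash \mem'$ (hence $\vdash$ after $\concat$), while the two labeling obligations — conditions preserve secrecy, and the extern preserves low-equivalence of its footprint — supply the low-equivalence after $\raisel(\tscope_{t}, pc)$.

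\textbf{Main obstacle.} I expect the crux to be the confinement lemma and its interaction with the flow-sensitive merge operators in the divergent high-guard subcase. Closing the invariant there demands precise statements of how $\refine$, $\concat$, $\join$, and $\joinOnHigh$ act on both the interval and the label components — in particular that labels are never lowered under a \high\ context, that $\join$ widens intervals monotonically, and that $\lequiv{}$ is preserved when labels are raised — and it is the alignment of these algebraic facts with the confinement guarantee, including the intentional typing of empty, pruned branches under \high\ $pc$, that carries the real weight. The table and extern cases are then routine given the $\lequiv{T}$ and contract soundness assumptions.
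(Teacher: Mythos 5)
Your plan follows the same route as the paper's own proof: a relational, lockstep induction over the typing derivation, a case split on the guard label, a confinement lemma for $\high$ contexts (the paper proves exactly this as its ``Branch on high -- state preservation'' lemma, supported by ``Branch on high is never empty'', ``High program's final types'', and the join lemmas you anticipate), $E_1 \lequiv{T} E_2$ to equalize the chosen action and make the arguments $\overline{\type}$-indistinguishable in \textsc{T-Table}, and the extern contract obligations playing the role of $\lequiv{T}$ in \textsc{T-Extern}. Your chaining argument $\mem'_1(\lval) = \mem_1(\lval) = \mem_2(\lval) = \mem'_2(\lval)$ for slices left $\low$ by $\joinOnHigh{\cdot}{\high}$ is precisely the paper's argument in the divergent subcase, down to the bridging fact that $\low$ in the joined type forces $\low$ (and unchanged) in every branch type.

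The genuine gap is the existence half of the conclusion. The lemma asserts that both executions $E_{1}:\sem{\mem_{1}}{s}{\mem'_{1}}$ and $E_{2}:\sem{\mem_{2}}{s}{\mem'_{2}}$ exist, and this cannot be dropped because the security condition is termination-sensitive. In your same-statement cases the relational induction hypothesis supplies the two final states, but in exactly the divergent subcases --- a $\high$ guard with the runs in different branches, or a $\high$-keyed table firing different rows --- the relational IH is inapplicable because the two runs execute \emph{different} statements, and your confinement lemma takes $E : \sem{\mem}{s}{\mem'}$ as a hypothesis rather than producing $\mem'$. As written, nothing in your plan constructs the executions there. The paper avoids this by factoring the proof in two: a standalone single-execution ``Soundness of Abstraction'' theorem ($T, pc, \tscope \vdash s : \Gamma$, $T \vdash E$ and $\tscope \vdash \mem$ imply there exists $\mem'$ with $E:\sem{\mem}{s}{\mem'}$ and some $\tscope' \in \Gamma$ with $\tscope' \vdash \mem'$), proved by its own induction, followed by the relational ``Soundness of Labeling'' theorem, which takes both executions as hypotheses and invokes the abstraction theorem inside each divergent case. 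The fix to your plan is correspondingly local: either prove that single-execution lemma first, or strengthen your confinement lemma so that it also asserts the existence of $\mem'$; its proof goes by the same induction, and well-foundedness is unproblematic since P4 has no loops or recursion. With that strengthening, the rest of your argument goes through.
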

Lemma~\ref{lemma:Soundness_labeling} states that starting from two indistinguishable states wrt. $\tscope_{1} \sqcup \tscope_{2}$, a well-typed program results in two indistinguishable states wrt. \emph{some} final state types in $\Gamma_{1}$ and $\Gamma_{2}$ that can also type the resulting states $\mem'_{1}$ and $\mem'_{2}$.

We rely on Theorem~\ref{thm:noninterference} to establish noninterference, that is, if every two states $\mem_{1}$ and $\mem_{2}$ that are indistinguishable wrt. \emph{any} two final state types are also indistinguishable by the output policy, then the program is noninterfering:
\begin{theorem}[Noninterference]\label{thm:noninterference}
	Given input policy case $\gamma_{i}$ and output policy $\Gamma_{o}$,
	if $T, pc, \tscope_{i} \vdash s : \Gamma$ and for every $\tscope_{a} , \tscope_{b} \in \Gamma$,
	such that $\mem_{1} \consistent{\tscope_{a}}{\tscope_{b}} \mem_{2}$
        it holds also that
	$\mem_1 \consistent{\tscope_{o}}{\tscope_{o}} \mem_2$ for all $\gamma_{o} \in \Gamma_{o}$, then $s$ is noninterfering wrt.
	the input policy case $\tscope_{i}$ and the output policy $\Gamma_{o}$.
\end{theorem}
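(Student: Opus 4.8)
The plan is to peel off the output policy one case at a time and then let Lemma~\ref{lemma:Soundness_labeling} do the real work. Since noninterference with respect to an output policy $\Gamma_o$ is by definition noninterference with respect to every output policy case it contains, I first fix an arbitrary $\tscope_o \in \Gamma_o$ and reduce the goal to Definition~\ref{def:sec_cond} instantiated with the single input case $\tscope_i$ and this $\tscope_o$. I then take arbitrary $E_1, E_2, \mem_1, \mem_2, \mem'_1$ satisfying the three premises of Definition~\ref{def:sec_cond}: $E_1 \lequiv{T} E_2$; $\tscope_i \vdash \mem_1$, $\tscope_i \vdash \mem_2$, $\mem_1 \lequiv{\tscope_i} \mem_2$; and $E_1 : \sem{\mem_1}{s}{\mem'_1}$. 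The obligation is to exhibit a matching $\mem'_2$ and discharge the conditional conclusion.

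The key step is to apply Lemma~\ref{lemma:Soundness_labeling} with $\tscope_1 = \tscope_2 = \tscope_i$ and $\Gamma_1 = \Gamma_2 = \Gamma$. Its typing premises are met because the theorem supplies $T, pc, \tscope_i \vdash s : \Gamma$, and its consistency premise $\mem_1 \consistent{\tscope_i}{\tscope_i} \mem_2$ follows from the premises of Definition~\ref{def:sec_cond} together with idempotence of $\sqcup$, so that $\tscope_i \sqcup \tscope_i = \tscope_i$ and $\mem_1 \lequiv{\tscope_i} \mem_2$ is exactly the required low-equivalence. The lemma then produces states $\tilde\mem_1, \mem'_2$ with $E_1 : \sem{\mem_1}{s}{\tilde\mem_1}$ and $E_2 : \sem{\mem_2}{s}{\mem'_2}$, together with final types $\tscope'_1, \tscope'_2 \in \Gamma$ satisfying $\tilde\mem_1 \consistent{\tscope'_1}{\tscope'_2} \mem'_2$. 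Because the big-step semantics is deterministic (the remark following Definition~\ref{def:sec_cond}), the lemma's $\tilde\mem_1$ must coincide with the given $\mem'_1$, so I take the lemma's $\mem'_2$ as the witness required by Definition~\ref{def:sec_cond} and record that $\mem'_1 \consistent{\tscope'_1}{\tscope'_2} \mem'_2$ with $\tscope'_1, \tscope'_2 \in \Gamma$.

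It remains to establish the conditional conclusion of Definition~\ref{def:sec_cond}, namely that $\tscope_o \vdash \mem'_1$ implies $\tscope_o \vdash \mem'_2$ and $\mem'_1 \lequiv{\tscope_o} \mem'_2$. This is where the theorem's static hypothesis is consumed: instantiating it with $\tscope_a = \tscope'_1$ and $\tscope_b = \tscope'_2$ (both in $\Gamma$) and with the states $\mem'_1, \mem'_2$, the established relation $\mem'_1 \consistent{\tscope'_1}{\tscope'_2} \mem'_2$ yields $\mem'_1 \consistent{\tscope_o}{\tscope_o} \mem'_2$ for our fixed $\tscope_o$. Unfolding this relation gives $\tscope_o \vdash \mem'_1$, $\tscope_o \vdash \mem'_2$, and $\mem'_1 \lequiv{\tscope_o} \mem'_2$, which delivers the required conclusion (and in fact discharges the antecedent $\tscope_o \vdash \mem'_1$ for free).

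I expect the only genuinely delicate points to be bookkeeping rather than mathematical depth, since the inductive substance lives in Lemma~\ref{lemma:Soundness_labeling}. The first is the appeal to determinism to identify $\tilde\mem_1$ with $\mem'_1$; without it the relation returned by the lemma would concern a possibly different terminating state than the one fixed by Definition~\ref{def:sec_cond}. The second is checking that the quantifier structure lines up: the theorem hypothesis ranges over arbitrary $\tscope_a, \tscope_b \in \Gamma$ and arbitrary consistent state pairs, and I must instantiate it precisely at the final types and states delivered by the lemma. Finally, I would observe that the termination-sensitivity built into Definition~\ref{def:sec_cond} is respected automatically, because the lemma guarantees termination of the $E_2$-run, so the existential witness $\mem'_2$ always exists.
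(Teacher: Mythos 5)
Your proof is correct and takes essentially the same route the paper intends: Lemma~\ref{lemma:Soundness_labeling}, instantiated diagonally with $\tscope_1=\tscope_2=\tscope_{i}$ and $\Gamma_1=\Gamma_2=\Gamma$, supplies the terminating runs and the consistent pair of final state types, after which the theorem's hypothesis is applied at exactly those types to discharge the conditional conclusion of Definition~\ref{def:sec_cond}. The paper never spells out this glue (Theorem~\ref{thm:noninterference} has no explicit proof in the appendix, only the underlying soundness theorems), and your two bookkeeping points --- idempotence of $\sqcup$ to meet the consistency premise, and determinism of the semantics to identify the lemma's terminating state with the given $\mem'_1$ --- are precisely the steps it leaves implicit.
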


Theorem~\ref{thm:noninterference} is required to be proved for \emph{every} possible pair of states.
To make the verification process feasible, we rely on the following lemma to show that this condition
can be verified by simply verifying a relation between the final state types ($\Gamma$)
and the output policy ($\Gamma_{o}$):%

  \begin{restatable}[Sufficient Condition]{lemma}{SufficientCondition}\label{lemma:sufficient_condition}
    Assume for every $\tscope_{1} , \tscope_{2} \in \Gamma$ and every $\gamma_{o} \in \Gamma_{o}$ such that $\tscope_{1} \cap \tscope_{o} \neq \bullet$ that
	\begin{enumerate}
		\item[(1)] $\tscope_{2} \cap \tscope_{o} \neq \bullet$ implies $\tscope_{1} \sqcup \tscope_{2} \sqsubseteq \tscope_{o}$, and
		\item[(2)] for every {$\lval$} either $\tscope_{2}(\lval) \subseteq \tscope_{o}(\lval)$ or $\tscope_{1} \sqcup \tscope_{2}(\lval) = {\low}$\ .
	\end{enumerate}
Then for every $\tscope_{1} , \tscope_{2} \in \Gamma$ such that $\mem_{1} \consistent{\tscope_{1}}{\tscope_{2}} \mem_{2}$,
	and every $\tscope_{o} \in \Gamma_{o}$ such that $\tscope_{o} \vdash \mem_{1}$ also  
	$\tscope_{o} \vdash \mem_{2}$ and moreover $\mem_1 \lequiv{\tscope_{o}} \mem_2$.
\end{restatable}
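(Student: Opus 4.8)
The plan is to fix arbitrary $\tscope_1, \tscope_2 \in \Gamma$ with $\mem_1 \consistent{\tscope_1}{\tscope_2} \mem_2$ and an output policy case $\tscope_o \in \Gamma_o$ with $\tscope_o \vdash \mem_1$, and to unfold consistency into its three ingredients $\tscope_1 \vdash \mem_1$, $\tscope_2 \vdash \mem_2$, and $\mem_1 \lequiv{\tscope_1 \sqcup \tscope_2} \mem_2$. The first observation is that $\mem_1$ itself witnesses $\tscope_1 \cap \tscope_o \neq \bullet$: since $\tscope_1 \vdash \mem_1$ and $\tscope_o \vdash \mem_1$, for every $\lval$ the value $\mem_1(\lval)$ lies in the intervals of both $\tscope_1(\lval)$ and $\tscope_o(\lval)$, so the intervalwise intersection $\tscope_1 \cap \tscope_o$ is nonempty. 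This is exactly the side condition under which hypotheses (1) and (2) are assumed, so both become available for the triple $(\tscope_1, \tscope_2, \tscope_o)$.

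The core argument proceeds in two stages, and the key point is the order in which they are taken. First I would establish the interval entailment $\tscope_o \vdash \mem_2$ using hypothesis (2) alone, reasoning per lvalue. Fix any $\lval$. By (2) either $\tscope_2(\lval) \subseteq \tscope_o(\lval)$, in which case $\tscope_2 \vdash \mem_2$ already places $\mem_2(\lval)$ in the interval of $\tscope_o(\lval)$; or $\labelof{\lval}{\tscope_1 \sqcup \tscope_2} = \low$, in which case $\mem_1 \lequiv{\tscope_1 \sqcup \tscope_2} \mem_2$ forces $\mem_2(\lval) = \mem_1(\lval)$, and since $\tscope_o \vdash \mem_1$ gives $\mem_1(\lval) \in \tscope_o(\lval)$, the same value lies in $\tscope_o(\lval)$. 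Either way $\mem_2(\lval)$ satisfies the interval of $\tscope_o(\lval)$, so $\tscope_o \vdash \mem_2$.

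Having proved $\tscope_o \vdash \mem_2$, I can now bootstrap: together with $\tscope_2 \vdash \mem_2$, the state $\mem_2$ witnesses $\tscope_2 \cap \tscope_o \neq \bullet$. This discharges the premise of hypothesis (1), which therefore yields $\tscope_1 \sqcup \tscope_2 \sqsubseteq \tscope_o$. Appealing to the properties of the state-type ordering (as recorded for the auxiliary operations), this containment entails pointwise on every $\lval$ that $\labelof{\lval}{\tscope_1 \sqcup \tscope_2} \sqsubseteq \labelof{\lval}{\tscope_o}$. Consequently, whenever $\labelof{\lval}{\tscope_o} = \low$ we also have $\labelof{\lval}{\tscope_1 \sqcup \tscope_2} = \low$, and low equivalence with respect to $\tscope_1 \sqcup \tscope_2$ gives $\mem_1(\lval) = \mem_2(\lval)$. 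Ranging over all $\tscope_o$-low lvalues yields $\mem_1 \lequiv{\tscope_o} \mem_2$, completing the proof.

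I expect the main obstacle to be justifying the label-domination step: hypothesis (2) on its own only controls intervals (or forces a low label pointwise) and is insufficient to transfer low equivalence to $\tscope_o$, so the argument genuinely relies on hypothesis (1), which in turn is only applicable after $\tscope_o \vdash \mem_2$ has been secured --- hence the deliberate ordering above. The remaining care is slice-level bookkeeping: $\labelof{\cdot}{\cdot}$ is a least upper bound over the slices of an lvalue and $\vdash$ is interval inclusion per slice, so each per-lvalue step must be read at slice granularity, and one must check that the domains of $\mem_1$, $\mem_2$, and $\tscope_o$ align so that the pointwise reasoning covers every relevant lvalue.
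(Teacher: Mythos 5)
Your proposal is correct and follows essentially the same argument as the paper's proof: first establish $\tscope_{o} \vdash \mem_{2}$ by the per-lvalue case split on hypothesis (2) (interval inclusion via $\tscope_{2} \vdash \mem_{2}$, or equality of values via low equivalence), then use $\mem_{2}$ as the witness for $\tscope_{2} \cap \tscope_{o} \neq \bullet$ to unlock hypothesis (1) and transfer low equivalence from $\tscope_{1} \sqcup \tscope_{2}$ to $\tscope_{o}$. The ordering subtlety you flag --- that (1) is only applicable after $\tscope_{o} \vdash \mem_{2}$ is secured --- is exactly the structure of the paper's proof, which makes the final step "trivial" by the state-equivalence-preservation property you invoke.
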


In the statement of Lemma~\ref{lemma:sufficient_condition} we use
$\tscope_{2}(\lval) \subseteq \tscope_{o}(\lval)$ to indicate that the interval of {$\lval$} in $\tscope_{2}$ is included in the interval specified in $\tscope_{o}$.

Intuitively, Lemma~\ref{lemma:sufficient_condition} formalizes that the least upper bound of any pair in the set of final state types ($\Gamma$)
should not be more restrictive than the output policy
(e.g. if {\high} information has flown to a variable, that variable should also be {\high} in the output policy cases)
\emph{and} the abstractions specified in the output policy cases (i.e. the intervals) are either always satisfied or do not depend on {\high} variables.

\subsection{Revisiting the basic congestion program}
\label{sec:type:example}
We revisit Program \ref{prg:basic_congestion} to illustrate
some key aspects of our typing rules.
Here, we only consider the first policy case of the input policy (\ref{par:policy}) of Section \ref{sec:background}, where the input packet is IPv4 and it is coming from the internal network.

For the initial state derived from this input policy case,
since (1) the parser's transitions depend on {\low}
variables, (2) the type system does not merge state types, and (3) the type system prunes the unreachable transition to \verb|accept| from \verb|parse_ethernet|,
then the parser terminates in a single state type where both \texttt{hdr.eth} and \texttt{hdr.ipv4} are valid, and their respective headers include the slices, intervals, and labels defined by the initial state type.

After the parsing stage is finished, the program's control flow reaches the \texttt{MyCtrl} control block. Since \texttt{hdr.ipv4} is valid and $pc$ is {\low}, pruning empty states allows us to ignore the \texttt{else} branch on line 62. Afterwards, the nested if statement at line 54 entails two possible scenarios. 
First scenario, when the destination address is in range \texttt{192.168.*.*}, as described in Example~\ref{example:cond}, the two state types resulting from
the \verb|if| at line 56 have \texttt{hdr.ipv4.ecn} set to {\high}.
As in Example~\ref{example:table}, these state types satisfy only the first condition
of the table's contract, which results in
assigning the type $\tbvn{1,9}{{\low}}{9}$ to \texttt{egress\_spec} and producing the state types
$\gamma^1_{int}$
and
$\gamma^2_{int}$.

Second scenario, when the destination address on line 54 does not match  \texttt{192.168.*.*}, the state type is refined for the \texttt{else} branch, producing one state type under condition $\texttt{ipv4.dstAddr} \geq \texttt{192.169.0.0}$ and one under $\texttt{ipv4.dstAddr} < \texttt{192.168.0.0}$.
For both of these state types, \texttt{hdr.ipv4.ecn} is set to $\tbvn{0}{{\low}}{2}$ by assignment on line 59.
Since in this case all branch conditions were {\low}, there is no {\high} field left in the headers.
The first of these two refined state types only satisfies the first condition of the table contract, resulting in one single (after pruning empty states) final state type, $\gamma^3_{int}$, where the packet has been forwarded to the internal network and \texttt{egress\_spec} is set to $\tbvn{1,9}{{\low}}{9}$.
The second refined state type however satisfies all the conditions of the table contract, resulting in three final state types $\gamma^4_{int}$, $\gamma^1_{ext}$, $\gamma^1_{drop}$ with \texttt{egress\_spec} being set to $\tbvn{1,9}{{\low}}{9}$, $\tbvn{10,20}{{\low}}{9}$, and $\tbvn{0}{{\low}}{9}$, respectively.
Notice that among these states, only $\gamma^3_{int}$ and $\gamma^4_{int}$ contains a {\high} fields (i.e. $\texttt{ipv4.dstAddr}$) due to the first argument returned by the table being $\tbvn{*}{{\high}}{48}$.

We finally check the sufficient condition for the output policy (\ref{par:policy_out}) and its only output policy case $\gamma_{o}$, which states that when \texttt{egress\_spec} is $\tbvn{10,20}{{\low}}{9}$ (i.e. the packet leaves the internal network) all header fields are {\low}.
Only state type $\gamma^1_{ext}$ matches the output policy case (i.e. $\cap \gamma_{o} \not= \bullet$), and this state type satisfies $\gamma^1_{ext} \sqcup \gamma^1_{ext} \sqsubseteq \gamma_{o}$ since all header fields and \texttt{egress\_spec} are {\low} in $\gamma^1_{ext}$.
All other state types (i.e. $\gamma^1_{int}$, $\gamma^2_{int}$, $\gamma^3_{int}$, $\gamma^4_{int}$, and $\gamma^1_{drop}$)
do not match the output policy condition (i.e. $\cap \gamma_{o} = \bullet$), since they do not correspond to packets sent to the external network 
(i.e. their $\texttt{egress\_spec}$ is not in range $\tbvn{10,20}{}{}$).
Therefore, we conclude that for this specific input policy case, Program \ref{prg:basic_congestion} is non-interfering wrt. the output policy case $\gamma_{o}$.

Our analysis can also detect bugs. Assume a bug on line 54 of Program \ref{prg:basic_congestion}.
To illustrate this, assume that the program is buggy and instead of checking the 8 most significant bits (i.e. [31:24]) of the \texttt{hdr.ipv4.dstAddr}, it checks the least significant bits (i.e. [7:0]).
This means that IPv4 packets with destination address is in range \texttt{*.168.*.192} would satisfy the condition of the \texttt{if} statement on line 54.
Similar to the non-buggy program, the \texttt{if} at line 56 would produce two state types with \texttt{hdr.ipv4.ecn} set to {\high}.
These state types satisfy all the conditions of the table contract.
For presentation purposes, let us focus on only one of these state types. 
Applying the table on line 61 would produce three final state types $\gamma^1_{int}$, $\gamma^1_{ext}$, $\gamma^1_{drop}$ with \texttt{egress\_spec} being set to $\tbvn{1,9}{{\low}}{9}$, $\tbvn{10,20}{{\low}}{9}$, and $\tbvn{0}{{\low}}{9}$, respectively.
Note that in all these final state types, \texttt{hdr.ipv4.ecn} is {\high}.
When checking the sufficient condition, state type $\gamma^1_{ext}$ matches the output policy case (i.e. $\cap \gamma_{o} \not= \bullet$) but it does not satisfy $\gamma^1_{ext} \sqcup \gamma^1_{ext} \sqsubseteq \gamma_{o}$, because the \texttt{hdr.ipv4.ecn} field {\high} in $\gamma^1_{ext}$ and {\low} $\gamma_{o}$.
Hence, this buggy program will be marked as interfering, highlighting the fact that some of the packets destined for the external network contain congestion information and unintentionally leak sensitive information.

The benefit of value- and path-sensitivity of our approach can also be demonstrated here.
For all other input policy cases that describe non-IPv4 packets, the \texttt{parse\_ipv4} state is not going to be visited.
A path-insensitive analysis, which merges the results of the parser transitions, would lose the information about the validity of the \texttt{hdr.ip4} header. 
This would then lead to the rejection of the program as insecure because an execution where the \texttt{parse\_ipv4} state has not been visited, yet the \texttt{if} branch on line 53 has been taken, will be considered feasible.

Our analysis, on the other hand, identifies that any execution that has not visited \texttt{parse\_ipv4} results in an invalid \texttt{hdr.ip4} header. Consequently, for all such executions, it produces a final state type where the packet is dropped, and \texttt{egress\_spec} is set to $\tbvn{0}{{\low}}{9}$.
This state type satisfies the sufficient condition, since \texttt{egress\_spec} does not intersect $\gamma_{o}(\texttt{egress\_spec})$ and is {\low}.

\section{Implementation and Evaluation}\label{sec:implementation}

To evaluate our approach we developed {\toolname}\cite{tap4sTool}, a prototype tool which implements the security type system of Section \ref{sec:type_system}. 
{\toolname} is developed in Python and uses the lark parser library \cite{larkTool} to parse P4 programs. 

{\toolname} takes as input a P4 program, an input policy, and an output policy. Initially, it parses the P4 program, generates an AST, and relies on this AST and the input policy $\tscope_{i}$ to determine the initial type of input packet fields and the standard metadata. Because the input policy is data-dependent, the result of this step can generate multiple state types ($\tscope_1, \hdots, \tscope_m$), one state type for each input interval.
{\toolname} uses each of these state types as input for implementing the type inference on the program. During this process {\toolname} occasionally interacts with a user-defined contract file to retrieve the contracts of the tables and externs. Finally, {\toolname} yields a set of final state types ($\tscope'_1, \hdots, \tscope'_n$) which are checked against an output policy, following the condition in Lemma \ref{lemma:sufficient_condition}. If this check is successful the program is deemed secure wrt. the output policy, otherwise the program is rejected as insecure.

\tightpar{Test suite} To validate our implementation we rely on a functional test suite of {\tests} programs. These programs are P4 code snippets designed to validate the support for specific functionalities of our implementation, such as extern calls, refinement, and table application. %

\tightpar{Use cases}
We evaluate {\toolname} on {\usecases} use cases, representing different real-world scenarios. The results of this evaluation are summarized in Table \ref{tab:results}. Due to space constraints, detailed descriptions of these use cases are provided in {\iffull Appendix \ref{sec:use_cases}. \else the full version of the paper \cite{fullreport}. \fi} 
We also implement and evaluate the use cases from P4BID \cite{grewal2022p4bid}. 
These use cases are described in {\iffull Appendix \ref{sec:p4bid_usecases}, \else the full version of the paper  \cite{fullreport},\fi} and their corresponding evaluation results are included in Table \ref{tab:results}.
They serve as a baseline for comparing the feasibility of {\toolname} with P4BID. On average, P4BID takes $30$ ms to analyze these programs, whereas {\toolname} takes $246$ ms. Despite the increased time, this demonstrates that {\toolname} performs the analysis with an acceptable overhead.
On the other hand, due to the data-dependent nature of our use cases and their reliance on P4-specific features such as slicing and externs, P4BID cannot reliably check these scenarios, leading to their outright rejection in all cases.

\begin{table}[t]
	\centering
	\small	
	\caption{Evaluation results}
	\label{tab:results}
	\rowcolors{4}{}{gray!10}
	\begin{tabular}{l L{24pt} L{24pt} L{24pt} M{39pt}}
		& \multicolumn{3}{c}{\textbf{Time (ms)}} & \\
		\cmidrule(ll){2-4}
		& \footnotesize\textbf{Total} & \footnotesize\textbf{Typing}  & \footnotesize\textbf{Security Check} & \footnotesize\textbf{Number of Final $\gamma$s} \\   
		\midrule[1px]
		\footnotesize Basic Congestion & \footnotesize$5930$ & \footnotesize$966$ & \footnotesize$4794$ & \footnotesize$97$ \\ 
		\footnotesize Basic Tunneling & \footnotesize$610$ & \footnotesize$157$ & \footnotesize$290$ & \footnotesize$15$ \\ 
		\footnotesize Multicast & \footnotesize$199$ & \footnotesize$16$ & \footnotesize$23$ & \footnotesize$6$ \\
		\footnotesize Firewall & \footnotesize$4560$ & \footnotesize$1015$ & \footnotesize$3378$ & \footnotesize$44$ \\
		\footnotesize MRI & \footnotesize$7646$ & \footnotesize$523$ & \footnotesize$6957$ & \footnotesize$23$ \\
		\midrule[0.5px]
		\footnotesize Data-plane Routing & \footnotesize$274$ & \footnotesize$109$ & \footnotesize$8$ & \footnotesize$12$ \\
		\footnotesize In-Network Caching & \footnotesize$261$ & \footnotesize$91$ & \footnotesize$14$ & \footnotesize$6$ \\
		\footnotesize Resource Allocation & \footnotesize$256$ & \footnotesize$87$ & \footnotesize$10$ & \footnotesize$9$ \\
		\footnotesize Network Isolation - Alice & \footnotesize$243$ & \footnotesize$27$ & \footnotesize$62$ & \footnotesize$3$ \\
		\footnotesize Network Isolation - Top & \footnotesize$242$ & \footnotesize$23$ & \footnotesize$63$ & \footnotesize$3$ \\
		\footnotesize Topology & \footnotesize$202$ & \footnotesize$40$ & \footnotesize$4$ & \footnotesize$3$ \\   
	\end{tabular}
	\vspace*{-10pt}
\end{table}

\section{Related Work}\label{sec:related_work}

\tightpar{IFC for P4}
Our work draws inspiration from P4BID \cite{grewal2022p4bid}, which adapts and implements a security type system \cite{volpano} for P4, ensuring that well-typed programs satisfy noninterference. By contrast, we show that security policies are inherently data-dependent, thus motivating the need for combining security types with interval-based abstractions.
This is essential enforcing IFC in real-world P4 programs without code modifications, as demonstrated by our 5 use cases.
Moreover, our analysis handles P4 features such as slicing and externs, while supporting the different stages of the P4 pipeline, beyond a single control block of the match-action stage. %

\tightpar{IFC policy enforcement}
Initial attempts at enforcing data-dependent policies \cite{giffin2012hails,stefan2017flexible,yang2012language,GuarnieriBSBS19,parker2019lweb} used dynamic information flow control. %
The programmer declaratively specifies data-dependent policies and delegates the enforcement to a security-enhanced runtime, thus separating the  policy specification from the code implementation.

Our approach shares similarities with static enforcement of data-dependent IFC policies such as \emph{dependent information flow types} and \emph{refinement information flow types}.
Dependent information flow types \cite{lourencco2015dependent} rely on dependent type theory and propose a dependent security type system, in which the security level of a type may depend on its runtime value.
Eichholz et al. \cite{eichholz2022dependently} introduced a dependent type system for the P4 language, called $\Pi4$, which ensures properties such as preventing the forwarding of expired packets and invalid header accesses.
Value-dependent security labels \cite{lourencco2013information} partition the security levels by indexing their labels with values, resulting in partitions that classify data at a specific level, depending on the value.
Dependent information flow types provide a natural way to express data-centric policies where the security level of a data structure's field may depend on values stored in other fields.

Later approaches have focussed on trade-offs between automation and decidability of the analysis. Liquid types \cite{vazou2013abstract,vazou2014refinement} are an expressive yet decidable refinement type system \cite{jhala2021refinement} to statically express and enforce data-dependent information flow polices.
\textsc{Lifty} \cite{polikarpova2020liquid} provides tool support for specifying data-dependent policies and uses Haskell's liquid type-checker \cite{vazou2014refinement} to verify and repair the program against these policies. %
\textsc{Storm} \cite{lehmann2021storm} is a web framework that relies on liquid types to build MVC web applications that can statically enforce data-dependent policies on databases using liquid types. %

Our interval-based security types can be seen as instantiations of refinement types and dependent types. Our  simple interval analysis appears to precisely capture the key ingredients of P4 programs, while avoiding challenges with more expressive analysis.
By contrast, the compositionality of analysis based on refinement and dependent types can result in precision loss and is too restrictive for our intended purposes (as shown in Section~\ref{sec:type:example}), due to merging types of different execution paths.
We solve this challenges by proposing a global path-sensitive analysis that avoids merging abstract state in conditionals.
We show that our simple yet tractable abstraction is sufficient to enforce the data-dependent policies while precisely modeling P4-specific constructs such as slicing, extract, and emit.

Other works use abstract interpretation in combination with IFC. De Francesco and Martini \cite{de2007instruction} implement information-flow analysis for stack-based languages like Java. They analyze the instructions an intermediate language by using abstract interpretation to abstractly execute a program on a domain of security levels. Their method is flow-sensitive but not path-sensitive.
Cortesi and Halder \cite{cortesi2014information} study information leakage in databases interacting with Hibernate Query Language (HQL). Their method uses a symbolic domain of positive propositional formulae that encodes the variable dependencies of database attributes to check information leaks.
Amtoft and Banerjee formulate termination-insensitive information-flow analysis by combining abstract interpretation and Hoare logic%
 \cite{amtoft2004information}.
They also show how this logic can be extended to form a security type system that is used to encode noninterference. This work was later extended to handle object-oriented languages in~\cite{banerjeelogic}.

\tightpar{Analysis and verification of network properties} Existing works on network analysis and verification do not focus on information flow properties.
Symbolic execution is widely used for P4 program debugging, enabling tools to explore execution paths, find bugs, and generate test cases. 
Vera~\cite{stoenescu2018debugging} uses symbolic execution to explore all possible execution paths in a P4 program, using symbolic input packets and table entries.
Vera catches bugs such as accesses fields of invalid headers and checking that the \texttt{egress\_spec} is zero for dropped packets. Additionally, it allows users to specify policies, such as; ensuring that the NAT table translates packets before reaching the output ports, and the NAT drops all packets if its entries are empty.
Recently, Scaver \cite{yao2024scaver} uses symbolic execution to verify forwarding properties of P4 programs. To address the path explosion problem, they propose multiple pruning strategies to reduce the number of explored paths.
ASSERT-P4 \cite{freire2018uncovering} combines symbolic execution with assertion checking to find bugs in P4 programs, for example, that the packets with TTL value of zero are not dropped and catching invalid fields accesses.
Tools like P4Testgen~\cite{ruffy2023p4testgen} and p4pktgen~\cite{notzli2018p4pktgen} use symbolic execution to automatically generate test packets. 
This approach supports test-driven development and guarantees the correct handling of packets by synthesizing table entries for thorough testing of P4 programs.

Abstract interpretation has also been used to verify functional properties such as packet reachability and isolation. While these properties ensure that packets reach their intended destinations, they do not address the flow of information within the network.
Alpernas et al. \cite{alpernas2018abstract} introduce an abstract interpretation algorithm for networks with stateful middleboxes (such as firewalls and load balancers). Their method abstracts the order and cardinality of packet on channels, and the correlation between middleboxes states, allowing for efficient and sound analysis.
Beckett et al. \cite{beckett2019abstract} develop ShapeShifter, which uses abstract interpretation to abstract routing algebras to verify reachability in distributed network control-planes, including  objects such as path vectors and IP addresses and methods such as path lengths, regular expressions, intervals, and ternary abstractions.

\section{Conclusion}\label{sec:conclusion}
This paper introduced a novel type system that combines security types with interval analysis to ensure noninterference in P4 programs. Our approach effectively prevents information leakages and security violations by statically analyzing data-dependent flows in the data-plane. 
The type system is both expressive and precise, minimizing overapproximation while simplifying policy specification for developers.
Additionally, our type system successfully abstracts complex elements like match-action blocks, tables, and external functions, providing a robust framework for practical security verification in programmable networks.
Our implementation, {\toolname}, demonstrated the applicability of the security type system on real-world P4 use cases without losing precision due to overapproximations. 
Future research includes adding support for declassification, advanced functionalities such as cryptographic constructs, and extending the type system to account for side channels.

\section*{Acknowledgment}

Thanks are due to  the anonymous reviewers for their insightful comments and feedback.
This work was partially supported by the Wallenberg AI, Autonomous Systems and Software Program (WASP) funded by the Knut and Alice Wallenberg Foundation, the KTH Digital Futures research program, and the Swedish Research Council (VR).

\bibliographystyle{IEEEtran}
\bibliography{bibliography_cleaned}

\iffull
\appendices

\section{Use cases}\label{sec:use_cases}

\subsection{Basic Tunneling}
Our first use case, shown in Program~\ref{prg:basic_tunneling}, outlines procedures for handling standard IPv4 packets and encapsulated tunneling packets.
The parser \texttt{MyParser} starts by extracting the Ethernet header on line 6, 
and %
for \texttt{etherType} \hexnumber{1212} (tunneled packet), 
it transitions to \texttt{parse\_myTunnel} state (line 14), 
extracts the tunnel header, checks the \texttt{proto\_id} field, 
and transitions to \texttt{parse\_ipv4} state (line 21) if an IPv4 packet is indicated.
For \texttt{etherType} \hexnumber{0800} (IPv4 packet), 
it directly transitions to \texttt{parse\_ipv4} and extracts the IPv4 header.
Once the headers are parsed, the pipeline proceeds to the \texttt{MyCtrl} control block,
starting from the apply block on line 42 which contains two if statements:
If only the IPv4 header is valid (line 43), the \texttt{ipv4\_lpm} table is applied which forward or drop the packet based on a longest prefix match (lpm) on the destination IPv4 address.
If the tunnel header is valid (line 46), the \texttt{myTunnel\_exact} table forwards the packet based on an exact match of the \texttt{myTunnel} header's \texttt{dst\_id}, using the \texttt{myTunnel\_forward} action.

\begin{figure*}[ht]
	\minipage[b]{0.5\textwidth}
	\lstinputlisting[
	language=P4,
	caption={},
	firstnumber=1]
	{./p4_examples/tunneling_parser.p4}
	\endminipage\hfill
	\minipage[b]{0.5\textwidth}
	\lstinputlisting[
	language=P4,
	caption={},
	firstnumber=last]
	{./p4_examples/tunneling_ctrl.p4}
	\endminipage\hfill
	\captionof{lstlisting}{Basic tunneling}
	\label{prg:basic_tunneling}
\end{figure*}

Since the header fields of the tunneled packets are not modified while they are forwarded (Lines 34-36), 
to keep the source MAC address of the packets within the internal networks private,
the program should not forward tunneled packets to an external network.
The input policy in this use case indicates that if the input packet the packet is tunneled (i.e., its \texttt{etherType} is \hexnumber{1212}) then the packet's \texttt{srcAddr} is {\high}.
The output policy relies on the output port the packet is sent to, and ensures that
``if the \texttt{egress\_spec} is between $10$-$511$ then the packet has left the internal network, therefore all field of the packet's headers should be {\low}."

The general behavior of table \texttt{ipv4\_lpm} is reflected in its contract 
as it makes sure the packets with \texttt{ipv4} destination address \texttt{198.*.*.*} are forwarded to the ports connected to the internal network, 
while all the other destination addresses are forwarded to ports connected to the external network.

The contract of table \texttt{myTunnel\_exact} plays a crucial role in the security of Program~\ref{prg:basic_tunneling}. 
A \texttt{correct} behavior for this table only forwards the tunneled packets to ports connected to the internal network. 
The evaluation reported in Table~\ref{tab:results} is performed under a contract that reflected this behavior, 
which results in {\toolname} accepting the program as secure.
If this table is somehow misconfigured and forwards the tunneled packet to any port connected to the external network, 
{\toolname} can capture this and flag the program as insecure.

\subsection{Multicast}
Our next use case is Program~\ref{prg:multicast} which is capable of multicasting packets to a group of ports.
Upon receiving a packet, the switch looks up its destination MAC address \texttt{dstAddr}, if it is destined to any of the hosts connected to the switch, the packet is forwarded to its destination (line 11), otherwise the switch broadcasts the packet on ports belonging to a multicast group by setting the \texttt{standard\_metadata.mcast\_grp} to $1$ (line 9). Fig. \ref{fig:multicast} illustrates the network schema of this scenario.

\begin{figure}[ht]
	\lstinputlisting[
	label=prg:multicast,
	language=P4, 
	caption={Multicast},
	firstnumber=1]
	{./p4_examples/multicast.p4}
\end{figure}
\begin{figure}[ht]
	\centering
	\begin{tikzpicture}[xscale=1, yscale=1]
		\node[inner sep=0] (h1) at (-2,1) {\includegraphics[width=6mm]{./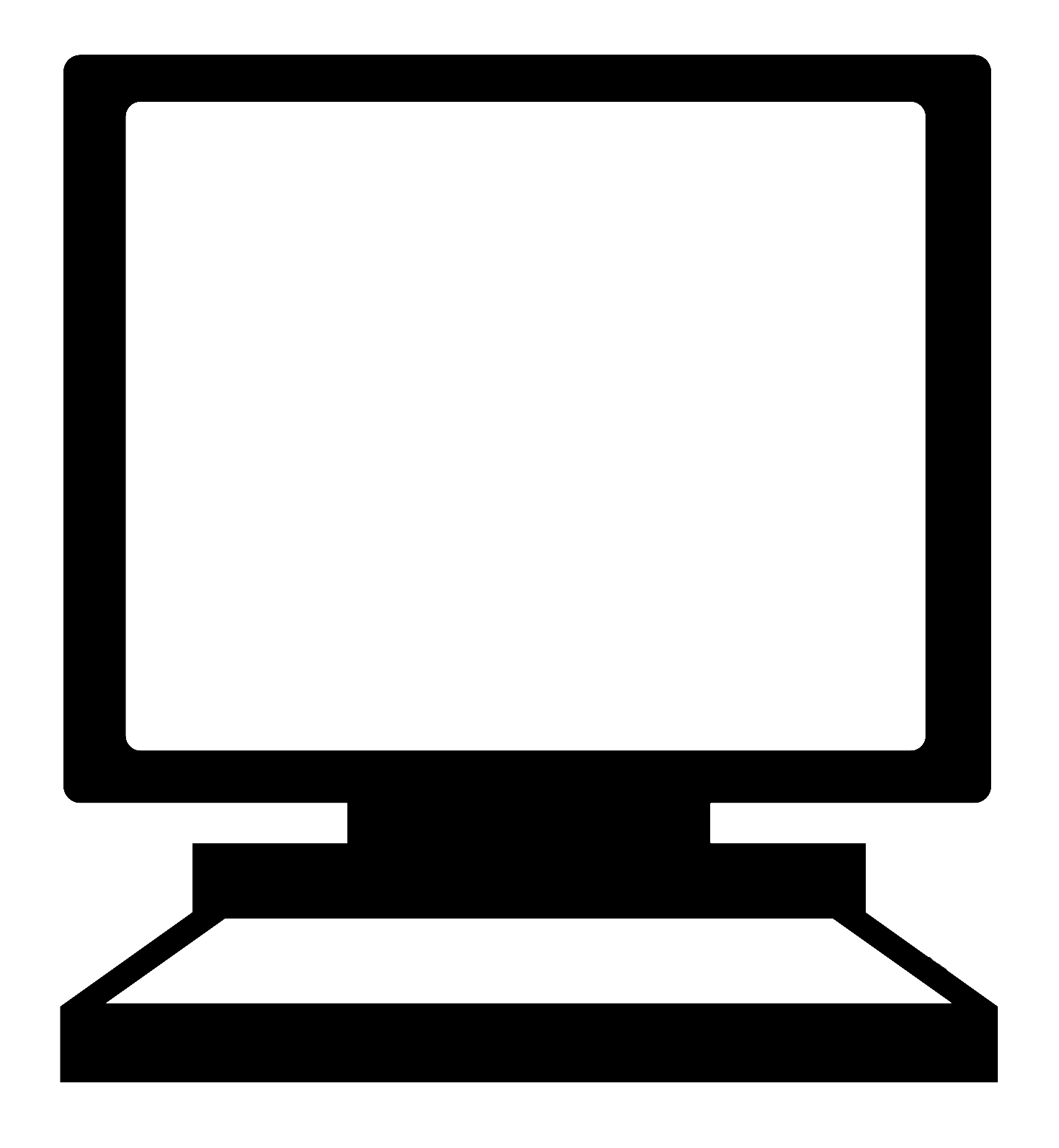}};
		\node[inner sep=0] (h2) at (-2,-1) {\includegraphics[width=6mm]{./figures/host_icon.png}};
		\node[inner sep=0] (h3) at (2,1) {\includegraphics[width=6mm]{./figures/host_icon.png}};
		\node[inner sep=0] (h4) at (2,-1) {\includegraphics[width=6mm]{./figures/host_icon.png}};
		
		\node[inner sep=3pt] (switch) at (0,0) {\includegraphics[width=10mm]{./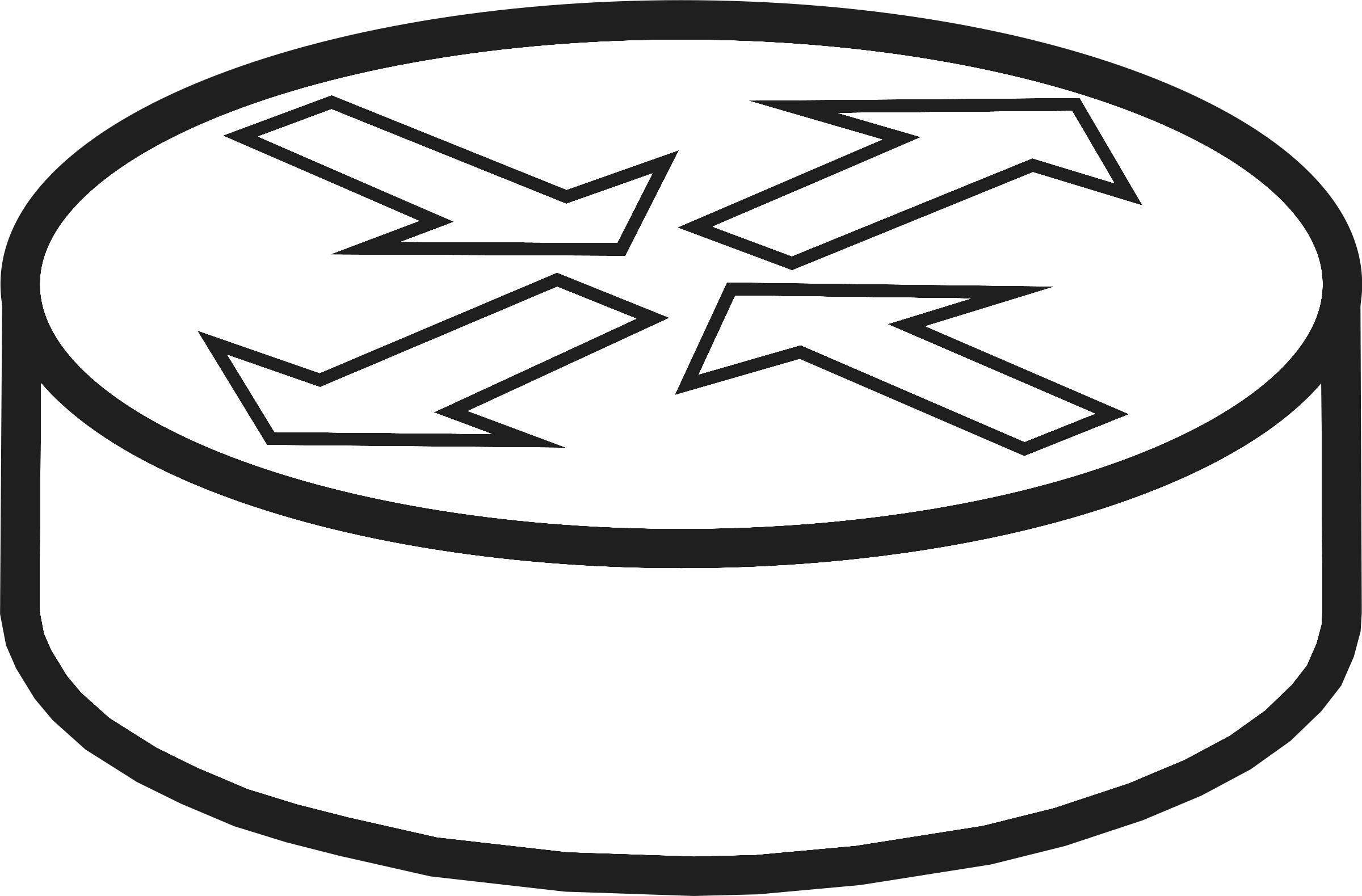}};

		\draw[<->] (h1) -- (switch) node[near end, above] {\scriptsize $1$};
		\draw[<->] (h2) -- (switch) node[near end, above] {\scriptsize $2$};
		\draw[<->] (h3) -- (switch) node[near end, above] {\scriptsize $3$};
		\draw[<->] (h4) -- (switch) node[near end, above] {\scriptsize $4$};
	\end{tikzpicture}
	\caption{Multicast schema}
	\label{fig:multicast}
\end{figure}

To implement this functionality, the program utilizes the table \texttt{mac\_lookup} which is populated by the control-plane, and contains the mac addresses and the port information needed to forward non-multicast packets.

While the broadcast packets are sent to all of the multicast ports, it is desirable to ensure the packets that are not supposed to be broadcast are indeed not broadcasted.
Our input security policy in this scenario sets the packets destined to any of the hosts connected to the switch as {\high}, while labeling the broadcast packets {\low}.
The contract of the table \texttt{mac\_lookup}'s needs to capture the essence of this use case, that is, packets with the \texttt{dstAddr} of any of the hosts need to be forwarded by invoking the \texttt{mac\_forward} action (line 11), and all the other packets need to be broadcast by invoking the \texttt{multicast} action (line 8).
To ensure the program behaves desirably, the output policy checks that all the packets send to the multicast ports (which have their \texttt{mcast\_grp} set to 1 according to line 9) are {\low}.

As illustrated in Table~\ref{tab:results}, under these policies and contracts, Program~\ref{prg:multicast} is secure. It results in $6$ final state types ($\tscope$), and takes approximately $220$ milliseconds to verify the program is policy compliance.

\subsection{Firewall}
This use case models a scenario where the switch is running the firewall Program~\ref{prg:firewall} which allows it to monitor the connections between an internal and an external network. The network schema of this scenario is presented in Fig. \ref{fig:firewall}. 

\begin{figure}[ht]
	\lstinputlisting[
	label=prg:firewall,
	language=P4, 
	caption={Firewall},
	firstnumber=1]
	{./p4_examples/firewall.p4}
\end{figure}
\begin{figure}[ht]
	\centering
	\begin{tikzpicture}[xscale=1, yscale=1]
		\node[inner sep=0] (h1) at (-2,1) {\includegraphics[width=6mm]{./figures/host_icon.png}};
		\node[inner sep=0] (h2) at (-2,0) {\includegraphics[width=6mm]{./figures/host_icon.png}};
		\node[inner sep=0] (h3) at (-2,-1) {\includegraphics[width=6mm]{./figures/host_icon.png}};
		
		\node[inner sep=3pt] (switch) at (0.3,0) {\includegraphics[width=10mm]{./figures/router_icon.png}};
		
		\node[inner sep=3pt] (internet) at (3,0) {\includegraphics[width=27mm]{./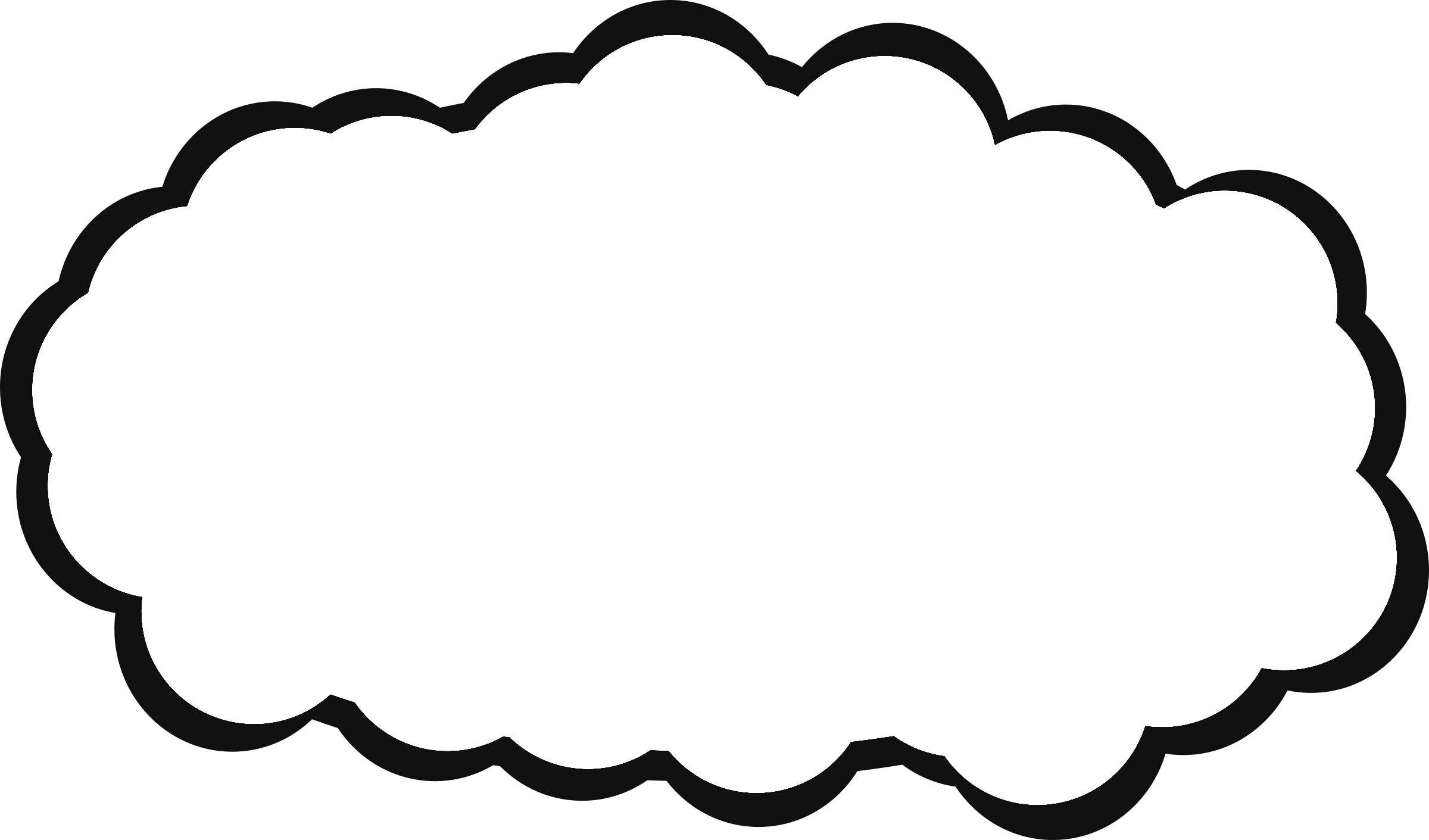}};
		\node[inner sep=0] (internetText) at (3,0) {\scriptsize External Network};
		
		\node (outPol) at (-2,1.8) {\tiny Internal Network};
		\node[draw, dashed, inner sep=6pt, rounded corners, fit=(h1) (h3)] (internalNet) {};
		
		\draw[<->] (h1) -- (switch) node[near end, above] {\scriptsize $1$};
		\draw[<->] (h2) -- (switch) node[near end, above] {\scriptsize $2$};
		\draw[<->] (h3) -- (switch) node[near end, above] {\scriptsize $3$};
		\draw[<->] (switch) -- (internet) node[near start, above] {\scriptsize $4$};
	\end{tikzpicture}
	\caption{Firewall schema}
	\label{fig:firewall}
\end{figure}

After parsing an input packet, the switch applies the \texttt{ipv4\_lpm} table (line 26), which based on the packet's IPv4 destination address forwards or drops the packet. Next, it applies the \texttt{check\_ports} table, which based on the input port number (\texttt{ingress\_port}) identifies whether the packet is coming from the external or the internal network. As depicted in Fig. \ref{fig:firewall} port $4$ is connected to the external network and ports $1-3$ are connected to the hosts of the internal network, therefor if the standard metadata's \texttt{ingress\_port} was $4$, the \texttt{check\_ports} table sets the direction to $1$ which indicates the packet is coming from the external network.

The policy of the firewall is that the hosts in the internal network are allowed to communicate with the outside networks, but the hosts in the external network are only allowed to \texttt{ssh} to the internal hosts. To this end, for all the packets with direction $1$, the program will drop all the packets whose tcp port (\texttt{hdr.tcp.srcPort}) is not $22$ (line 31).

To enforce such policy, we rely on integrity labels (instead of confidentiality) to designate which packets are allowed (trusted), and which packets are not.
The input security policy in this scenario sets the packets coming the internal network, identified by their \texttt{ingress\_port} as trusted ({\low}), while any packet coming from the external network (with \texttt{ingress\_port} $4$) is untrusted ({\high}) except when its TCP source port \texttt{srcPort} is $22$.

The contract of the \texttt{ipv4\_lpm} captures the behavior of this table by making sure the packets with \texttt{ipv4} destination address \texttt{198.*.*.*} are forwarded to the ports connected to the internal network (ports $1$ to $3$), and all the other destination addresses are forwarded to port $4$.
The contract of table \texttt{check\_ports} updates the direction by checking the \texttt{ingress\_port} of the incoming packet, setting the direction to $1$ if the \texttt{ingress\_port} was $4$.

The output policy checks that all the packets leaving the switch are trusted and {\low}.
Table~\ref{tab:results} depicts the results of {\toolname} for this use case. Under these policies and contracts Program~\ref{prg:multicast} is deemed secure. {\toolname} produces $44$ final state types, and takes approximately $6$ seconds to verify the security of the program.

\subsection{Multi-Hop Route Inspection}
Program~\ref{prg:mri} implements a simplified version of In-Band Network Telemetry, called Multi-Hop Route Inspection (MRI).
The purpose of MRI is to let the users to track the path and the length of queues that every packet travels through. To do this, the P4 program adds an ID and queue length to the header stack of every packet (line 10). Upon reaching the destination, the sequence of switch IDs shows the path the packet took, and each ID is followed by the queue length at that switch.

\begin{figure}[ht]
	\lstinputlisting[
	label=prg:mri,
	language=P4, 
	caption={Multi-Hop Route Inspection},
	firstnumber=1]
	{./p4_examples/mri.p4}
\end{figure}

After parsing the packet, the program applies the \texttt{ipv4\_lpm} table to forward the packet based on its IPv4 destination address. Afterwards, in the \emph{MyEgress} control block, the \texttt{swtrace} table (line 27), based on the port information specified in the \texttt{egress\_spec}, decides whether to add the queue length data to \texttt{swtraces} header or not. 

While it makes sense to add this information for packets that are traveling within a local network, similar to the basic congestion example (Program \ref{prg:basic_congestion}) the id of the switches and their queue length can give an external adversary information about the state of the local network. Therefore it is desirable to protect the local network by making sure that Program~\ref{prg:mri} only adds this data to the packets being forwarded within the local network.

The input policy of this scenario labels the input packet as {\low} and only marks the \texttt{deq\_qdepth} of the standard metadata as {\high}.
The contract of the \texttt{ipv4\_lpm} table forwards the packets with \texttt{ipv4} destination address \texttt{198.*.*.*} to the ports connected to the internal network, while all the other destination addresses are forwarded to ports connected to the external network.
If the \texttt{egress\_spec} indicates ports connected to the internal network, the contract of the \texttt{swtrace} table invokes the \texttt{add\_swtrace} action, adding the queue length data to \texttt{swtraces} header, otherwise \texttt{NoAction} takes place.

The output policy ensures that in all of the packets going to the external network, identified by their \texttt{hdr.ipv4.dstAddr} being anything other than \texttt{198.*.*.*}, have {\low} \texttt{switch\_t} header.

Table~\ref{tab:results} depicts the results of type checking this program with {\toolname}. It generates $23$ final state types and takes approximately $4$ seconds to verify the security of the program.
\texttt{swtrace} table is crucial for the security of this Program and if it is misconfigured and calls the \texttt{add\_swtrace} action on outgoing packets, the program will be rejected by {\toolname}.

\section{P4BID use cases}\label{sec:p4bid_usecases}
We implemented the use cases of P4BID~\cite{grewal2022p4bid} in {\toolname} to ensure that it can correctly evaluate all of their use cases and that its verdict is inline with the results reported in~\cite{grewal2022p4bid}.
These use cases and their corresponding policies are simpler than our own use cases because P4BID does not support data-dependent policies and hence only labels program variables without taking the value of the packet header fields into account. 
The results of this evaluation is depicted in Table~\ref{tab:results}.

\textbf{Dataplane Routing}
\emph{Routing} is the process of determining how to send a packet from its source to its destination. In traditional networks the control-plane is responsible for routing, but recently, Subramanian et al. \cite{subramanian2021d2r} proposed an approach to implement the routing in the data plane. 
Their approach uses pre-loaded information about the network topology and link failures to perform a breadth-first search (BFS) and find a path to the destination. 

In this scenario we do not care about the details of this BFS search algorithm, but we want to make sure that the sensitive information about the private network (such as the number of hops in the network) do not leak to an external network. 
Similar to P4BID~\cite{grewal2022p4bid} labeling the number of hops as $\high$ will result in the program being rejected by {\toolname} because the forwarding action uses this information to update the packet's priority field, which results in an indirect leakage of sensitive information.

\textbf{In-Network Caching}
In order to enable the fast retrieval of popular items, switches keep track of the frequently requested items in a cache and only query the controller when an item cannot be found in the cache. 
Similar to any cache system, the result of a query is the same regardless of where the item is stored. 
However, from a security perspective, an observer can potentially detect variations in item retrieval time. This timing side-channel can potentially allow an adversary to learn about the state of the system.

To model the cache in this scenario, we mark the request query as sensitive, because whether this query is a \emph{hit} or a \emph{miss} leaks information about the internal state of the switch.
The variable marking the state of the result in the cache (\texttt{response.hit}) is not sensitive because it is considered observable by the adversary.
Similar to P4BID~\cite{grewal2022p4bid} this labeling will result in the program being rejected by {\toolname} because a sensitive query can indirectly affect the value of \texttt{response.hit}, resulting in the leakage of sensitive information.

\textbf{Resource Allocation}
This use case models a simple resource allocation program, where the switch increases the priority of the packets belonging to latency-sensitive applications.
The application ID in the packet's header will indicate which application the packet belongs to. A table will matches on this application ID and
sets the packet's priority by modifying the \texttt{priority} field of the \texttt{ipv4} header.

The problem is that a malicious client can manipulate the application ID to increase the priority of their packets. 
We rely on integrity labels (instead of confidentiality) to address this issue, that is, the application ID will be labeled as untrusted ({\high}) and the \texttt{ipv4} \texttt{priority} field will be labeled as trusted ({\low}). 
Since the program sets the \texttt{priority} field based on the value of the application ID, the \texttt{priority} field will also be labeled untrusted by the type system, which results in the rejection of the program.

\textbf{Network Isolation}
This use case models a private network used by two clients, Alice and Bob. Each client runs its own P4 program, but the packets sent between these two clients have a shared header with separate fields for Alice and Bob. In this scenario we want to make sure that Alice does not touch Bob’s fields, and vice versa.

The isolation property in this example can be modeled by a four-point lattice with labels $\{A, B, \top, \bot \}$, where $A$ is the label of Alice's data, $B$ is for Bob's data, $\top$ is the top element confidential to both Alice and Bob, and $\bot$ is public. 
By IFC, data from level $\ell$ can flow to $\ell'$ if and only if $\ell \sqsubseteq \ell'$.

In this use case, we consider Alice's program in which she updates the fields belonging to herself. Additionally, we use label $\top$ to label the telemetry data which can be updated by Alice's program, but she cannot leak information from $\top$-labeled data into her own fields.

Since {\toolname} only support simple lattice with two levels, we type check this program twice, with two policies. First where Alice is {\high} and everything else is {\low}, and a second time where $\top$ is {\high} and everything else is {\low}. The same process can be repeated for Bob's program as well.
This program is accepted by {\toolname}, and the results for both cases are reported in Table~\ref{tab:results}.

\textbf{Topology}
This use case is a P4 program which processes packets as they enters a local network.
The incoming packets refers to a virtual address which needs to be translated to a physical address as the packet is routed in the local network.

Our security policy dictates that the routing details of this local network should not leak into fields that are visible when the packet leaves the network.
As such, the program relies on a separate header to store the local information, and as long as the packet is inside the local network, the switches do not modify the \texttt{ipv4} and Ethernet headers, instead, they parse, use, and update this local header with the routing information.

As explained in P4BID~\cite{grewal2022p4bid}, this program has a bug where it incorrectly stores the local \texttt{ttl} in the \texttt{ipv4} header instead of the local header. Marking the local fields as {\high}, {\toolname} flags this program as insecure and facilitates the process of catching and fixing these types of errors.

\section{State Type Operations}\label{sec:type_stm_op_app}
A type $\type'$ is considered an overapproximation of type $\type$ (written as ${\type \lesstype \type'}$) iff for every value ${v : \type}$ it holds that $v : \type'$ and $\getlabel(\type) \sqsubseteq \getlabel(\type')$.
We denote two non-overlapping lvalues by $\lval \hash \lval'$, which means if $\lval$ is a record $\lval'$ is not one of its fields, and if $\lval$ is a bitvector $\lval'$ is not one of its sub-slices.

We present the properties that operators over the state types must guarantee:
\begin{itemize}
	\item $\tscope[\lval \mapsto \type] = \tscope'$ indicates updating the
	type of $\lval$, which can be a part of a variable, in state type $\tscope$.
	This operator guarantees that
	$\tscope' \vdash \lval : \type$ and
	for every lvalue $\lval' \hash \lval$ such that
	$\tscope \vdash \lval' : \type'$
	and $\tscope' \vdash \lval' : \type''$ then
	$\type' \lesstype \type''$.
	
	\item $\tscope \concat \tscope'$ updates $\tscope$ such that for every variable in the domain of $\tscope'$, the type of that variable in $\tscope$ is updated to match $\tscope'$.
	
	\item $\refine{\tscope}{e} = \tscope'$
	returns an overapproximation of states that satisfy the abstraction of
	$\tscope$ and the predicate $e$.
	It guarantees that if $\tscope \vdash \mem$ and $e$ evaluates
	to \emph{true} in $\mem$ then
	$\tscope' \vdash \mem$ and for every $\lval$,
	if $\tscope \vdash \lval : \type$ and $\tscope' \vdash \lval : \type'$
	then $\getlabel(\type) \sqsubseteq \getlabel(\type')$
	
	\item $\join(\tscope_{1}, \tscope_{2}) = \tscope_{3}$
	returns an overapproximation of $\tscope_{1}$, whose labels are at least as restrictive as $\tscope_{1}$ \emph{and} $\tscope_{2}$.
	This operator guarantees that if $\tscope_{1} \vdash \mem$ then $\tscope_{3} \vdash \mem$ and for every $\lval$, then $\getlabel(\tscope_{1}(\lval)) \sqcup \getlabel(\tscope_{2}(\lval)) \sqsubseteq \getlabel(\tscope_{3}(\lval))$.
\end{itemize}

\section{Proofs and guarantees}\label{sec:proofs_guarantees}

We use $T \vdash E$ to represent the abstraction and labeling soundness guarantees for externs and tables, and in the following we assume that this condition holds.

\subsection{Sufficient condition proof}

\SufficientCondition*

\begin{proof}
	First, we prove $\tscope_{o} \vdash \mem_{2}$. 
	By definition, it is sufficient to prove that for every $\lval$, $\mem_{2}(\lval) : \tscope_{o}(\lval)$.\\
	From the definition of $\mem_{1}\consistent{\tscope_{1}}{\tscope_{2}} \mem_{2}$ we know that  
	$\tscope_{1} \vdash \mem_{1}$ and $\tscope_{2} \vdash \mem_{2}$ hold. 
	Since $\tscope_{o} \vdash \mem_{1}$ and $\tscope_{1} \vdash \mem_{1}$, then trivially
	$\tscope_{1} \cap \tscope_{o} \neq \bullet$ holds.
	From the second hypothesis of the sufficient condition, two cases are possible.
	\begin{enumerate}
		\item ${\tscope_{2}(\lval) \subseteq \tscope_{o}(\lval)}$.
	Since $\tscope_{2} \vdash \mem_{2}$ hold, 
	then by definition $\mem_{2}(\lval) : \tscope_{2}(\lval)$ holds, 
	therefore trivially $\mem_{2}(\lval) : \tscope_{o}(\lval)$ holds.
	
		\item ${\tscope_{1} \sqcup \tscope_{2}(\lval) = {\low}}$.
	Since $\mem_{1} \consistent{\tscope_{1}}{\tscope_{2}} \mem_{2}$ then, 
	indeed $\mem_{1}(\lval) = \mem_{2}(\lval)$ holds.
	By definition of $\tscope_o \vdash \mem_{1}$, 
	we know that $\mem_{1}(\lval) : \tscope_{o}(\lval)$ also holds. 
	Therefore, we can trivially show that
	$\mem_{2}(\lval) : \tscope_{o}(\lval)$.
	\end{enumerate}

	Second, we prove $\mem_1 \lequiv{\tscope_{o}} \mem_2$.
	Previously, we showed that $\tscope_{o} \vdash \mem_{2}$ holds, 
	and since $\tscope_{2} \vdash \mem_{2}$, 
	then trivially $\tscope_{2} \cap \tscope_{o} \neq \bullet$ holds. 
	From the first hypothesis of the sufficient condition, 
	we can show that $\tscope_{1} \sqcup \tscope_{2} \sqsubseteq \tscope_{o}$. 
	By definition of $\mem_{1}\consistent{\tscope_{1}}{\tscope_{2}} \mem_{2}$,
	we know that $\mem_1 \lequiv{\tscope_{1}\sqcup\tscope_{2}} \mem_2$ holds.
	Therefore, trivially $\mem_1 \lequiv{\tscope_{o}} \mem_2$. 
\end{proof}

\subsection{Hypothesis for refine}

\begin{hypothesis}[Interval typedness - boolean expressions' refinement]\label{hyp:interval_typedness_boolean_expressions_refinement}
	\begin{align*}
		\tscope \vdash e : \type \band \tscope &\vdash \mem \ \implies \\
		\Big( \expsem{\mem}{e}&={true} \ \implies \\
		&(\refine{\tscope}{e}) \vdash \mem   \ \band \ \expsem{\mem}{e}={false} \ \implies \\
		&(\refine{\tscope}{\neg e}) \vdash \mem \Big)
	\end{align*}
\end{hypothesis}

\begin{hypothesis}[Interval typedness - select expressions' refinement]\label{hyp:interval_typedness_select_expressions_refinement}
	\begin{align*}
		\tscope \vdash e:\type &\band \expsem{\mem}{e}={\val} \band \tscope \vdash \mem \\
		\band i = & \ \text{min}\{ i .\ \val = \val_i \lor i = n + 1\} \implies \\
		&\listindexed{\tscope}{n}, \tscope_{\text{n}+1} = (\refine{\tscope}{e=\val_i}) \ \implies \\
		&\tscope_i \vdash \mem
	\end{align*}
\end{hypothesis}

\begin{hypothesis}[Interval typedness - externs and tables refinement]\label{hyp:interval_typedness_externs_and_tables_refinement}
	\begin{align*}
		\tscope \vdash \mem  \band  \phi(\mem) \ \implies \ \refine{\tscope}{\phi}\vdash \mem 
	\end{align*}
\end{hypothesis}

\begin{hypothesis}[Label typedness - boolean expressions' refinement]\label{hyp:label_typedness_boolean_expressions_refinement}
	\begin{align*}
		\tscope \vdash e : \type_1 & \band \tscope \vdash e : \type_2 \\
		\band \getlabel(\type_1) = {\low} &\band \getlabel(\type_2) = {\low} \\ 
		&\band \mem_1 \lequiv{\tscope_1\sqcup\tscope_2} \mem_2 \ \implies \ \\
		\Big( (\refine{\tscope_1}{e} = \tscope_1' &\band \refine{\tscope_2}{e} = \tscope_2' \\
		\band \expsem{\mem_1}{e}={true} &\band \expsem{\mem_2}{e}={true} \implies \mem_1 \lequiv{\tscope1'\sqcup\tscope2'} \mem_2 ) \\
		\band \big(\refine{\tscope_1}{\neg e} = \tscope_1' &\band \refine{\tscope_2}{\neg e} = \tscope_2' \\
		\band \expsem{\mem_1}{\neg e}={true} &\band \expsem{\mem_2}{\neg e}={true} \implies \\
		&\mem_1 \lequiv{\tscope1'\sqcup\tscope2'} \mem_2 \big) \Big)
	\end{align*}
\end{hypothesis}

\begin{hypothesis}[Label typedness - select expressions' refinement]
	\begin{align*}
		\tscope \vdash e: \type \band \getlabel(\type) = {\low} &\band \mem_1 \lequiv{\tscope} \mem_2 \ \implies \ \\
		\Big( {\listindexed{\tscope}{n}, \tscope_{\text{n}+1}} = {\refine{\tscope}{e=\val_i}} &\band \expsem{\mem_1}{e}={\val} \implies \\
		&\forall j \leq n+1 . \ {\mem_1 \lequiv{\tscope_j} \mem_2} \Big)
	\end{align*}
\end{hypothesis}

\subsection{Lemmas}

\begin{lemma}[Expression reduction preserves the type]\label{lemma:expression_reduction_preserves_type}
	\begin{align*}
		\tscope \vdash \mem \band \tscope \vdash e : \type  \band \expsem{\mem}{e}={\val} \ \implies \ \val : \type
	\end{align*}
\end{lemma}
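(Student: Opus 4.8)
The plan is to proceed by structural induction on the typing derivation $\tscope \vdash e : \type$ (equivalently, on the structure of $e$), showing in each case that the concrete value $\val = \expsem{\mem}{e}$ lies within the interval abstraction recorded in $\type$. Throughout I use the hypothesis $\tscope \vdash \mem$, which by definition guarantees $\expsem{\mem}{\lval} : \tscope(\lval)$ for every $\lval$ in the domain of $\mem$. Note that only the interval component of each type is relevant here, since the judgment $\val : \type$ ignores security labels entirely.

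For the base cases: when $e$ is a bitvector literal $\bitv$, the (omitted) value rule assigns it a singleton interval containing $\bitv$, so $\val : \type$ is immediate; when $e$ is a variable $x$, the variable rule yields $\type = \tscope(x)$ and the semantics yields $\val = \expsem{\mem}{x}$, so $\val : \type$ follows directly from $\tscope \vdash \mem$. For record construction $\structval{f}{e}{n}$ and field access $e.f$, I appeal to the componentwise definition of record typing: the induction hypothesis applied to each sub-expression $e_i$ gives $\expsem{\mem}{e_i} : \type_i$, and these assemble into (respectively project out of) the record type $\typestrnew{f}{\type}{n}$.

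The interesting inductive cases are the arithmetic and slicing operations. For \textsc{T-SingleSliceBs}, viewing each single-slice bitvector as the integer it denotes, the induction hypothesis gives $\expsem{\mem}{e_1} \in I_1$ and $\expsem{\mem}{e_2} \in I_2$; the conclusion then reduces to the soundness of the abstract interval operators, i.e. the standard property that $x \in I_1 \band y \in I_2 \implies x \oplus y \in I_1 \oplus I_2$, and symmetrically for $\ominus$ and the comparison $\otimes$. I take this soundness of the reused abstract domain operations as a given. For slicing, rule \textsc{T-NonAlignedSlice} is trivial, since its result type carries the complete interval $\langle * \rangle = \langle 0, 2^{b-a}-1\rangle$, which by construction contains every bitvector of length $b-a$; thus $\val : \type$ holds unconditionally. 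Rule \textsc{T-AlignedSlice} is the one needing care: here the bit range coincides with slice boundaries, so the extracted sub-bitvector is exactly the concatenation of a contiguous sub-sequence of the slices of $\expsem{\mem}{e}$, and the induction hypothesis guarantees each such slice's integer value lies in its recorded interval. I must verify by index arithmetic on the cumulative offsets $\sumindex{j=1}{b} i_j$ and $\sumindex{j=1}{a} i_j$ that these ranges line up as prescribed, so that the slice type assigned to the result is precisely the one satisfied by the extracted value.

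The main obstacle I anticipate is the aligned-slice case: it is conceptually routine but requires careful bookkeeping of cumulative bit-offsets to show that slicing a bitvector at slice boundaries commutes with projecting the corresponding sub-list of interval-annotated slices, and hence preserves interval typedness slice-by-slice. All remaining obligations rest on the assumed soundness of the abstract interval operators, which I do not reprove here.
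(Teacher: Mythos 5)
Your proposal targets a statement that the paper itself never proves: Lemma~\ref{lemma:expression_reduction_preserves_type} is one of several auxiliary lemmas stated without proof in the appendix and is simply invoked as a black box inside the inductive cases of the soundness-of-abstraction and soundness-of-labeling theorems (e.g.\ in the assignment, call, and extern cases). So there is no paper proof to compare against; judged on its own merits, your structural induction on the typing derivation is the right approach, and your case split is complete with respect to the expression grammar and the stated rules (\textsc{T-SingleSliceBs}, \textsc{T-AlignedSlice}, \textsc{T-NonAlignedSlice}, plus the omitted value, variable, and record rules). Your supporting observations also match the paper's intent: the judgment $\val : \type$ indeed depends only on intervals and not on labels, the non-aligned case is trivial because $\langle * \rangle$ contains every value of the given width, and the paper explicitly describes \textsc{T-SingleSliceBs} as ``reuse of standard interval analysis,'' so discharging those cases via soundness of the abstract interval operators is exactly what is intended.

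Two minor caveats, neither of which undermines the argument. First, your base cases rest on an assumption about rules the paper omits (a singleton interval for literals, $\type = \tscope(x)$ for variables); this is the natural reading, but it is an assumption, not something you can cite. Second, in the aligned-slice case the paper's rule as written has an inclusive/exclusive ambiguity: the extracted bit range runs between the cumulative offsets $\textstyle\sum_{j \le a} i_j$ and $\textstyle\sum_{j \le b} i_j$, while the claimed result type lists slices $a$ through $b$, which do not line up under either a purely inclusive or purely exclusive convention. Your planned ``index arithmetic'' step must therefore first fix a consistent slicing convention (and, strictly speaking, repair the rule's indexing to match it) before the slice-by-slice preservation argument goes through; as stated, the bookkeeping you anticipate is not merely tedious but is where the paper's own rule needs a correction.
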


\begin{lemma}[lvalue updates preserves the type]\label{lemma:lval updates_preserves_type}
	\begin{align*}
		\tscope \vdash \mem  \band  \val: \type \ \implies \ \tscope[\lval \mapsto \type] \vdash \mem[\lval \mapsto \val]
	\end{align*}
\end{lemma}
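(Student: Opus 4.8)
The plan is to unfold the typing judgment in the conclusion and verify it component by component. Abbreviate $\tscope' = \tscope[\lval \mapsto \type]$ and $\mem' = \mem[\lval \mapsto \val]$. By the definition of $\tscope' \vdash \mem'$ it suffices to show $\expsem{\mem'}{x} : \tscope'(x)$ for every variable $x$ in the (unchanged) domain of $\mem'$. Let $y$ be the root variable of $\lval$. For any $x \neq y$ the update touches neither the state nor the state type, so $\expsem{\mem'}{x} = \expsem{\mem}{x}$ and $\tscope'(x) = \tscope(x)$; the hypothesis $\tscope \vdash \mem$ then yields the goal immediately. The whole argument therefore reduces to the single variable $y$, and crucially does not require an induction on the structure of $\lval$: the two guarantees of the update operator stated in the appendix are exactly what is needed.

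For $y$ I would verify $\mem'(y) : \tscope'(y)$ slice by slice (or field by field). The first guarantee of the update operator, $\tscope' \vdash \lval : \type$, forces the new slicing of $\tscope'(y)$ to be aligned with the boundaries of $\lval$, so every slice of $\tscope'(y)$ lies either entirely inside $\lval$ or entirely disjoint from it, with no partially overlapping slice to consider. The slices inside $\lval$ collectively constitute the type $\type$, and since the memory update gives $\expsem{\mem'}{\lval} = \val$ while the hypothesis supplies $\val : \type$, these slices are in bound.

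The slices disjoint from $\lval$ are handled by the second update guarantee together with soundness of $(l)$value typing. Viewing such a slice as an lvalue $\lval' \hash \lval$, the update leaves it untouched, so $\expsem{\mem'}{\lval'} = \expsem{\mem}{\lval'}$. Writing $\type' = \tscope(\lval')$ and $\type'' = \tscope'(\lval')$, Lemma~\ref{lemma:expression_reduction_preserves_type} applied to $\lval'$ as an expression under $\tscope \vdash \mem$ gives $\expsem{\mem}{\lval'} : \type'$, while the second guarantee gives $\type' \lesstype \type''$. By the definition of $\lesstype$ (every value typed by $\type'$ is typed by $\type''$), the untouched value remains in bound, i.e.\ $\expsem{\mem'}{\lval'} : \tscope'(\lval')$. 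Collecting the two kinds of slices yields $\mem'(y) : \tscope'(y)$ and closes the goal.

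The main obstacle is precisely this realignment of the slicing performed by the update, which may coarsen the intervals of the untouched portions and lose precision. What makes the disjoint case go through is the overapproximation guarantee: since the new type of an untouched slice is never more precise than its old type, a value satisfying the old interval still satisfies the new one. Note that only the interval half of $\lesstype$ is needed here, as this lemma concerns interval typedness ($\cdot : \cdot$) and is independent of security labels; the label-monotonicity half of $\lesstype$ is instead consumed by the companion labeling-soundness arguments.
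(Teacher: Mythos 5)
You should know at the outset that the paper contains no proof of this lemma: it is stated bare among the auxiliary lemmas of Appendix~\ref{sec:proofs_guarantees} and is then simply invoked inside the soundness proofs (in the assignment, function-call, extern, and table cases). There is therefore no proof of the paper's to compare yours against; what follows judges your argument on its own terms.

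Your decomposition is the natural one and assembles exactly the ingredients the paper makes available: untouched variables are discharged by $\tscope \vdash \mem$; the slices of the root variable $y$ lying inside $\lval$ are discharged by the hypothesis $\val : \type$ together with guarantee (1) of the update operator (Appendix~\ref{sec:type_stm_op_app}); the slices disjoint from $\lval$ are discharged by guarantee (2), Lemma~\ref{lemma:expression_reduction_preserves_type}, and the interval half of $\lesstype$ — and you are right that the label half is irrelevant here. The one step I would push back on is your claim that guarantee (1), $\tscope[\lval \mapsto \type] \vdash \lval : \type$, \emph{forces} the new slicing of $y$ to align with the boundaries of $\lval$. It does not. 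If $\type$ is a single complete-interval slice, a pathological operator could return for $y$ a single slice spanning the whole variable: e.g.\ after updating $y[1:0]$ with $\type = \tbvn{*}{\low}{2}$ in a four-bit variable, the result $\tscope'(y) = \tbvn{0}{\low}{4}$ still satisfies guarantee (1) (by \textsc{T-NonAlignedSlice} the derived type of $y[1:0]$ is exactly $\tbvn{*}{\low}{2}$) and also guarantee (2) (every lvalue disjoint from $\lval$ is non-aligned with the single slice, hence gets the complete interval, which overapproximates any type with a $\low$ label), yet $\mem'(y) : \tscope'(y)$ fails whenever $\mem'(y) \neq 0$. So the lemma is genuinely not derivable from the two stated guarantees alone; alignment of the output slicing with $\lval$ must be taken as an additional structural property of the operator's (unstated) definition — one the paper's own example exhibits, since updating $x[3:3]$ in $\{x \mapsto \tbvn{2,8}{\low}{4}\}$ yields the slicing $\tbvn{0}{\high}{1} \cdot \tbvn{*}{\low}{3}$. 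Your proof is correct once that property is assumed explicitly; just state it as an assumption about the operator rather than as a consequence of guarantee (1).
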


\begin{lemma}[Expressions have types same as their values]\label{lemma:expressions_have_types_same_as_their_values}
	\begin{align*}
		\tscope \vdash \mem  \band  \tscope \vdash e : \type \ \implies \ \exists \val . \ \expsem{\mem}{e}={\val} \band \val : \type
	\end{align*}
\end{lemma}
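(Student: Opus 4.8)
The plan is to prove both conjuncts — that $e$ evaluates to some value under $\mem$, and that this value is typed by $\type$ — simultaneously by structural induction on the expression $e$ (equivalently, on the derivation of $\tscope \vdash e : \type$, which is syntax-directed). Since Lemma~\ref{lemma:expression_reduction_preserves_type} already establishes that \emph{any} evaluation result of a well-typed expression is typed by $\type$, the real content lies in the \emph{existence} part: showing that evaluation never gets stuck. At each node I would use the typing derivation to guarantee that the subexpressions evaluate to values of the right \emph{shape}, so that the enclosing operation is defined, and then appeal to Lemma~\ref{lemma:expression_reduction_preserves_type} for the final typing claim (or, equivalently, carry the typing through the induction directly).

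For the base cases: when $e$ is a value $v$ it evaluates to itself and is typed by the standard (omitted) value rule; when $e$ is a variable $x$, the variable rule forces $x \in \mathrm{dom}(\tscope)$ with $\type = \tscope(x)$, and since the program is well-typed and $\tscope \vdash \mem$, the variable also lies in $\mathrm{dom}(\mem)$, whence $\mem(x) : \tscope(x)$ follows directly from the definition of $\tscope \vdash \mem$. For the operator cases $\oplus$, $\ominus$ and $\otimes$, the induction hypothesis on the operands yields concrete bitvectors of the declared single-slice length; the corresponding concrete bitvector operation is total, and soundness of the interval operations used in \textsc{T-SingleSliceBs} (that the abstract $I_1 \oplus I_2$ overapproximates the concrete result, and likewise for $\ominus$ and $\otimes$) delivers the typing. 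Record construction is immediate from the induction hypothesis applied to each field expression.

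The main obstacle is the pair of \emph{potentially stuck} cases, field access $e_1.f$ and slicing $e_1[\bitv:\bitv']$. For field access I would invoke the induction hypothesis to evaluate $e_1$ to some $\val_1$ and then Lemma~\ref{lemma:expression_reduction_preserves_type} to learn that $\val_1$ inhabits the record type assigned to $e_1$; this forces $\val_1$ to be a record actually carrying the field $f$, so the projection is defined and its value has the corresponding component type. For slicing, the same combination shows $\val_1$ is a bitvector of the width recorded in its type, hence long enough for the requested indices, so the slice is defined; the resulting value is typed by \textsc{T-AlignedSlice} when the cut respects the slice boundaries, and otherwise by \textsc{T-NonAlignedSlice}, whose conclusion $\tbvn{*}{\bigsqcup \lbl_{i_j}}{b-a}$ types \emph{every} bitvector of that width and therefore holds trivially. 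The only global assumption I rely on beyond the stated hypotheses is that the P4 program is well-typed, which is precisely what rules out ill-shaped lookups and mismatched projections in the variable and field-access cases.
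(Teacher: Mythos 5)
You are proving a statement for which the paper itself offers no proof: Lemma~\ref{lemma:expressions_have_types_same_as_their_values} is stated bare in the appendix, alongside the other auxiliary lemmas and the refinement hypotheses, and is simply invoked inside the soundness proofs (e.g.\ Goal~1 of the assignment case of abstraction soundness). So there is no paper proof to compare against; the assessment below is of your argument on its own terms.

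Your argument is sound, and its decomposition is the natural one. Splitting the claim into a progress part (evaluation is defined) and a preservation part, and discharging the latter by Lemma~\ref{lemma:expression_reduction_preserves_type}, is consistent with how the paper organizes its auxiliaries: preservation-given-evaluation is already a separately stated lemma, so the only new content is existence, which your structural induction establishes. Your case analysis covers the expression grammar of Fig.~\ref{fig:syntax_grammar}, and the two places where evaluation could genuinely get stuck---field projection and slicing---are correctly singled out and resolved by combining the induction hypothesis with preservation to pin down the shape of the intermediate value. Two caveats are worth recording. First, in the variable case you do not need program well-typedness at all: the paper's reading of $\tscope \vdash \mem$ (as unfolded in the assignment case of the abstraction-soundness proof) already includes $\mathrm{domain}(\tscope) = \mathrm{domain}(\mem)$, which gives definedness of $\mem(x)$ directly. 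Second, for \textsc{T-NonAlignedSlice} the typing rule as stated places no constraint forcing the requested indices $b:a$ to lie within the width of the underlying bitvector, so your appeal to the blanket HOL4P4 well-typedness assumption is doing real work there; this is legitimate---the paper's own theorem proofs open with exactly that assumption (``the program is well typed and does not get stuck'')---but it means the lemma is established relative to that global assumption rather than from $\tscope \vdash e : \type$ alone, and your proof would be strengthened by saying so explicitly in that case rather than only in the closing remark.
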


\begin{lemma}[Join does not modify the intervals]\label{lemma:join_does_not_modify_the_intervals}
	\begin{align*}
		\tscope \vdash \mem  \band  \tscope \in \Gamma_1 & \implies \\
		\forall \Gamma_2 ... \Gamma_n . \ & \exists \tscope' \in \text{join} (\Gamma_1 \cup \Gamma_2 \cup ... \cup \Gamma_n) . \\
		&\tscope' \vdash \mem \band \tscope \sqsubseteq \tscope'
	\end{align*}
\end{lemma}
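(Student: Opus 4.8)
The plan is to reduce the statement to a single monotonicity invariant about the fold that defines $\join(\tscope, \Gamma)$ and then instantiate it. Set $\Gamma = \Gamma_1 \cup \cdots \cup \Gamma_n$. Since $\tscope \in \Gamma_1 \subseteq \Gamma$, the definition $\join(\Gamma) = \{\join(\tscope_0, \Gamma) \mid \tscope_0 \in \Gamma\}$ guarantees that the element obtained by taking the bound variable to be $\tscope$ itself, namely $\tscope' := \join(\tscope, \Gamma)$, is a member of $\join(\Gamma)$. It therefore suffices to show that this particular $\tscope'$ satisfies $\tscope' \vdash \mem$ and $\tscope \sqsubseteq \tscope'$, where $\sqsubseteq$ on state types is read as the pointwise lifting of the type overapproximation order $\lesstype$ (interval inclusion together with label raising).

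The heart of the argument is a helper claim, proved by induction on $|\Gamma'|$: for every state type $\tscope_0$ and every finite set $\Gamma'$ of state types, if $\tscope_0 \vdash \mem$ then $\join(\tscope_0, \Gamma') \vdash \mem$ and $\tscope_0 \sqsubseteq \join(\tscope_0, \Gamma')$. The base case $\Gamma' = \varnothing$ is immediate since $\join(\tscope_0, \varnothing) = \tscope_0$. For the inductive step, write $\Gamma' = \{\tscope''\} \cup \Gamma''$ so that $\join(\tscope_0, \Gamma') = \join(\join(\tscope_0, \tscope''), \Gamma'')$. I invoke the binary-join guarantee on $\tscope_a := \join(\tscope_0, \tscope'')$: because $\join(\tscope_0, \tscope'')$ returns an overapproximation of its \emph{first} argument, we obtain $\tscope_0 \vdash \mem \implies \tscope_a \vdash \mem$ (the intervals of the first argument are preserved) and $\tscope_0 \sqsubseteq \tscope_a$ (per-lvalue interval inclusion together with $\getlabel(\tscope_0(\lval)) \sqsubseteq \getlabel(\tscope_a(\lval))$). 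Since $\tscope_a \vdash \mem$, the induction hypothesis applies to $\tscope_a$ and $\Gamma''$, yielding $\join(\tscope_a, \Gamma'') \vdash \mem$ and $\tscope_a \sqsubseteq \join(\tscope_a, \Gamma'')$. Transitivity of $\sqsubseteq$ (inherited from transitivity of interval inclusion and of the label order inside $\lesstype$) then gives $\tscope_0 \sqsubseteq \join(\tscope_a, \Gamma'') = \join(\tscope_0, \Gamma')$, completing the step. Applying the helper with $\tscope_0 = \tscope$ and $\Gamma' = \Gamma$, using the hypothesis $\tscope \vdash \mem$, discharges both obligations for $\tscope' = \join(\tscope, \Gamma)$.

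I would flag two points that need care rather than hard obstacles. First, the recursive clause $\join(\tscope_0, \{\tscope''\} \cup \Gamma'')$ depends a priori on which element $\tscope''$ is peeled off first, so $\join(\tscope_0, \Gamma')$ is only well-defined up to that choice; the proof sidesteps this by establishing the two required properties for an arbitrary peeling order, so the conclusion holds however $\join$ resolves the set recursion. Second, and this is the genuine crux, one must thread the invariant ``the accumulator still types $\mem$'' through the entire fold. This rests essentially on the asymmetry of the binary join, which keeps the intervals of its first argument intact while only raising labels to absorb the second argument: it is exactly this property that prevents $\tscope' \vdash \mem$ from degrading as further sets $\Gamma_i$ are merged in, and it is the only place where the specific behaviour of $\join$ (as opposed to a generic least upper bound) is used.
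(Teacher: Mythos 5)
Your proof is correct. There is, however, no paper proof to compare it against: Lemma~\ref{lemma:join_does_not_modify_the_intervals} is stated in the appendix without proof and is simply invoked as a black box inside the soundness-of-abstraction and soundness-of-labeling arguments. Your discharge is the natural one given the paper's stated guarantees for the binary $\join$ operator (it overapproximates its \emph{first} argument, so $\tscope_1 \vdash \mem$ implies $\join(\tscope_1,\tscope_2) \vdash \mem$, and it only raises labels): pick the element $\join(\tscope, \Gamma_1 \cup \cdots \cup \Gamma_n)$ of the lifted join, then induct on the fold, threading the invariant that the accumulator still types $\mem$ and overapproximates $\tscope$; the universal quantification over the accumulator in your induction hypothesis and the appeal to transitivity of $\sqsubseteq$ are exactly what is needed. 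Your caveat that the recursive set-peeling definition of $\join(\tscope,\Gamma)$ is only well defined up to the peeling order, and that your invariant holds for an arbitrary order, flags a real looseness in the paper's definition of the lifted join that your formulation correctly sidesteps.
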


\begin{lemma}[Expression evaluation of consistent states]\label{lemma:expression_evaluation_of_consistent_states}
	\begin{align*}
		\mem_1 \lequiv{\tscope_1\sqcup\tscope_2} \mem_2 \band & \tscope_1 \vdash \mem_1 \band \tscope_2 \vdash \mem_2 \band \\
		\tscope_1 \vdash e : \type_1 \band& \tscope_2 \vdash e : \type_2 \band \\
		\getlabel(\type_1)& = {\low} \band \getlabel(\type_2) = {\low}  \implies \\
		&(\expsem{\mem_1}{e}={\val} \band \expsem{\mem_2}{e}={\val})
	\end{align*}
\end{lemma}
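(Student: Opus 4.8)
The plan is to reduce this two-environment statement to the single-environment soundness lemma for expression typing stated at the start of Section~\ref{sec:type_exp}, invoked relative to the joined state type $\tscope_1 \sqcup \tscope_2$. First I would fix $\val = \expsem{\mem_1}{e}$; this value exists by Lemma~\ref{lemma:expressions_have_types_same_as_their_values}, so the conclusion amounts to showing $\expsem{\mem_2}{e} = \val$. To apply the single-environment lemma with common state type $\tscope_1 \sqcup \tscope_2$, memory $\mem_1$, and alternative memory $\mem_2$, I need three ingredients: (i) $\tscope_1 \sqcup \tscope_2 \vdash \mem_1$, which follows from $\tscope_1 \vdash \mem_1$ by the guarantee of $\join$ (if $\tscope_1 \vdash \mem$ then $\join(\tscope_1,\tscope_2) \vdash \mem$, Appendix~\ref{sec:type_stm_op_app}), since the join keeps $\tscope_1$'s intervals and only raises labels while $\vdash$ depends on intervals alone; (ii) the low-equivalence $\mem_1 \lequiv{\tscope_1 \sqcup \tscope_2} \mem_2$, which is exactly a hypothesis; and (iii) a typing $\tscope_1 \sqcup \tscope_2 \vdash e : \type$ with $\getlabel(\type) = \low$. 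Given (i)--(iii), the lemma yields $\expsem{\mem_2}{e} = \val$ and we are done.

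The crux is ingredient (iii): transporting the two separate low typings $\tscope_1 \vdash e : \type_1$ and $\tscope_2 \vdash e : \type_2$ to a single low typing under $\tscope_1 \sqcup \tscope_2$. I would prove the auxiliary fact that the label assigned to $e$ under $\tscope_1 \sqcup \tscope_2$ is bounded above by $\getlabel(\type_1) \sqcup \getlabel(\type_2)$, by induction on $e$. The $\sqcup$ operation takes the pointwise least upper bound of slice labels (Appendix~\ref{sec:type_stm_op_app}), and every typing rule computes its output label as a join of the relevant input labels: for a variable the label is read directly from the environment, and for $\ominus$, $\oplus$, $\otimes$, and record construction (rule \textsc{T-SingleSliceBs} and the standard record rule) the output label is the join of the operand labels, so the bound propagates through the inductive hypotheses. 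Since $\getlabel(\type_1) = \getlabel(\type_2) = \low$, this gives $\getlabel(\type) = \low$.

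The main obstacle is the projecting rules \textsc{T-AlignedSlice}, \textsc{T-NonAlignedSlice}, and field access, where the result retains only a subset of the slices (or one field) of the operand. Because types are not unique (Section~\ref{sec:security_types}), $\tscope_1$ and $\tscope_2$ may slice the operand differently, so these rules must be reconciled at the granularity of individual bit positions and field paths rather than whole slices; the invariant I track is therefore that, for every bit position and field path, the label under $\tscope_1 \sqcup \tscope_2$ is at most the join of the labels under $\tscope_1$ and $\tscope_2$, a property preserved by projection. An equivalent route, avoiding the factoring through (iii), is a single combined structural induction on $e$ using both derivations together with the same per-position invariant, now stated on values: the operator cases follow because a low result forces low operands, hence equal operand values by the inductive hypothesis, and the projecting cases follow by restricting the invariant to the retained positions. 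I expect the bookkeeping for non-aligned slices to be the most delicate part.
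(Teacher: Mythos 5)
A preliminary remark on the comparison itself: the paper contains \emph{no} proof of this lemma. It is stated, without proof, among the auxiliary lemmas of Section~\ref{sec:proofs_guarantees} and is invoked repeatedly inside the soundness-of-labeling proof; the single-environment expression lemma of Section~\ref{sec:type_exp} that you reduce to, and Lemma~\ref{lemma:expressions_have_types_same_as_their_values} that you use for existence of $\val$, are likewise stated without proof. So there is no authors' argument to measure you against, and I judge your proposal on its own merits.

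On those merits, your second route is correct, but your first route has a genuine gap, created by the way you discharge ingredient (i). You take the joined environment to be the appendix operation $\join(\tscope_1,\tscope_2)$, which keeps $\tscope_1$'s intervals and hence $\tscope_1$'s \emph{slicing}. Labels are attached to slices, so a join carried on $\tscope_1$'s slicing must give each $\tscope_1$-slice the least upper bound of all per-bit joins falling inside it, and this can destroy ingredient (iii). Concretely, let $x$ be four bits, $\tscope_1(x)=\tbvn{*}{{\low}}{4}$, $\tscope_2(x)=\tbvn{*}{{\high}}{2}\cdot\tbvn{*}{{\low}}{2}$, and $e=x[1:0]$. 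Then $\tscope_1$ types $e$ with a single {\low} slice via \textsc{T-NonAlignedSlice} (the only overlapping slice is {\low}), and $\tscope_2$ types $e$ as $\tbvn{*}{{\low}}{2}$ via \textsc{T-AlignedSlice}, so both label hypotheses hold; the conclusion is also true, because the per-bit join is {\low} exactly on the two low-order bits, which the hypothesis $\mem_1\lequiv{\tscope_1\sqcup\tscope_2}\mem_2$ forces to agree. But any joined environment built on $\tscope_1$'s single-slice structure must label all of $x$ as {\high}, under which $e$ can only be typed {\high}, so the Section~\ref{sec:type_exp} lemma is inapplicable and step (iii) is unprovable. (A variant of the same example shows that the $\sqcup$ in the lemma statement must be read as the exact per-bit label join; if it meant the over-approximating $\join$ of Appendix~\ref{sec:type_stm_op_app}, whose labels may strictly exceed the per-bit join, the lemma itself would be false.)

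The repair is the one you only gesture at: build the joined state type on the \emph{common refinement} of the two slicings, take per-bit label joins there, and weaken all intervals to $\langle*\rangle$. This still gives (i), since $\vdash$ tests interval membership only, and it gives (iii), since a range aligned in either environment is aligned in the refinement, and every refinement slice overlapping a range is contained in an overlapping slice of each environment, so the per-position label bound survives both slice rules. But by the time you have proved all that, you have done the work of your alternative route, which I would therefore make the official proof: a single induction on $e$ over both typing derivations, carrying the invariant that at every retained bit position and field path, ``{\low} in both derivations implies the two evaluations agree there.'' The variable case is exactly the hypothesis $\mem_1\lequiv{\tscope_1\sqcup\tscope_2}\mem_2$; \textsc{T-SingleSliceBs} follows because a {\low} result forces {\low} single-slice operands, hence equal operand values; and the slice and field-access cases follow because a {\low} result label (aligned or not) bounds the labels of every operand position they retain or overlap. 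Commit to that route and drop the reduction through (iii).
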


\begin{lemma}[State equivalence preservation]\label{lemma:state_equivalence_preservation}
	\begin{align*}
		\tscope' \sqsubseteq \tscope \band \mem_1 \lequiv{\tscope'} \mem_2 \ \implies \ \mem_1 \lequiv{\tscope} \mem_2
	\end{align*}
\end{lemma}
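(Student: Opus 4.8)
The plan is to prove Lemma~\ref{lemma:state_equivalence_preservation} directly from the definition of low-equivalence, exploiting the fact that $\lequiv{\cdot}$ depends only on the security labels carried by a state type and is completely insensitive to its interval component. Recall that $\mem_1 \lequiv{\tscope} \mem_2$ holds precisely when $\expsem{\mem_1}{\lval} = \expsem{\mem_2}{\lval}$ for every $\lval$ with $\labelof{\lval}{\tscope} = \low$. Hence, to establish the conclusion it suffices to fix an arbitrary $\lval$ satisfying $\labelof{\lval}{\tscope} = \low$ and to recover the required agreement $\expsem{\mem_1}{\lval} = \expsem{\mem_2}{\lval}$ from the hypothesis $\mem_1 \lequiv{\tscope'} \mem_2$. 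Intuitively, the whole argument rests on the observation that moving from $\tscope'$ to the more restrictive $\tscope$ can only shrink the set of low lvalues, so $\lequiv{\tscope}$ is a weaker relation than $\lequiv{\tscope'}$.

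First I would unfold the ordering $\tscope' \sqsubseteq \tscope$, which lifts the type overapproximation $\lesstype$ pointwise to state types; thus $\tscope'(\lval) \lesstype \tscope(\lval)$ for every $\lval$, and by the definition of $\lesstype$ the labels satisfy $\getlabel(\tscope'(\lval)) \sqsubseteq \getlabel(\tscope(\lval))$. This label-monotonicity carries over to the slice-wise least upper bound computed by $\labelof{\cdot}{\cdot}$, so for the fixed $\lval$ we get $\labelof{\lval}{\tscope'} \sqsubseteq \labelof{\lval}{\tscope} = \low$. Since $\low$ is the bottom element of the two-point lattice, this forces $\labelof{\lval}{\tscope'} = \low$. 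With $\labelof{\lval}{\tscope'} = \low$ in hand, the hypothesis $\mem_1 \lequiv{\tscope'} \mem_2$ applies to $\lval$ and yields $\expsem{\mem_1}{\lval} = \expsem{\mem_2}{\lval}$. As $\lval$ was an arbitrary low lvalue of $\tscope$, this establishes $\mem_1 \lequiv{\tscope} \mem_2$ and closes the argument. It is worth noting that this lemma is exactly the step invoked at the end of the proof of Lemma~\ref{lemma:sufficient_condition}, instantiated with $\tscope' = \tscope_1 \sqcup \tscope_2$ and $\tscope = \tscope_o$, which gives some confidence that this is the right formulation.

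The only step that needs genuine care — and which I expect to be the main obstacle — is the passage from the pointwise ordering on whole-variable types to the label of a possibly finer lvalue such as a slice, because $\tscope$ and $\tscope'$ may slice the same variable in different ways, and $\labelof{\lval}{\tscope}$ is defined as the join of the labels of all slices overlapping $\lval$. I would discharge this by the following reasoning: if the join of the slice labels of $\lval$ in $\tscope$ equals $\low$, then every such slice carries label $\low$, and the monotonicity built into $\lesstype$ guarantees that each overlapping slice in $\tscope'$ is also labelled $\low$, whence their join is again $\low$. No appeal to the interval component of either state type is required at any point, so the proof remains short and purely about the label lattice.
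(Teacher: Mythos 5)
The paper itself never proves this lemma: it appears only as a bare item in the appendix's list of auxiliary lemmas, and is then invoked in the soundness-of-labeling proof (extern and table cases) and, uncited, as the final ``trivially'' step in the proof of Lemma~\ref{lemma:sufficient_condition} --- exactly the instantiation $\tscope' = \tscope_1 \sqcup \tscope_2$, $\tscope = \tscope_o$ that you point out. So there is no paper proof to compare yours against; your direct unfolding of low-equivalence (fix $\lval$ with $\labelof{\lval}{\tscope} = \low$, use the ordering and the fact that $\low$ is the bottom of the two-point lattice to force $\labelof{\lval}{\tscope'} = \low$, then apply the hypothesis to that $\lval$) is precisely the argument the paper leaves implicit, and it has the right shape.

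The caveat concerns the step you yourself single out, and your discharge of it is not actually backed by the paper's stated definitions. You appeal to ``monotonicity built into $\lesstype$'' to conclude that every slice of $\lval$ that is $\low$ in $\tscope$ is also $\low$ in $\tscope'$. But the paper's definition of $\lesstype$ (appendix on state type operations) requires only value inclusion together with $\getlabel(\type) \sqsubseteq \getlabel(\type')$, i.e., domination of the \emph{least upper bound} of the slice labels; it says nothing slice by slice. If $\sqsubseteq$ on state types were the per-variable lifting of that relation, the lemma would be false: with $\tscope'(x) = \tbvn{*}{{\high}}{1} {\cdot} \tbvn{*}{{\low}}{1}$ and $\tscope(x) = \tbvn{*}{{\low}}{1} {\cdot} \tbvn{*}{{\high}}{1}$, both sides have whole-variable label $\high$ and type the same values, so the $\lesstype$ conditions hold; yet two states that agree on the bit that is $\low$ in $\tscope'$ but differ on the bit that is $\low$ in $\tscope$ satisfy $\mem_1 \lequiv{\tscope'} \mem_2$ while violating $\mem_1 \lequiv{\tscope} \mem_2$. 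The lemma therefore holds only if $\tscope' \sqsubseteq \tscope$ is read lval-wise, i.e., $\labelof{\lval}{\tscope'} \sqsubseteq \labelof{\lval}{\tscope}$ for every $\lval$ --- the reading consistent with how the paper states the guarantees of refine and join (both quantified per $\lval$) and with every use of the lemma. Under that reading your proof closes immediately; the fix is simply to adopt lval-wise label domination as the definition of $\sqsubseteq$ rather than trying to derive it from $\lesstype$.
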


\begin{lemma}[Branch on high - state preservation]\label{lemma:branch_on_high_state_preservation}
	\begin{align*}
		{\stmtctx}: \sem{\mem}{\stmt}{\mem'} \band \tc, {\high}, \tscope \vdash \stmt : \Gamma \band \tscope' \in \Gamma &\implies \\
		(\tscope'(\lval) = \type \band \getlabel(\type) = {\low}) &\implies \\
		\mem(\lval) & = \mem'(\lval)
	\end{align*}
\end{lemma}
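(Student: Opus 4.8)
The plan is to prove the statement by structural induction on the typing derivation $\tc, \high, \tscope \vdash \stmt : \Gamma$, which is well-founded because P4 has neither loops nor recursion (so the nested analysis of \textsc{T-Call}/\textsc{T-Table} bodies terminates). The guiding invariant is that \emph{under a high context every written lvalue has its label raised to} $\high$: in \textsc{T-Assign} the stored type is $\raisel(\type, pc)$ and $\raisel(\cdot,\high)$ forces $\getlabel = \high$. Hence a low entry $\getlabel(\tscope'(\lval)) = \low$ in a resulting state type can only belong to an lvalue that was \emph{not} overwritten, which is exactly what yields $\mem(\lval) = \mem'(\lval)$. The two base cases are immediate: for \texttt{skip} we have $\mem' = \mem$; for $\lval_a := e$ the semantics give $\mem' = \mem[\lval_a \mapsto \expsem{\mem}{e}]$ and \textsc{T-Assign} gives $\tscope' = \tscope[\lval_a \mapsto \raisel(\type,\high)]$, so any $\lval$ with $\getlabel(\tscope'(\lval)) = \low$ must be disjoint from $\lval_a$ (an overlapping lvalue would inherit the raised $\high$ bits), and since the semantic update touches only $\lval_a$, such $\lval$ is unchanged.

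Before the compositional cases I would isolate one supporting fact, \emph{monotonicity of labels under typing}: if $\tc, pc, \tscope \vdash \stmt : \Gamma$ and $\tscope' \in \Gamma$ then $\getlabel(\tscope(\lval)) \sqsubseteq \getlabel(\tscope'(\lval))$ for every $\lval$. This holds because no rule lowers a label — \textsc{T-Assign} raises, \textsc{T-Cond}/\textsc{T-Trans}/\textsc{T-Table} apply $\joinOnHigh$, and $\refine$ and $\join$ only raise labels by their stated guarantees (Appendix). With monotonicity, \textsc{T-Seq} is routine: the execution factors as $\mem \to \mem_1 \to \mem'$ with $\tscope' \in \Gamma_2^{\tscope_1}$ for some $\tscope_1 \in \Gamma_1$; a low entry in $\tscope'$ is already low in $\tscope_1$, so the induction hypothesis on $\stmt_2$ gives $\mem_1(\lval) = \mem'(\lval)$ and on $\stmt_1$ gives $\mem(\lval) = \mem_1(\lval)$.

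The \textsc{T-Call}, \textsc{T-Extern}, and \textsc{T-Table} cases reduce to their bodies or contracts through $\tCall$: copy-in does not touch the caller state, the body is typed under the same $\high$ context (so the induction hypothesis applies to the callee's body), and copy-out only writes back the \texttt{out}/\texttt{inout} parameters, whose back-substituted types are again $\high$ precisely when they were written — so a low caller lvalue is untouched. The genuinely delicate case, and the one I expect to be the main obstacle, is the conditional (and, identically, \textsc{T-Trans}), because the branch actually executed by the semantics need not be the branch whose resulting state type $\tscope' \in \Gamma$ we are examining. Here the $\joinOnHigh$ discipline does the work: when the guard is $\high$ we have $\Gamma = \join(\Gamma_1 \cup \Gamma_2)$, and the $\join$ guarantee forces a low entry $\getlabel(\tscope'(\lval)) = \low$ to be low in \emph{every} contributing state type — in particular in the result of the branch actually taken, to which the induction hypothesis applies. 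The remaining subtlety is a $\high$ context with a \emph{low} guard: there the executed branch is determined, and one must reconcile the chosen $\tscope'$ with the state type produced along the branch that ran, using the $\refine$ guarantees and the fact that \textsc{T-EmptyType} fully types (rather than prunes) the empty-abstraction branches under a $\high$ context. I would package the join and monotonicity reasoning as standalone helper lemmas so that each typing rule's case collapses to a mechanical appeal to the induction hypothesis together with a single structural property of $\raisel$, $\join$, or $\refine$.
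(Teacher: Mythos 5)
First, a point of reference: the paper never proves this lemma --- it is stated without proof in the appendix as one of the helper facts feeding the soundness-of-labeling theorem --- so there is no official argument to compare against, and your proposal must stand on its own. It does not, and the failure sits exactly where you flagged ``the remaining subtlety.'' Under a {\high} context with a {\low} guard, \textsc{T-Cond} performs \emph{no} join: $\joinOnHigh{\Gamma_1 \cup \Gamma_2}{\lbl}$ is keyed on the guard label $\lbl$, not on $pc \sqcup \lbl$, so $\Gamma = \Gamma_1 \cup \Gamma_2$ and the chosen $\tscope' \in \Gamma$ may come from the branch that did \emph{not} execute. Your plan to ``reconcile'' $\tscope'$ with the executed branch via the $\refine$ guarantees cannot work, because $\refine{\tscope}{e}$ is only required to be an over-approximation whose labels dominate those of $\tscope$; the refinement of the untaken branch need not be empty, and nothing connects its entries to what the taken branch writes. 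Concretely, take $pc = {\high}$ and $\stmt = \texttt{if } (y \bmod 2 = 0) \texttt{ then } x := h \texttt{ else } \texttt{skip}$ with $\tscope = \{y \mapsto \tbvn{*}{\low}{8},\, x \mapsto \tbvn{0,10}{\low}{8},\, h \mapsto \tbvn{*}{\high}{8}\}$. Parity is not interval-representable, so a sound $\refine$ leaves both refinements non-empty and untouched on $x$; hence $\Gamma$ contains the \texttt{skip}-branch state type in which $x$ is still {\low}, yet an execution from $\mem$ with $\mem(y)$ even runs the assignment and changes $x$. This falsifies the lemma as literally stated (one existentially chosen $\tscope' \in \Gamma$), so no completion of your induction can succeed; the same defect recurs in \textsc{T-Trans}, in \textsc{T-Table} with {\low} keys, and in \textsc{T-Extern}, whose outputs are plain unions over contract cases. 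What \emph{is} provable --- and is all the paper's soundness-of-labeling proof actually needs, since before invoking this lemma it uses Lemma~\ref{lemma:joins_low_label_implication} to establish that $\lval$ is {\low} in \emph{every} member of the relevant set --- is the universally quantified variant: if $\getlabel(\tscope'(\lval)) = {\low}$ for \emph{all} $\tscope' \in \Gamma$, then $\mem(\lval) = \mem'(\lval)$. With that hypothesis your own structure goes through: for a {\low} guard, ``{\low} everywhere in $\Gamma_1 \cup \Gamma_2$'' restricts to the executed branch's $\Gamma_1$, which is non-empty by Lemma~\ref{lemma:branch_on_high_is_never_empty}, and the induction hypothesis applies; for a {\high} guard your join argument works as written. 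So the fix is to strengthen the hypothesis of the lemma (or, alternatively, change the type system so that $\joinOnHigh{\cdot}{\cdot}$ is keyed on $pc'$ rather than $\lbl$); no proof of the statement as given exists.

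A secondary flaw: your supporting ``monotonicity of labels under typing'' is false for arbitrary $pc$, because \textsc{T-Assign} is flow-sensitive --- it \emph{replaces} the target's type with $\raisel(\type, pc)$, so under a {\low} context assigning a {\low} expression to a {\high} variable lowers its label. The statement you need (and the only one your cases use, since $pc$ never decreases along sub-derivations) is monotonicity under $pc = {\high}$, which is precisely the paper's Lemma~\ref{lemma:high_program_final_types_subset_of_initial}: $\tc, {\high}, \tscope \vdash \stmt : \Gamma$ implies $\tscope \sqsubseteq \tscope'$ for all $\tscope' \in \Gamma$. Restricting your helper lemma repairs that piece, and your remaining helper facts do correspond to the paper's toolkit (Lemmas~\ref{lemma:joins_low_label_implication} and~\ref{lemma:branch_on_high_is_never_empty}); but the conditional gap above is fundamental and cannot be patched within the proof itself.
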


\begin{lemma}[Join's low label implication]\label{lemma:joins_low_label_implication}
	\begin{align*}
		\Gamma = \text{join} &(\Gamma_1 \cup ...   \cup\Gamma_n) \band \tscope \in \Gamma \band \\
		\tscope(\lval) & = \type \band \getlabel(\type) = {\low}  \implies \\
		&\forall \tscope' \in \Gamma_1, ... , \Gamma_n . \ \tscope'(\lval) = \type' \band \getlabel(\type') = {\low}
	\end{align*}
\end{lemma}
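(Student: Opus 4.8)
The plan is to read the claim as a statement about a monotone fold: the joined state type $\tscope$ is produced by folding the binary join over all elements of the union $\Gamma' = \Gamma_1 \cup \dots \cup \Gamma_n$, and since binary join never lowers a label, the label of $\lval$ in $\tscope$ must dominate the label of $\lval$ in every element of $\Gamma'$. Because the lattice has only two points, with ${\low}$ as the bottom element, the hypothesis $\getlabel(\type) = {\low}$ then forces each of those dominated labels to be ${\low}$ as well, which is exactly the conclusion.

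First I would unfold the definition of the lifted join. From $\Gamma = \join(\Gamma')$ and the definition $\join(\Gamma') = \{\join(\tscope_0, \Gamma') \mid \tscope_0 \in \Gamma'\}$, membership $\tscope \in \Gamma$ gives some $\tscope_0 \in \Gamma'$ with $\tscope = \join(\tscope_0, \Gamma')$. Here $\join(\tscope_0, \cdot)$ is the fold defined by $\join(\tscope_0, \varnothing) = \tscope_0$ and $\join(\tscope_0, \{\tscope'\} \cup \Delta) = \join(\join(\tscope_0, \tscope'), \Delta)$, so for some enumeration of the finite set $\Gamma'$ the state type $\tscope$ is $\tscope_0$ joined in turn with every element of $\Gamma'$.

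The core step is an induction on the finite subcollection consumed by the fold, with the invariant that after folding $\tscope_0$ with some $\Delta \subseteq \Gamma'$, the resulting accumulator $\hat{\tscope}$ satisfies $\getlabel(\tscope'(\lval)) \sqsubseteq \getlabel(\hat{\tscope}(\lval))$ for the fixed $\lval$ and for every $\tscope' \in \{\tscope_0\} \cup \Delta$. The base case $\Delta = \varnothing$ is immediate. For the inductive step, joining the accumulator $\hat{\tscope}$ with a further element $\tscope''$ invokes the binary-join guarantee from the State Type Operations, namely $\getlabel(\hat{\tscope}(\lval)) \sqcup \getlabel(\tscope''(\lval)) \sqsubseteq \getlabel(\join(\hat{\tscope}, \tscope'')(\lval))$; this simultaneously preserves domination of all previously consumed labels (the accumulator's label only rises) and adds $\tscope''$ to the dominated set. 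Instantiating $\Delta = \Gamma'$ yields $\getlabel(\tscope'(\lval)) \sqsubseteq \getlabel(\tscope(\lval))$ for every $\tscope' \in \Gamma'$.

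To conclude, I would use the hypothesis $\getlabel(\tscope(\lval)) = \getlabel(\type) = {\low}$. Since ${\low}$ is the bottom of the two-point lattice, $\getlabel(\tscope'(\lval)) \sqsubseteq {\low}$ forces $\getlabel(\tscope'(\lval)) = {\low}$ for each $\tscope' \in \Gamma'$; writing $\tscope'(\lval) = \type'$ gives $\getlabel(\type') = {\low}$, as required. The only real subtlety, and hence the main bookkeeping obstacle, is phrasing the fold invariant so that once a label has been raised it is never lowered by a later join and so that every element of the union is actually consumed. This is precisely where the ``at least as restrictive'' direction of the binary-join guarantee is essential, and where one must avoid assuming any commutativity or idempotence of $\join$ beyond what that guarantee provides.
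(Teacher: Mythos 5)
Your proof is correct. Note that the paper itself states Lemma~\ref{lemma:joins_low_label_implication} without any proof (it is one of the auxiliary lemmas listed in Appendix~\ref{sec:proofs_guarantees} and invoked inside the soundness-of-labeling argument), so there is no authorial proof to compare against; your argument supplies the missing one using exactly the ingredients the paper provides. Specifically, you unfold the lifted join $\join(\Gamma) = \{\join(\tscope_0, \Gamma) \mid \tscope_0 \in \Gamma\}$ with its fold equations, run an induction with the invariant $\getlabel(\tscope'(\lval)) \sqsubseteq \getlabel(\hat{\tscope}(\lval))$ for every consumed $\tscope'$, discharge the inductive step via the binary-join guarantee $\getlabel(\tscope_1(\lval)) \sqcup \getlabel(\tscope_2(\lval)) \sqsubseteq \getlabel(\join(\tscope_1,\tscope_2)(\lval))$ from Appendix~\ref{sec:type_stm_op_app}, and conclude from the fact that ${\low}$ is the bottom of the two-point lattice, so domination by ${\low}$ forces equality with ${\low}$. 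Your care in making the invariant order-independent is well placed, since the paper's fold definition picks set elements in an unspecified order and no commutativity or idempotence of $\join$ is ever asserted; the only (harmless) implicit assumption, shared with the lemma statement itself, is that $\lval$ is in the domain of every state type involved.
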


\begin{lemma}[High program's final types]\label{lemma:high_program_final_types_subset_of_initial}
	\begin{align*}
		{\tc, {\high}, \tscope \vdash \stmt : \Gamma} \ \implies \ \forall \tscope' \in \Gamma . \ \tscope \sqsubseteq \tscope'
	\end{align*}
\end{lemma}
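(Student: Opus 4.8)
The plan is to prove the statement by structural induction on the typing derivation $\tc, {\high}, \tscope \vdash \stmt : \Gamma$, which is well-founded because P4 has neither recursion nor loops, so the derivation of a function body or table action is strictly smaller than the call that triggers it. Throughout I read $\sqsubseteq$ on state types as the pointwise lifting of the label order: this is the reading forced by its use in Lemma~\ref{lemma:state_equivalence_preservation} (where $\tscope' \sqsubseteq \tscope$ must coarsen $\lequiv{\tscope}$, which depends only on labels) and by the interval-versus-label split in Lemma~\ref{lemma:sufficient_condition} (whose condition~(2) would be redundant if $\sqsubseteq$ already constrained intervals). It therefore suffices to show, for every final $\tscope' \in \Gamma$ and every $\lval$, that $\getlabel(\tscope(\lval)) \sqsubseteq \getlabel(\tscope'(\lval))$. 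The single structural observation driving every case is that, since $pc = {\high}$, every label written during typing is forced to ${\high}$: the operator $\raisel(\cdot,{\high})$ sets all labels of a written type to ${\high}$, so any lvalue touched in this context becomes fully ${\high}$, while untouched lvalues keep their label or have it raised by $\refine$, $\join$, or the update side-conditions.

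First I would dispatch the base cases. For \textsc{T-Assign} the modified $\lval$ receives $\raisel(\type,{\high})$, whose label is ${\high}$, an upper bound of every label; for non-overlapping $\lval' \hash \lval$ the update property gives $\tscope(\lval') \lesstype \tscope'(\lval')$ and hence a raised label; and for an lvalue overlapping the write, $\getlabel$ is a least upper bound over slices one of which is now ${\high}$, so it is ${\high}$ as well. The rule \textsc{T-EmptyType} need not be considered, since it is gated on $pc = {\low}$: in the ${\high}$ context we always descend through the structural rules, so the induction also correctly covers empty (refined-away) states.

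For the compound cases I would combine the induction hypothesis with monotonicity of the auxiliary operators and transitivity of the label order. In \textsc{T-Seq} both sub-derivations run at $pc = {\high}$, so the hypothesis gives $\tscope \sqsubseteq \tscope_1$ and $\tscope_1 \sqsubseteq \tscope'$, and transitivity closes the case over the union $\Gamma'$. In \textsc{T-Cond} and \textsc{T-Trans} we have $pc' = pc \sqcup \lbl = {\high}$, and $\refine$ only raises labels, so $\tscope \sqsubseteq \refine{\tscope}{e} \sqsubseteq \tscope_i$ for each branch result; the concluding $\joinOnHigh{\Gamma_1 \cup \Gamma_2}{\lbl}$ either unions (labels untouched) or joins, and in the latter case Lemma~\ref{lemma:join_does_not_modify_the_intervals} together with the label guarantee of $\join$ ensures the joined type dominates its inputs, preserving the bound. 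For \textsc{T-Call}, \textsc{T-Table}, and \textsc{T-Extern} the body or action is typed at $pc' = {\high}$, so by the hypothesis the global component and every parameter label only rise; for \textsc{T-Extern} I additionally use that the side effect is written as $\raisel(\tscope_t, {\high})$, forcing all extern-modified lvalues to ${\high}$.

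The step I expect to be the main obstacle is the copy-out machinery shared by \textsc{T-Call}, \textsc{T-Table}, and \textsc{T-Extern}, together with the bookkeeping for overlapping lvalues under the update operator. The update properties in the appendix state label-monotonicity only for the written lvalue and for non-overlapping lvalues, so I would need a small auxiliary argument that writing a ${\high}$ type at $\lval$ raises $\getlabel$ on every lvalue sharing bits with $\lval$ (again because $\getlabel$ is a least upper bound over slices, one of which is now ${\high}$). I must also verify that the copied-out type's label dominates the original $\tscope(e_i)$ label along the copy-in/body/copy-out pipeline: the parameter $x_i$ is copied in with $\tscope(e_i)$'s type, the hypothesis only raises its label during the ${\high}$-context body, and the write-back therefore respects the invariant. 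Establishing these two facts uniformly is where the delicate reasoning concentrates; everything else reduces to transitivity of the label order and the monotonicity guarantees of $\refine$ and $\join$ already available.
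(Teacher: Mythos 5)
You are proving a statement for which the paper supplies no proof at all: Lemma~\ref{lemma:high_program_final_types_subset_of_initial} is one of the bare auxiliary lemmas of the appendix, stated without argument and invoked inside the soundness-of-labeling proof (in the conditional and table cases) to transfer a {\low} label on a final state type back to the initial refined state type. So there is no paper proof to compare against; your induction over the typing derivation is the natural strategy, and it is correct. Your reading of $\sqsubseteq$ as pointwise label dominance is the one the paper's usage forces: it is exactly what the soundness-of-labeling proof needs, and if $\sqsubseteq$ also constrained intervals the lemma would be false, since an assignment under a {\high} $pc$ can move an lvalue's interval outside its initial one. Your load-bearing observations are all sound: $\raisel(\cdot,{\high})$ makes every written type fully {\high}, so the written lvalue and every lvalue overlapping it become {\high} (because $\getlabel$ is a lub over slices); $\mathrm{refine}$ and $\join$ are label-monotone by the properties of Appendix~\ref{sec:type_stm_op_app}, and $\joinOnHigh{\cdot}{\lbl}$ is either a union or a $\join$; and since $pc' = pc \sqcup \lbl$ and $\tCall$ passes $pc$ unchanged, every sub-derivation stays at {\high}, so \textsc{T-EmptyType} (gated on {\low}) can never fire and the induction hypothesis is available everywhere, including for branches typed under empty refined state types. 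Your treatment of copy-out is also the right one: $x_i$ is copied in with the type of $e_i$, the {\high}-context body can only raise its label by the induction hypothesis, so the written-back type dominates that of $\tscope(e_i)$ and the update properties finish the case. Two points you should make explicit when writing this up: the overlap fact (writing a fully-{\high} type at $\lval$ raises $\getlabel$ of every lvalue sharing slices with $\lval$) must be stated and proved as its own property of the update operator, since Appendix~\ref{sec:type_stm_op_app} only covers the written $\lval$ itself and lvalues with $\lval' \hash \lval$; and in \textsc{T-Table} the action's fresh local state type is discarded at the end, so monotonicity is only needed for the global component and the caller's refined local component, which slightly simplifies that case relative to \textsc{T-Call}.
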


\begin{lemma}[Branch on high - state lemma]\label{lemma:branch_on_high_state_lemma}
	\begin{align*}
		\tc, {\high}, \tscope \vdash \stmt : \Gamma &\band \tscope' \in \Gamma \band \\
		\tscope'(\lval) = \type' &\band \getlabel(\type') = {\low} \implies \\
		\forall \tscope'' \tscope''' . \ &  {\tc, pc, \tscope'' \vdash \stmt : \Gamma'} \band {\tscope''' \in \Gamma' } \implies \ \\
		\Big( &\tscope''(\lval) = \type'' \band \tscope'''(\lval) = \type''' \\
		& \implies \getlabel (\type'') \sqsubseteq \getlabel (\type''') \Big)
	\end{align*}
\end{lemma}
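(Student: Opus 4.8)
The plan is to prove the statement by structural induction on $\stmt$ (equivalently, on the derivation of $\tc, \high, \tscope \vdash \stmt : \Gamma$), fixing throughout the low witness $\tscope' \in \Gamma$ with $\getlabel(\tscope'(\lval)) = \low$ and an arbitrary second derivation $\tc, pc, \tscope'' \vdash \stmt : \Gamma'$ with $\tscope''' \in \Gamma'$. The guiding intuition is that a low witness in a $\high$ context certifies that, along the path producing $\tscope'$, the lvalue $\lval$ is never the target of a write: under $pc = \high$ every assignment raises its target to $\high$ via $\raisel(\cdot, \high)$, so had $\lval$ (or an overlapping slice) been assigned, $\tscope'(\lval)$ could not carry label $\low$. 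I would use this observation together with the monotonicity properties of the state-type operations recorded in Appendix~\ref{sec:type_stm_op_app} and with Lemma~\ref{lemma:high_program_final_types_subset_of_initial} to show that $\stmt$ cannot decrease the label of $\lval$ in the second derivation either. The base cases are immediate: for $\mathtt{skip}$ both $\Gamma$ and $\Gamma'$ are the incoming singleton, so $\type'' = \type'''$; for $\stmt = \lval_0 := e$ (\textsc{T-Assign}) the witness forces $\lval \hash \lval_0$ (otherwise the update raises $\lval$'s label to $\high$, contradicting $\getlabel(\tscope'(\lval)) = \low$), and since $\lval$ does not overlap $\lval_0$ the update property gives $\type'' \lesstype \type'''$ in the second derivation, hence $\getlabel(\type'') \sqsubseteq \getlabel(\type''')$.

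The \textsc{T-Seq} case is where Lemma~\ref{lemma:high_program_final_types_subset_of_initial} does the real work. In the first derivation $\tscope'$ lies in $\Gamma_2^{\tscope_1}$ for a specific intermediate $\tscope_1 \in \Gamma_1$; applying Lemma~\ref{lemma:high_program_final_types_subset_of_initial} to $\stmt_2$ yields $\tscope_1 \sqsubseteq \tscope'$, so $\lval$ is already low in $\tscope_1$, which supplies a genuine $\high$-context witness for $\stmt_1$. In the second derivation $\tscope'''$ lies in $\Gamma_2'^{\tscope_1''}$ for some $\tscope_1'' \in \Gamma_1'$. I would invoke the induction hypothesis on $\stmt_2$ (with witness $\tscope'$ and second initial $\tscope_1''$) to get $\getlabel(\tscope_1''(\lval)) \sqsubseteq \getlabel(\tscope'''(\lval))$, and on $\stmt_1$ (with witness $\tscope_1$ and second initial $\tscope''$) to get $\getlabel(\tscope''(\lval)) \sqsubseteq \getlabel(\tscope_1''(\lval))$, then compose by transitivity. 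The \textsc{T-Call}, \textsc{T-Table}, and \textsc{T-Extern} cases reduce to this same pattern over the copied-in body followed by the copy-out updates, using that $\raisel(\cdot, pc)$ and $\concat$ are label-monotone on the non-overlapping component and that $\isOut$-selected copy-out behaves like a sequence of \textsc{T-Assign} steps.

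The main obstacle is the branch machinery, \textsc{T-Cond} (and, identically, \textsc{T-Trans}). Writing $\ell = \getlabel(\type_e)$ for the guard label in the first derivation, the tractable regime is $\ell = \high$: there $\Gamma = \joinOnHigh{\Gamma_1 \cup \Gamma_2}{\high} = \join(\Gamma_1 \cup \Gamma_2)$, so Lemma~\ref{lemma:joins_low_label_implication} propagates the low witness to a low occurrence of $\lval$ in \emph{every} element of $\Gamma_1$ and of $\Gamma_2$; this gives witnesses for both sub-statements, and I apply the induction hypothesis to whichever branch $\tscope'''$ originates from, absorbing the refinement step via the fact that $\refine{\cdot}{\cdot}$ only raises labels. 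The delicate regime is $\ell = \low$, where $\Gamma = \Gamma_1 \cup \Gamma_2$ is a plain union and the low witness may sit in the branch opposite to the one producing $\tscope'''$; here the argument must match $\tscope'''$ to the sub-statement from which the witness is drawn and transfer the low witness to exactly that sub-statement before recursing. Ensuring this cross-branch transfer of the witness — rather than being left with a witness only for the sibling branch — is the crux of the proof and the step I expect to require the most care, since it is precisely the interaction between the low-guard union and the per-branch assignments that stresses the label monotonicity of $\refine$ and the branch-local reasoning.
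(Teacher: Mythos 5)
First, a point of reference: the paper never actually proves Lemma~\ref{lemma:branch_on_high_state_lemma} --- it is stated in the appendix as an auxiliary fact and invoked, without proof, inside the soundness-of-labeling argument (in the conditional and table cases). So your proposal has to stand on its own merits, and it does not: the case you yourself single out as the crux, \textsc{T-Cond} with a \low{} guard under a \high{} $pc$, is not merely delicate --- it cannot be closed, because the lemma as literally stated is false there. The premise only requires \emph{one} witness $\tscope' \in \Gamma$ with $\getlabel(\tscope'(\lval)) = \low$, and when the guard is \low{} the rule returns the plain union $\joinOnHigh{\Gamma_1 \cup \Gamma_2}{\low} = \Gamma_1 \cup \Gamma_2$ (the join is keyed on the guard label $\lbl$, not on $pc$), so the witness can come from a branch that never writes $\lval$ while the conclusion is tested against the other branch. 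Concretely, take $\stmt = \stmtif{b}{x := 0}{skip}$. First derivation: $pc = \high$ and $\tscope = \{x \mapsto \tbvn{*}{\low}{8},\, b \mapsto \tbvn{*}{\low}{1}\}$; the then-branch raises $x$ to $\tbvn{0}{\high}{8}$ (assignment under $pc' = \high$), the else-branch leaves $x$ at $\tbvn{*}{\low}{8}$, and since the guard is \low{} no join occurs, so $\Gamma$ contains a state type in which $x$ is \low{}: premise satisfied. Second derivation: $pc = \low$ and $\tscope'' = \{x \mapsto \tbvn{*}{\high}{8},\, b \mapsto \tbvn{*}{\low}{1}\}$; its then-branch assigns the \low{} constant $0$ in a \low{} context, producing $\tscope''' \in \Gamma'$ with $\getlabel(\tscope'''(x)) = \low$. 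Then $\getlabel(\tscope''(x)) = \high \not\sqsubseteq \low = \getlabel(\tscope'''(x))$, refuting the conclusion. Your own guiding intuition exposes the problem: a single low witness certifies that $\lval$ is unwritten only \emph{along the path producing $\tscope'$}, whereas the conclusion quantifies over all paths of the second derivation; no amount of care in the ``cross-branch transfer'' you postpone can bridge that.

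What you should prove instead --- and what the paper in fact uses --- is the lemma with the premise strengthened to: $\getlabel(\tscope'(\lval)) = \low$ for \emph{every} $\tscope' \in \Gamma$. At each application site in the soundness proof the authors first establish exactly this universal fact (via the join performed by $\joinOnHigh{\cdot}{\cdot}$ together with Lemma~\ref{lemma:joins_low_label_implication}) before invoking the lemma. With the universal premise your plan goes through: lowness on all of $\Gamma_1 \cup \Gamma_2$ restricts to lowness on all of $\Gamma_1$ and all of $\Gamma_2$ separately, so each branch carries its own witness and the induction hypothesis applies to whichever branch $\tscope'''$ originates from; the refinement step is absorbed because $\refine{\cdot}{\cdot}$ only raises labels, and Lemma~\ref{lemma:branch_on_high_is_never_empty} guarantees the strengthened premise is never vacuous, since a \high-$pc$ derivation never produces an empty set of final state types. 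Your \textsc{T-Seq} argument survives essentially unchanged (Lemma~\ref{lemma:high_program_final_types_subset_of_initial} plus non-emptiness pushes lowness from $\Gamma$ back onto every element of $\Gamma_1$), as do the base cases and the call/extern/table cases. In short: right skeleton, correct identification of the hard case, but the hard case is a genuine counterexample to the statement as written, and the fix is a restatement of the lemma, not a cleverer argument.
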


\begin{lemma}[Branch on high is never empty]\label{lemma:branch_on_high_is_never_empty}
	\begin{align*}
		{\tc, {\high}, \tscope \vdash \stmt : \Gamma} \implies \Gamma \neq \varnothing
	\end{align*}
\end{lemma}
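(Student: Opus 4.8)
The plan is to argue by structural induction on the typing derivation of $\tc, \high, \tscope \vdash \stmt : \Gamma$, with the whole proof hinging on one observation: \textsc{T-EmptyType} is the only rule that can introduce an arbitrary (and in particular empty) $\Gamma$, and it carries the side condition $pc = \low$. Hence it can never be the concluding rule of a derivation whose context is $\high$, so under $\high$ the derivation is syntax-directed and I only have to check that each remaining rule turns non-empty premises into a non-empty conclusion.

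The base case is \textsc{T-Assign}, whose conclusion $\Gamma = \{\tscope'\}$ is a singleton. For the compositional rules I would first record the trivial identity $\high \sqcup \lbl = \high$, which guarantees that the context $pc'$ fed to the premises of \textsc{T-Cond}, \textsc{T-Trans}, \textsc{T-Table}, and (via $\tCall$) \textsc{T-Call} is again $\high$, so the induction hypothesis applies to every sub-derivation. In \textsc{T-Seq} the hypothesis gives some $\tscope_1 \in \Gamma_1$ and then $\varnothing \neq \Gamma^{\tscope_1}_2 \subseteq \Gamma'$. In \textsc{T-Cond} and \textsc{T-Trans} the two (respectively several) branch results are non-empty by the hypothesis, their union is non-empty, and it remains to note that $\joinOnHigh{\cdot}{\lbl}$ preserves non-emptiness since it is either the identity or the image of its argument under $\tscope \mapsto \join(\tscope, \Gamma)$, both of which map non-empty sets to non-empty sets. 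For \textsc{T-Call} and \textsc{T-Table} the hypothesis yields a non-empty intermediate set, and the final copy-out is an elementwise map, so the result stays non-empty; for \textsc{T-Table} I would additionally use well-formedness of the contract ($\text{Cont}_{tbl} \neq \varnothing$, part of the global assumption $\tc \vdash \envir$) to ensure at least one $\Gamma_j$ is formed.

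The hard part will be \textsc{T-Extern}, the one inductive rule that actively discards candidates: $\Gamma'$ keeps only those tuples whose refinement $\refine{(\tscope_g, \tscope_f)}{\phi}$ is non-empty, so in principle every branch could be filtered away. My plan is to discharge this using the coverage guarantee bundled into $\tc \vdash \envir$ (for every concrete $\mem$ some contract condition satisfies $\exists \phi.\,\phi(\mem)$): when the initial abstract state $(\tscope_g, \tscope_f)$ is non-empty it has a concrete witness $\mem$, which satisfies some $\phi$, and by soundness of $\mathrm{refine}$ the corresponding $\refine{(\tscope_g, \tscope_f)}{\phi}$ retains $\mem$ and thus survives the filter, giving $\Gamma' \neq \varnothing$ and hence $\Gamma'' \neq \varnothing$. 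The genuinely delicate subcase — and where I expect the real work of the lemma to lie — is an \emph{empty} initial state reaching an extern under a $\high$ context, since there the coverage argument has no concrete witness to offer; resolving it cleanly will require either establishing that externs are never typed against $\bullet$ in the situations the lemma is invoked, or strengthening the contract well-formedness so that a surviving (label-raising) branch is guaranteed even for the empty abstraction.
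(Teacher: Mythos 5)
First, a point of calibration: the paper never actually proves this lemma. It appears only in the appendix's list of unproved auxiliary lemmas and is invoked (in the conditional case of the labeling-soundness proof) as a black box, so there is no paper proof for your route to agree or disagree with. Judged on its own merits, your plan is correct for every rule except the one you flag. The central observation --- that \textsc{T-EmptyType} is the only rule permitting an arbitrary (hence possibly empty) $\Gamma$ and that its side condition $pc = {\low}$ rules it out under a {\high} context --- is exactly right, and your treatment of \textsc{T-Assign}, \textsc{T-Seq}, \textsc{T-Cond}, \textsc{T-Trans}, \textsc{T-Call}, and \textsc{T-Table} goes through: ${\high} \sqcup \lbl = {\high}$ propagates the context into every sub-derivation (including through $\tCall$), $\joinOnHigh{\cdot}{\lbl}$ and the copy-out constructions are elementwise and so preserve non-emptiness, and \textsc{T-Table} indeed needs $\text{Cont}_{tbl} \neq \varnothing$, which follows from the coverage part of $\tc \vdash \stmtctx$.

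The gap you identify in \textsc{T-Extern} is genuine --- and it is a gap in the paper's lemma, not merely in your attempt. \textsc{T-Extern} filters contract branches by the condition $\refine{(\tscope_{g}, \tscope_f)}{\phi} = \tscope' \neq \bullet$ with no guard on $pc$, so when an extern call is typed in a {\high} context from an empty state type, every branch is discarded and the rule concludes with $\Gamma'' = \varnothing$. This situation is not hypothetical: it is forced by the paper's own design, since under {\high} guards the branches of conditionals are deliberately typed against refinements that may be $\bullet$ (this is the whole point of restricting \textsc{T-EmptyType} to {\low}), and the lemma is then applied to exactly those derivations. As you observe, the coverage guarantee in $\tc \vdash \stmtctx$ ($\exists \phi.\ \phi(\mem)$ for every concrete $\mem$) is vacuous for $\bullet$ because there is no concrete witness. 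Consequently the lemma as stated cannot be derived from the paper's definitions; it needs an amendment, e.g.\ an extern analogue of the high-$pc$ discipline (retain at least one label-raised branch when $pc = {\high}$ even if all refinements are empty) or a restriction of the lemma to statements whose extern calls are typed against non-empty state types. Your proposal therefore does not miss an idea present in the paper; it correctly diagnoses a hole the paper left unaddressed by asserting the lemma without proof.
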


\begin{lemma}[Low equivalence distribution]\label{lemma:low_equivalence_distribution}
	\begin{align*}
		(\mem_1,\mem_1') & \lequiv{(\tscope_a,\tscope_a') \sqcup (\tscope_b,\tscope_b')} (\mem_2,\mem_2') \Leftrightarrow \\
		&(\mem_1 \lequiv{\tscope_a\sqcup \tscope_b} \mem_2 \band \mem_1' \lequiv{\tscope_a'\sqcup \tscope_b'} \mem_2' )
	\end{align*}
\end{lemma}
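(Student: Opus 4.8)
The plan is to prove the two directions of the biconditional separately, reducing everything to the definition of low-equivalence and to the fact that composite states and state types split into two disjoint parts. Recall that a state $(\mem, \mem')$ is a disjoint union over two separate sets of variable names (e.g.\ global and local), and a state type $(\tscope, \tscope')$ is correspondingly a pair of partial functions over these disjoint domains. Two structural facts drive the proof. First, the join on composite state types is computed componentwise, so $(\tscope_a, \tscope_a') \sqcup (\tscope_b, \tscope_b') = (\tscope_a \sqcup \tscope_b,\; \tscope_a' \sqcup \tscope_b')$; this holds because $\sqcup$ acts per variable and the two components have disjoint domains. Second, every lvalue $\lval$ resolves into exactly one component, so its label $\labelof{\lval}{(\tscope, \tscope')}$ equals $\labelof{\lval}{\tscope}$ when $\lval$ is built from a variable of the first component and equals $\labelof{\lval}{\tscope'}$ otherwise, and likewise $\expsem{(\mem,\mem')}{\lval}$ reduces to an evaluation in the single component that owns $\lval$. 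I would state both of these as immediate consequences of the definitions before starting.

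For the forward direction I would assume $(\mem_1,\mem_1') \lequiv{(\tscope_a,\tscope_a') \sqcup (\tscope_b,\tscope_b')} (\mem_2,\mem_2')$ and establish each conjunct. To show $\mem_1 \lequiv{\tscope_a \sqcup \tscope_b} \mem_2$, take any lvalue $\lval$ of the first component with $\labelof{\lval}{\tscope_a \sqcup \tscope_b} = \low$. By the componentwise-join fact this equals the label of $\lval$ under the composite type, so the composite low-equivalence yields $\expsem{(\mem_1,\mem_1')}{\lval} = \expsem{(\mem_2,\mem_2')}{\lval}$, which is exactly $\expsem{\mem_1}{\lval} = \expsem{\mem_2}{\lval}$ since $\lval$ lives in the first component. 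The argument for $\mem_1' \lequiv{\tscope_a' \sqcup \tscope_b'} \mem_2'$ is symmetric, using lvalues of the second component.

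For the backward direction I would assume both component equivalences and take an arbitrary lvalue $\lval$ with $\labelof{\lval}{(\tscope_a,\tscope_a') \sqcup (\tscope_b,\tscope_b')} = \low$. By disjointness $\lval$ belongs to exactly one component: if to the first, its composite label equals $\labelof{\lval}{\tscope_a \sqcup \tscope_b} = \low$ and the first hypothesis gives the required equality of values; if to the second, the second hypothesis applies analogously. Since $\lval$ was arbitrary, $(\mem_1,\mem_1')$ and $(\mem_2,\mem_2')$ are low-equivalent under the composite type.

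I expect no substantive obstacle here: the lemma is entirely structural and amounts to pushing the definition of $\lequiv{}$ through a disjoint decomposition. The only point deserving explicit care is that an lvalue composed of a variable together with field accesses and bitvector slices stays within the single component that owns its root variable and never straddles both parts; making this precise is what licenses the componentwise treatment in both directions.
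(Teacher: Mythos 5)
Your proof is correct. There is, however, no paper proof to compare it against: Lemma~\ref{lemma:low_equivalence_distribution} is stated in the appendix without proof and is simply invoked as an auxiliary fact inside the soundness-of-labeling argument (in the function-call, extern, and table cases). Your reasoning --- that $\sqcup$ on composite state types distributes componentwise because the two components have disjoint variable domains, and that every $\lval$ is owned by exactly one component since it is rooted at a single variable --- is precisely the structural argument the paper leaves implicit, and your closing remark that an lvalue built from field accesses and slices never straddles the two components is the one point genuinely worth making explicit, since it is what licenses reducing $\expsem{(\mem_1,\mem_1')}{\lval}$ to an evaluation in a single component.
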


\begin{lemma}[Low equivalence update]\label{lemma:low_equivalence_update}
	\begin{align*}
		&\mem_1 \lequiv{\tscope_a\sqcup \tscope_b} \mem_2 \band \mem_3 \lequiv{\tscope_c\sqcup \tscope_d} \mem_4 \implies \\
		&\mem_1[\lval \mapsto \mem_3(e)] \lequiv{\tscope_a[\lval \mapsto \tscope_c(e)] \sqcup \tscope_b[\lval \mapsto \tscope_d(e)]} \mem_2[\lval \mapsto \mem_4(e)]
	\end{align*}
\end{lemma}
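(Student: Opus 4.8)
The plan is to prove this as the low-equivalence analogue of the value-update typing Lemma~\ref{lemma:lval updates_preserves_type}: it is a congruence property stating that updating the same lvalue $\lval$ in two low-equivalent states with two low-equivalent values preserves low equivalence. Write $\mem_1'' = \mem_1[\lval\mapsto\mem_3(e)]$, $\mem_2'' = \mem_2[\lval\mapsto\mem_4(e)]$ and $\tscope'' = \tscope_a[\lval\mapsto\tscope_c(e)] \sqcup \tscope_b[\lval\mapsto\tscope_d(e)]$. Unfolding the definition of $\lequiv{}$, it suffices to fix an arbitrary $\lval'$ with $\labelof{\lval'}{\tscope''} = \low$ and establish $\expsem{\mem_1''}{\lval'} = \expsem{\mem_2''}{\lval'}$. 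I would case-split on how $\lval'$ relates to the updated location $\lval$. Since any $\lval'$ decomposes into a part lying inside $\lval$ and a part disjoint from $\lval$, and a read is determined by its slices (and $\labelof{\lval'}{\tscope''}=\low$ descends to each slice of $\lval'$), it is enough to treat the two clean cases $\lval' \hash \lval$ and ``$\lval'$ is a sub-slice of $\lval$'' separately.

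For the disjoint case $\lval' \hash \lval$, neither update touches $\lval'$, so $\expsem{\mem_1''}{\lval'} = \expsem{\mem_1}{\lval'}$ and $\expsem{\mem_2''}{\lval'} = \expsem{\mem_2}{\lval'}$. By the update-operator guarantee, on lvalues disjoint from $\lval$ the update only overapproximates, so $\tscope_a(\lval') \lesstype (\tscope_a[\lval\mapsto\cdot])(\lval')$ and likewise for $\tscope_b$; together with monotonicity of $\sqcup$ this yields $\labelof{\lval'}{\tscope_a\sqcup\tscope_b} \sqsubseteq \labelof{\lval'}{\tscope''} = \low$, hence $\labelof{\lval'}{\tscope_a\sqcup\tscope_b} = \low$. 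The first hypothesis $\mem_1 \lequiv{\tscope_a\sqcup\tscope_b} \mem_2$ then gives $\expsem{\mem_1}{\lval'} = \expsem{\mem_2}{\lval'}$, closing this case.

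For the overlap case, where $\lval'$ is (a sub-slice of) $\lval$, the reads resolve to the corresponding sub-slice of the freshly written values: $\expsem{\mem_1''}{\lval'}$ is the sub-slice of $\mem_3(e)$ picked out by $\lval'$, which by the semantics equals $\expsem{\mem_3}{e'}$ for the sliced expression $e'$ obtained from $e$ at the indices of $\lval'$ (and symmetrically $\expsem{\mem_2''}{\lval'} = \expsem{\mem_4}{e'}$). At this location $\tscope''$ carries the type $\tscope_c(e) \sqcup \tscope_d(e)$, so $\labelof{\lval'}{\tscope''} = \low$ forces the slices of both $\tscope_c(e)$ and $\tscope_d(e)$ over $\lval'$ to be $\low$; by the slicing rules \textsc{T-AlignedSlice}/\textsc{T-NonAlignedSlice} this means $\tscope_c \vdash e' : \type_c'$ and $\tscope_d \vdash e' : \type_d'$ with $\getlabel(\type_c') = \getlabel(\type_d') = \low$. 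Applying Lemma~\ref{lemma:expression_evaluation_of_consistent_states} to $e'$ under the second hypothesis $\mem_3 \lequiv{\tscope_c\sqcup\tscope_d} \mem_4$ gives $\expsem{\mem_3}{e'} = \expsem{\mem_4}{e'}$, hence $\expsem{\mem_1''}{\lval'} = \expsem{\mem_2''}{\lval'}$.

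The main obstacle is this overlap case. First, Lemma~\ref{lemma:expression_evaluation_of_consistent_states} requires the state-typing premises $\tscope_c \vdash \mem_3$ and $\tscope_d \vdash \mem_4$, which are absent from the present statement; they are, however, available at every use site inside the consistency relation $\consistent{}{}$ of the soundness proof, so I would either thread them through or add them as explicit hypotheses. Second, the slice-level bookkeeping is delicate: I must ensure that reads of $\lval'$ straddling the boundary of $\lval$ are correctly split into an inside part (this case) and a disjoint part (the previous case), that the slicing of $e$ that yields $e'$ aligns with the label slices of $\tscope_c(e)\sqcup\tscope_d(e)$, and that the interaction of the two update operators with their join $\sqcup$ is faithfully mirrored by the corresponding memory writes. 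The disjoint case, by contrast, is routine given the update-operator guarantees.
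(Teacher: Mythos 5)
There is no proof in the paper to compare against: Lemma~\ref{lemma:low_equivalence_update} is one of the auxiliary lemmas stated \emph{without proof} in the appendix (only the sufficient-condition lemma and the two soundness theorems receive proofs there); it is invoked as a black box in the soundness-of-labeling proof, in the function-call and extern cases, to establish low equivalence of the copied-in local states and of the copied-out caller states. So your proposal has to be judged on its own merits rather than against a reference argument.

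On those merits it is essentially right: unfolding $\lequiv{}$, splitting on $\lval' \hash \lval$ versus $\lval'$ inside $\lval$, discharging the disjoint case via the update-operator guarantee (labels only grow under $\tscope[\lval\mapsto\cdot]$ on non-overlapping lvalues) together with the first hypothesis, and reducing the overlap case to evaluation agreement on the sliced expression is the natural argument. The gap you flag is real, but it is best seen as a defect of the lemma statement rather than of your structure: Lemma~\ref{lemma:expression_evaluation_of_consistent_states} indeed requires $\tscope_c \vdash \mem_3$ and $\tscope_d \vdash \mem_4$, which the statement does not supply, and at the paper's use sites those facts are available (the surrounding soundness proof carries the typing of both states in its assumptions), so adding them as hypotheses is harmless. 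An alternative that keeps the lemma exactly as stated: prove evaluation agreement for low-typed expressions directly by induction on $e$, observing that the expression rules (\textsc{T-SingleSliceBs}, \textsc{T-AlignedSlice}, \textsc{T-NonAlignedSlice}) only ever \emph{join} labels and never use interval information to lower a label, so $\getlabel$ of $e$'s type being $\low$ under both $\tscope_c$ and $\tscope_d$ forces every slice $e$ reads to be $\low$ in $\tscope_c\sqcup\tscope_d$; then $\mem_3 \lequiv{\tscope_c\sqcup\tscope_d} \mem_4$ alone yields $\expsem{\mem_3}{e} = \expsem{\mem_4}{e}$, with no state-typing premise at all. Either repair closes the argument; the remaining slice-alignment bookkeeping you mention is genuine but routine, since $\labelof{\lval'}{\tscope''}$ is by definition a join over all slices overlapping $\lval'$, which is exactly what the non-aligned slicing rule needs.
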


\subsection{Soundness of abstraction}
\begin{theorem}
	\begin{align*}
		\forall \, \stmt \, \tc \, \mem \, \tscope \, pc \, \Gamma. \ & \tc, pc, \tscope \vdash \stmt : \Gamma
		\ \implies \\
		&\tc  \vdash \stmtctx
		\band
		\tscope \vdash \mem
		\implies \\
		&\exists \mem' . {\stmtctx}: \sem{\mem}{\stmt}{\mem'}
		\band
		\exists \tscope' \in \Gamma. \ \tscope' \vdash \mem'
	\end{align*}
\end{theorem}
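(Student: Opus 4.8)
The plan is to prove the theorem by induction on the derivation of $\tc, pc, \tscope \vdash \stmt : \Gamma$. Inducting on the derivation rather than on the syntax of $\stmt$ is essential: in the \textsc{T-Call}, \textsc{T-Table}, and \textsc{T-Extern} cases the relevant sub-derivation types a function or action \emph{body} fetched from $\tc$ rather than a syntactic subterm of $\stmt$, but since P4 forbids recursion every such derivation is finite, so the induction is well-founded. This is the single-run abstraction-soundness statement, so none of the low-equivalence machinery is needed; the two workhorse facts are Lemma~\ref{lemma:expressions_have_types_same_as_their_values} (a well-typed expression always evaluates to a value matching its type, giving both \emph{termination} of expression evaluation and the interval bound) and Lemma~\ref{lemma:lval updates_preserves_type} (updating an lvalue with a value of the assigned type preserves well-typedness). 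Throughout I use the standing assumption $\tc \vdash \stmtctx$, which supplies the abstraction-soundness guarantees for tables and externs.

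For the straightforward cases: in \textsc{T-Assign}, Lemma~\ref{lemma:expressions_have_types_same_as_their_values} yields the value $\val = \expsem{\mem}{e}$ with $\val : \type$; rule \textsc{S-Assign} always terminates; since $\raisel$ does not alter intervals we still have $\val : \type'$, so Lemma~\ref{lemma:lval updates_preserves_type} gives $\tscope[\lval \mapsto \type'] \vdash \mem[\lval \mapsto \val]$. In \textsc{T-Seq} I chain the two induction hypotheses, threading the intermediate state type $\tscope_1 \in \Gamma_1$ that types the intermediate state as the input for typing $\stmt_2$. In \textsc{T-Cond} and \textsc{T-Trans} I evaluate the guard (it exists by Lemma~\ref{lemma:expressions_have_types_same_as_their_values}), select the branch/state taken by the semantics, use Hypothesis~\ref{hyp:interval_typedness_boolean_expressions_refinement} (resp.\ Hypothesis~\ref{hyp:interval_typedness_select_expressions_refinement}) to show that the \emph{refined} state type still types $\mem$, and apply the induction hypothesis. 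When the guard label is $\high$ the result set is wrapped in $\joinOnHigh{\cdot}{\high}$; here Lemma~\ref{lemma:join_does_not_modify_the_intervals} recovers a joined state type $\tscope' \sqsupseteq$ the one produced by the induction hypothesis that still types $\mem'$, so membership in the final $\Gamma$ is preserved. The \textsc{T-EmptyType} case is vacuous, since no $\mem$ satisfies $\emptyframe \vdash \mem$ and hence the premise $\tscope \vdash \mem$ is false.

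The calling cases carry the bookkeeping. In \textsc{T-Call} I first show the copied-in state type $(\tscope_g, \tscope_f)$ types $(\mem_g, \mem_l')$: globals are typed by hypothesis, and each copied-in argument is typed by Lemma~\ref{lemma:expression_reduction_preserves_type}. The induction hypothesis on the body produces the final $(\mem_g', \mem_l'')$ together with a matching $(\tscope_g', \tscope_f') \in \Gamma'$; I then re-establish typing across the copy-out boundary by repeated use of Lemma~\ref{lemma:lval updates_preserves_type} on exactly the parameters selected by $\isOut$. \textsc{T-Table} reduces to \textsc{T-Call} once the correct contract triple has been identified, and \textsc{T-Extern} is analogous, with the extern's $\tscope_t$ (raised to $pc$) playing the role of the post-state typing.

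The main obstacle is the table and extern cases, because the concrete semantics commits to a single behaviour---one action with one argument vector, or one side-effect---whereas the typing rule enumerates all contract cases. The crux is to invoke the abstraction part of $\tc \vdash \stmtctx$ to pin down exactly which contract guard $\phi_j$ the concrete run satisfies, so that (via Hypothesis~\ref{hyp:interval_typedness_externs_and_tables_refinement}) $\refine{\tscope}{\phi_j} \vdash \mem$ and the values returned by the lookup (or produced by the extern) are guaranteed to respect the declared argument/side-effect intervals $\overline\type_j$ (resp.\ $\tscope_t$). Transferring those interval bounds onto the concrete post-state, and then matching the chosen branch back into $\joinOnHigh{\bigcup_j \Gamma_j}{\lbl}$ via Lemma~\ref{lemma:join_does_not_modify_the_intervals}, is where the real work lies; the remaining arithmetic over the global/local split is routine.
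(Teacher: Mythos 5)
Your proposal is correct and follows essentially the same route as the paper's proof: induction on the typing derivation, with Lemmas~\ref{lemma:expression_reduction_preserves_type}, \ref{lemma:lval updates_preserves_type}, \ref{lemma:expressions_have_types_same_as_their_values}, and \ref{lemma:join_does_not_modify_the_intervals} plus Hypotheses~\ref{hyp:interval_typedness_boolean_expressions_refinement}--\ref{hyp:interval_typedness_externs_and_tables_refinement} doing exactly the work you assign them, and the well-typedness of the environment ($\mathrm{extWT}$/$\mathrm{tblWT}$) pinning down the satisfied contract case for externs and tables just as you describe. Your explicit treatment of \textsc{T-EmptyType} as vacuous is a minor completeness point the paper's proof leaves implicit, but otherwise the two arguments coincide case by case.
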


\begin{proof}
In this proof we assume that the program is well typed and does not get stuck according to HOL4P4 type system.
By induction on the typing tree of the program $stmt$, i.e. $\stmt$.
Note that, initially $\tscope = (\tscope_g,\tscope_m)$ and $\mem=(m_g,m_l)$.
In the following proof, we know that $\tscope \vdash \mem$ holds in all subcases.

\casesItem{Case assignment}: Here $stmt$ is $\lval := e$. 
From assignment typing rule we know:
\begin{enumerate}
    \item $\tscope \vdash {e} : \type$
    \item $\type' = \raisel(\type, pc)$
    \item $\tscope' = \tscope[\lval \mapsto \type']$
    \item $\Gamma = \{\tscope'\}$
\end{enumerate}

We need to prove $\exists \mem' . {\stmtctx}: \sem{\mem}{\lval := e}{\mem'}$ and $\exists \tscope'' \in \Gamma. \tscope'' \vdash \mem'$. 
From the assignment reduction definition, we can rewrite the goal's conjunctions to:
\begin{enumerate}
    \item $\expsem{\mem}{e}={\val}$ \hspace{69pt} (from assignment reduction)
    \item $\exists \mem' . \  \mem'= \mem[\lval \mapsto \val]$ \hspace{8pt}  (from assignment reduction)
    \item $\exists \tscope'' \in \{\tscope'\}. \tscope'' \vdash \mem'$
\end{enumerate}

\textbf{Goal 1}: we know that the initial variable map is typed using the state type, i.e. $\tscope\vdash \mem$ from assumptions. Using $\tscope\vdash \mem$ and 1 from typing rule, then we can use Lemma \ref{lemma:expressions_have_types_same_as_their_values} to
directly infer that the expression's reduction indeed keeps the type, i.e. $\exists \val . \ \expsem{\mem}{e}={\val} \band \val : \type $ and this resolves goal 1.

\textbf{Goal 2}: trivial, as we can instantiate $\mem'$ to be $\mem[\lval \mapsto \val]$.

\textbf{Goal 3}: we know that assignment typing rule produces a singleton set from 4 of typing rule, thus we can instantiate $\tscope''$ to be $\tscope[\lval \mapsto \type'] $, thus allows us to rewrite the goal to $\tscope[\lval \mapsto \type'] \vdash \mem[\lval \mapsto \val]$.

Given $\tscope\vdash \mem$, this entails (by definition) that $\tscope$ and $\mem$ they contain the same
variable name in the domain, and also the values are well-typed, i.e.
$domain(\tscope) = domain(\mem) \band \forall x \in domain(\mem). \ \mem(x) : \tscope(x)$.

Using Lemma \ref{lemma:expression_reduction_preserves_type}, we know that the assigned value $\val$ can be typed with $\type$, i.e. $\val:\type$.
The assignment typing rule raises the labels using the function raise, however by definition we know that the abstraction is unaffected,
so the interval of $\type'$ and $\type$ are the same, thus we can infer that $\val:\type'$.

Using Lemma \ref{lemma:lval updates_preserves_type}, we can see that the update preserves the type, thus goal proved.

\casesItem{Case condition}: Here $stmt$ is $\texttt{if } e \texttt{ then } \stmt_1 \texttt{ else } \stmt_2$. 
From conditional typing rule we know:
\begin{enumerate}
    \item $\tscope \vdash e : \type$
    \item $\lbl = \getlabel(\type)$
    \item $pc' = pc \sqcup \lbl$
    \item $\tc, pc', (\refine{\tscope}{e}) \vdash \stmt_1 : \Gamma_1$
    \item $\tc, pc', (\refine{\tscope}{\neg e}) \vdash \stmt_2 : \Gamma_2$
    \item $\Gamma' = \Gamma_1 \cup \Gamma_2 $
    \item $\Gamma'' =  {\begin{cases}
                  \text{join}(\Gamma') \hspace{15pt} \text{if } \lbl = {\high} \\
                  \Gamma'   \hspace{38pt}  \text{otherwise}
              \end{cases}}$
\end{enumerate}

We need to prove $\exists \mem' . {\stmtctx}: \sem{\mem}{\texttt{if } e \texttt{ then } \stmt_1 \texttt{ else } \stmt_2}{\mem'}$ and $\exists \tscope'' \in \Gamma''. \tscope'' \vdash \mem'$.

In the goal, the boolean guard $e$ evaluates to $true$ or $false$.
So we will get two goal cases with similar proofs. This only solves for $e$, while case $\neg e$ follows the same proof strategy.
We can rewrite the goal to:
\begin{enumerate}
    \item $\exists \mem' . \ {\stmtctx} : \sem{\mem}{\stmt_1}{\mem'}$
    \item $\exists \tscope'' \in \Gamma''. \tscope'' \vdash \mem'$
\end{enumerate}

In this proof, we will get two induction hypotheses for statement: for $\stmt_1$ call it \textsc{IH1} and for $\stmt_2$ call it \textsc{IH2}.

\textsc{IH1} is the following (note that \textsc{IH2} is the same, but instantiated for $\stmt_2$): 
\begin{align*}
	\forall \tc \, \mem \, \tscope \, pc \, \Gamma . \ \tc, pc, \tscope \vdash \stmt_1 : \Gamma \ &\implies \\
	\tc  \vdash \stmtctx \band \tscope \vdash \mem &\implies \\
	\exists \mem' . \  {\stmtctx}: \sem{\mem}{\stmt_1}{\mem'} &\band \exists \tscope' \in \Gamma. \ \tscope' \vdash \mem'
\end{align*}

We first prove that the set of refined state types using $e$ can still type the staring concrete memory $\mem$.
This we can show, because we know initially $\tscope \vdash \mem$, and we know that $e$ is typed as a boolean (assumed to be well-typed), and we know that $e$ reduces to $true$ from the reduction rule,
these allow us to infer $(\refine{\tscope}{e}) \vdash \mem$ using Hyp \ref{hyp:interval_typedness_boolean_expressions_refinement}.

Now we instantiate the induction
hypothesis \textsc{IH1} using the following $(\tc ,\mem, \refine{\tscope}{e}, pc', \Gamma_1)$,
to show that  exists $\mem'$ such that ${\stmtctx}: \sem{\mem}{\stmt_1}{\mem'}$, i.e. there indeed exists a transition to a final configuration in the semantics to $\mem'$ (which resolves goal 1). Additionally, we can show from \textsc{IH1} that exists $\tscope' \in \Gamma_1$ such that $\tscope' \vdash \mem'$.

Now we implement cases on the expression's label $\lbl$ being {\high} or {\low}:

\begin{enumerate}[label=\textbf{case}]
    \item $\getlabel(\type) = {\high} $:
          we need to prove $\exists \tscope'' \in \text{join } \{\Gamma_1 \cup \Gamma_2 \}. \tscope'' \vdash \mem'$. Then it is easy to deduct that the goal holds; because we showed that there is a state type $\tscope' \in \Gamma_1$ such that it is a sound abstraction of the final state $\tscope' \vdash \mem'$, and we know that join operation does not change the abstraction, it just modifies the security label. Hence, indeed there exists a state type $\tscope''$ in $\text{join } \{\Gamma_1 \cup \Gamma_2 \}$ such that it is also a sound abstraction of final state $\tscope'' \vdash \mem'$.

    \item $\getlabel(\type) = {\low} $: we need to prove $\exists \tscope'' \in \{\Gamma_1 \cup \Gamma_2 \} . \tscope'' \vdash \mem'$, which is trivially true.
\end{enumerate}

For the negation case, use \textsc{IH2} and follow the same steps.

\casesItem{Case sequence}: Here $stmt$ is $\stmt_1;\stmt_2$. 
From sequence typing rule we know:
\begin{enumerate}
    \item $\tc, pc, \tscope \vdash {\stmt_1} : \Gamma_1 $
    \item $\forall \tscope_1 \in \Gamma_1 . \ \tc, pc,  \tscope_1 \vdash \stmt_2 : \Gamma_2^{\tscope_1}$
    \item $\Gamma' =  \bigcup_{\tscope_1 \in \Gamma_1} \Gamma_2^{\tscope_1}$
\end{enumerate}

In this proof, we will get two induction hypotheses for statement: for $\stmt_1$ call it \textsc{IH1} and for $\stmt_2$ call it \textsc{IH2}. \\
\textsc{IH1} is (note that \textsc{IH2} is the same, but instantiated for $\stmt_2$):
\begin{align*}
	\forall \tc \mem \tscope \, pc \, \Gamma . \ \tc, pc, \tscope \vdash \stmt_1 : \Gamma &\implies \\
	\tc  \vdash \stmtctx \band \tscope \vdash \mem &\implies \\
	\exists \mem' . {\stmtctx}&: \sem{\mem}{\stmt_1}{\mem'} \band \exists \tscope' \in \Gamma. \tscope' \vdash \mem'
\end{align*}

We need to prove $\exists \mem'' . \ {\stmtctx}: \sem{\mem}{\stmt_1;\stmt_2}{\mem''}$ and $\exists \tscope'' \in \Gamma'. \ \tscope'' \vdash \mem''$. \\

We can rewrite the goal using the definition of sequence case to:
\begin{enumerate}
    \item ${\stmtctx}: \sem{\mem}{\stmt_1}{\mem'}$
    \item $\exists \mem'' . {\stmtctx}: \sem{\mem'}{\stmt_2}{\mem''}$
    \item $\exists \tscope'' \in \Gamma'. \tscope'' \vdash \mem''$
\end{enumerate}

\textbf{Goal 1}: We can instantiate the \textsc{IH1} with $(\tc, \mem, \tscope , pc , \Gamma_1)$ to
infer that exists $\mem_1'$ such that ${\stmtctx}: \sem{\mem}{\stmt_1}{\mem_1'}$, and also exists $\tscope' \in \Gamma_1$ such that $\tscope' \vdash \mem_1'$. Since the semantics are deterministic, $\mem_1'$ and $\mem'$ are equivalent, thus it holds ${\stmtctx}: \sem{\mem}{\stmt_1}{\mem'}$ and $\tscope' \vdash \mem'$.

\textbf{Goal 2 and 3} : We showed from \textsc{IH1} that $\tscope' \vdash \mem'$, now we can instantiate 2 from sequence typing rule with $\tscope'$.
Now we can to instantiate \textsc{IH2} with $(\tc, \mem', \tscope' , pc , \Gamma_2^{\tscope'})$, and infer that exists $\mem_2'$ such that ${\stmtctx}: \sem{\mem'}{\stmt_2}{\mem_2'}$, and also exists $\tscope''' \in \Gamma_2^{\tscope'}$ such that $\tscope''' \vdash \mem_2'$. Since the semantics are deterministic, $\mem_2'$ and $\mem''$ are equivalent, thus it holds ${\stmtctx}: \sem{\mem}{\stmt_2}{\mem''}$ and $\tscope''' \vdash \mem'$.

From 3 in sequence typing rule, we know that $\Gamma'$ is the union of all resulted state type sets, such that it can type $\stmt_2$,
since we know that $\tscope''' \in \Gamma_2^{\tscope'}$ will be in $\Gamma'$, then we prove the goal $\exists \tscope'' \in \Gamma'. \tscope'' \vdash \mem''$.

\casesItem{Case function call}: Here $stmt$ is $f(e_{1} \cons e_{n})$. 
From call typing rule we know (note that here we explicitly write the global and local state type):
\begin{enumerate}
    \item $(\tscope_g, \tscope_l) \vdash  e_{i} : \type_{i} $
    \item $(\stmt,(x_{1},d_{1}) \cons (x_{n}, d_{n})) = \tcv(f) $
    \item $\tscope_f =\{x_{i} \mapsto \type_{i}\}$
    \item $ \tcv, pc, (\tscope_g, \tscope_f) \vdash \stmt : \Gamma' $
    \item $\Gamma'' =
              \{(\tscope'_{g}, \tscope_{l})  [e_{i} \mapsto \tscope'_{f}(x_{i}) \mid \isOut(d_{i})]
              \mid (\tscope'_{g}, \tscope'_{f}) \in \Gamma' \}$
\end{enumerate}

In the following proof, we know that $(\tscope_g,\tscope_l) \vdash (\mem_g,\mem_l)$ and $\tcv  \vdash \stmtctxv$ hold.
Additionally, we get an induction hypothesis \textsc{IH} for the body of the function $\stmt$.
\begin{align*}
	\forall \tcv \ \mem \tscope \, pc \, \Gamma . \ &\tcv, pc, \tscope \vdash \stmt : \Gamma \implies \\
	\tcv  \vdash &\stmtctxv \band \tscope \vdash \mem \implies \\
	\exists \mem' . &\stmtctxv: \sem{\mem}{\stmt}{\mem'} \band \exists \tscope' \in \Gamma. \tscope' \vdash \mem'
\end{align*}

We need to prove $\exists \mem' . \ {\stmtctxv}: \sem{(\mem_g,\mem_l)}{f(e_{1} \cons e_{n})}{\mem'}
    \band
    \exists \tscope' \in \Gamma''. \ \tscope' \vdash \mem'$

From call reduction rule we can rewrite the goal to:
\begin{enumerate}
    \item  $\exists \stmt \  (x_{1},d_{1}) \cons (x_{n}, d_{n}) . \ \ (\stmt,(x_{1},d_{1}) \cons (x_{n}, d_{n})) = \stmtctxv(f)$
    \item  $\exists \mem_f  . \  \mem_f =\{x_{i} \mapsto \expsem{(\mem_{g}, \mem_{l})}{e_{i}}\}$
    \item  $\exists (\mem'_{g}, \mem'_{f}) . \ {\stmtctxv}: \sem{(\mem_{g}, \mem_{f})}{\stmt}{(\mem'_{g}, \mem'_{f})} $
    \item  $\exists \mem'' . \mem'' = (\mem'_{g}, \mem_{l})[e_{i} \mapsto \mem'_{f}(x_{i}) \mid \isOut(d_{i})]$
    \item  $\exists \tscope' \in \Gamma''. \tscope' \vdash \mem''$
\end{enumerate}

\textbf{Goal 1}:
From $\tcv  \vdash \stmtctxv$ we know indeed the function's body and signature found in the semantics is the same found in the typing rule.

\textbf{Goal 2}:
Trivial, the existence can be instantiated with $\{x_{i} \mapsto \expsem{(\mem_{g}, \mem_{l})}{e_{i}}\}$.

\textbf{Goal 3}:
To prove that there exists a state where the body of the function reduces to, we need to use the \textsc{IH}.
Thus, we first
need to show that the resulted copy-in map is also well-typed i.e. $\tscope_f\vdash\mem_f$, more specifically $\forall i . \ \{x_{i} \mapsto \type_{i}\} \vdash \{x_{i} \mapsto \expsem{(\mem_{g}, \mem_{l})}{e_{i}}\}$. Given that initially the state type can type the state $(\tscope_g,\tscope_l) \vdash (\mem_g,\mem_l)$ from assumptions and given that the expressions $e_i$ have a type $\type_i$ from typing rule 1, now we can use Lemma \ref{lemma:expression_reduction_preserves_type} to infer that $\forall \val_i : \type_i$ such that $\val_i$ is the evaluation of $\expsem{(\mem_{g}, \mem_{l})}{e_{i}}$. This leads us to trivially infer that $\forall i . \ \{x_{i} \mapsto \type_{i}\} \vdash \{x_{i} \mapsto \val_i \}$ holds.

Now we can use the \textsc{IH} by instantiating it to $(\tcv,(\mem_g,\mem_f), (\tscope_g,\tscope_f), pc, \Gamma')$ in order to infer that exists $(\mem'_{g}, \mem'_{f})$ such that ${\stmtctxv}: \sem{(\mem_{g}, \mem_{f})}{\stmt}{(\mem'_{g}, \mem'_{f})}$ and exists $(\tscope_g'',\tscope_f'') \in \Gamma'$ such that $(\tscope_g'',\tscope_f'') \vdash (\mem'_g,\mem'_f)$. 

\textbf{Goal 4}:
Trivial, we can instantiate the existence by $(\mem'_{g}, \mem_{l})[e_{i} \mapsto \mem'_{f}(x_{i}) \mid \isOut(d_{i})]$.

\textbf{Goal 5}:
The goal is to prove copy-out operation to be well-typed, thus we can rewrite the goal to
$\exists \tscope' \in \{(\tscope'_{g}, \tscope_{l}) [e_{i} \mapsto \tscope'_{f}(x_{i}) \mid \isOut(d_{i})]
    \mid (\tscope'_{g}, \tscope'_{f}) \in \Gamma' \}$
such that
$\tscope '
    \vdash
    (\mem'_{g}, \mem_{l})[e_{i} \mapsto \mem'_{f}(x_{i}) \mid \isOut(d_{i})]$.

We can choose $\tscope'$ to be $(\tscope_{g}'', \tscope_{l}) [e_{i} \mapsto \tscope_{f}''(x_{i}) \mid \isOut(d_{i})]$.

Since we are able to choose a state type that types the final state, we can rewrite the goal to prove again to be
$(\tscope_{g}'', \tscope_{l}) [e_{i} \mapsto \tscope_{f}''(x_{i}) \mid \isOut(d_{i})]
    \ \vdash \
    (\mem'_{g}, \mem_{l})[e_{i} \mapsto \mem'_{f}(x_{i}) \mid \isOut(d_{i})]$.

First, we know that $\tscope_l\vdash\mem_l$ holds from the assumptions. Also, we showed in (Goal 3) that $(\tscope_g'',\tscope_f'') \vdash (\mem'_g,\mem'_f)$, thus trivially $\tscope_g''\vdash \mem'_g$ and $\tscope_f'' \vdash \mem'_f$ hold. Thus, we can deduct that $(\tscope_{g}'', \tscope_{l}) \vdash (\mem_{g}'', \mem_{l})$, and $\mem'_{f}(x_{i}):\tscope_{f}''(x_{i})$, and then we can use Lemma \ref{lemma:lval updates_preserves_type} to deduct that the update preserves the well-typedness, i.e. $(\tscope_{g}'', \tscope_{l}) [e_{i} \mapsto \tscope_{f}''(x_{i})]
    \ \vdash \
    (\mem'_{g}, \mem_{l})[e_{i} \mapsto \mem'_{f}(x_{i})]$, which proves the goal.

\casesItem{Case extern}: Here $stmt$ is $f(e_{1}\cons e_{n})$. From extern typing rule we know (note that here we explicitly write the global and local state type):

\begin{enumerate}
    \item $(\tscope_g, \tscope_l) \vdash  e_{i} : \type_{i}$
    \item $ (\text{Cont}_{\mathrm{E}},(x_{1},d_{1}) \cons (x_{n}, d_{n})) = \tcv(f) $
    \item $ \tscope_f =\{x_{i} \mapsto \type_{i}\}$
    \item $\forall (\tscope_i, \phi,\tscope_t) \in \text{Cont}_{\mathrm{E}}. \ (\tscope_g, \tscope_l) \sqsubseteq \tscope_i$
    \item $ \Gamma' = \{ \tscope' \concat \raisel(\tscope_t, pc) \ | \ (\tscope_i,\phi,\tscope_t)   \in \text{Cont}_{\mathrm{E}} \band \refine{(\tscope_{g}, \tscope_f)}{\phi} = \tscope' \neq \bullet  \} $
    \item   $ \Gamma'' =
              \{(\tscope'_{g}, \tscope_{l}) [e_{i} \mapsto \tscope'_{f}(x_{i}) \mid \isOut(d_{i})]
              \mid (\tscope'_{g}, \tscope'_{f}) \in \Gamma'\}$
\end{enumerate}

In the following proof, we know that $(\tscope_g,\tscope_l) \vdash (\mem_g,\mem_l)$ and $\tcv  \vdash \stmtctxv$ hold.

We need to prove $\exists \mem' . \ {\stmtctxv}: \sem{(\mem_g,\mem_l)}{f(e_{1} \cons e_{n})}{\mem'}
    \band
    \exists \tscope'' \in \Gamma''. \ \tscope'' \vdash \mem'$ \\

From extern reduction rule we can rewrite the goal to:
\begin{enumerate}
    \item $\exists (sem_{f},(x_{1},d_{1}) \cons (x_{n}, d_{n})) . \ (sem_{f},(x_{1},d_{1}) \cons (x_{n}, d_{n})) = \stmtctxv(f) $
    \item $\exists \mem_f . \ \mem_f =\{x_{i} \mapsto \expsem{(\mem_{g}, \mem_{l})}{e_{i}}\}$
    \item $\exists (\mem'_{g}, \mem'_{f}) . \ (\mem'_{g}, \mem'_{f}) = sem_{f}(\mem_{g}, \mem_{f})$
    \item $\exists \mem'' . \mem'' = (\mem'_{g}, \mem_{l})[e_{i} \mapsto \mem'_{f}(x_{i}) \mid \isOut(d_{i})]$
    \item $\exists \tscope'' \in \Gamma'' . \tscope'' \vdash \mem''$
\end{enumerate}

\textbf{Goal 1}:
From the environment's well-typedness $\tcv  \vdash \stmtctxv$, we know $\text{domain}(C) \cap \text{domain}(F) = \emptyset$ and $\text{extWT } C \ X$ holds. This mean that indeed the extern is defined only in $C$. Additionally, from well-typedness $\tcv  \vdash \stmtctxv$, that if $C(f)= (sem_f,\overline{(x,d)})$ then $X(f) =(\text{Cont}_E,\overline{(x,d)})$, thus indeed exist $\text{Cont}_E$ and signature $(x_{1},d_{1}) \cons (x_{n}, d_{n})$. 

\textbf{Goal 2}:
Trivial, by instantiating $\mem_f$ to be $\{x_{i} \mapsto \expsem{(\mem_{g}, \mem_{l})}{e_{i}}\}$.

We can here also prove that the resulted copy-in map is also well-typed i.e. $\tscope_f\vdash\mem_f$, more specifically $\forall i . \ \{x_{i} \mapsto \type_{i}\} \vdash \{x_{i} \mapsto \expsem{(\mem_{g}, \mem_{l})}{e_{i}}\}$.
Given that initially the typing state types the state $(\tscope_g,\tscope_l) \vdash (\mem_g,\mem_l)$ from assumptions and given that the expressions $e_i$ have a type $\type_i$ from typing rule 1, now we can use Lemma \ref{lemma:expression_reduction_preserves_type} to infer that $\forall \val_i : \type_i$ such that $\val_i$ is the evaluation of $\expsem{(\mem_{g}, \mem_{l})}{e_{i}}$. This leads us to trivially infer that $\forall i . \ \{x_{i} \mapsto \type_{i}\} \vdash \{x_{i} \mapsto \val_i \}$ holds.

\textbf{Goal 3}:
From $\tcv  \vdash \stmtctxv$, we know that $\text{extWT } C \ X$ holds, and from its definition, we know that indeed exists $(\tscope_i,\phi,\tscope_t)$ such that $\phi(\mem_g,\mem_f)$, this means that indeed there exists a contract's predicate satisfied by the values in the initial concrete input state, i.e.  $\phi(\mem_g,\mem_f)$.
This implies, from the definition of $\text{extWT } C \ X$, that indeed exists $(\mem_g',\mem_f')$ such that $(\mem_g',\mem_f') =  sem_f(\mem_g,\mem_f)$.

\textbf{Goal 4}: Trivial, by instantiating $\mem''$ to $(\mem'_{g}, \mem_{l})[e_{i} \mapsto \mem'_{f}(x_{i}) \mid \isOut(d_{i})]$. 

\textbf{Goal 5}: We can rewrite the goal to exists $\tscope'' \in \{(\tscope'_{g}, \tscope_{l}) [e_{i} \mapsto \tscope'_{f}(x_{i}) \mid \isOut(d_{i})]
    \mid (\tscope'_{g}, \tscope'_{f}) \in \Gamma'\} $ such that $\tscope'' \vdash (\mem'_{g}, \mem_{l})[e_{i} \mapsto \mem'_{f}(x_{i}) \mid \isOut(d_{i})]$. 

We can prove this goal
by first find a $\tscope_A \in \Gamma'$ such that it types the final states of the extern's semantic $(\mem_g',\mem_f')$. Second, we find the $\tscope'' \in \Gamma''$ such that it can type the final state $\mem''$ after copying out the extern. \\

We previously established $\tscope_f\vdash\mem_f$ (from goal 2), also given that $\tscope_g\vdash\mem_g$ from assumptions we can trivially $(\tscope_g,\tscope_f)\vdash(\mem_g,\mem_f)$.
Since we previously showed that $\phi(\mem_g,\mem_f)$ (from goal 3), now we can use Hyp \ref{hyp:interval_typedness_externs_and_tables_refinement} in order to infer that the refined state $\refine{(\tscope_g,\tscope_f)}{\phi}$ can also type $(\mem_g,\mem_f)$ also we infer that it is not empty, i.e. $\refine{(\tscope_g,\tscope_f)}{\phi}\vdash(\mem_g,\mem_f)$ and $\refine{(\tscope_g,\tscope_f)}{\phi} \neq \bullet$.

The definition of $\text{extWT } C \ X$ states that
$\tscope_t \vdash (\mem_g',\mem_f')$ holds for the set of variables that the extern's semantics has changed i.e. $\{x. \ (\mem_g,\mem_f)(x) \neq (\mem_g',\mem_f')(x) \} \subseteq domain(\tscope_t)$.

Consequently, we can further prove that indeed exists a $\tscope_A$ in $\Gamma'$ (from 4 in typing rule of extern) that can type the output of the extern's semantics $(\mem_g',\mem_f')$ including the unchanged variables. Thus, we can make cases on extern's semantics input and output as following:

\begin{enumerate}[label=\textbf{case}]
    \item $(\mem_g,\mem_f)(x) = (\mem_g',\mem_f')(x)$:
          This means that variable $x$ not in the domain of $\tscope_t$, thus it is unchanged, therefore it is typed by the refined state $(\mem_g',\mem_f')(x):(\refine{(\tscope_g,\tscope_f)}{\phi})(x)$. Trivially, we can also infer that $(\mem_g',\mem_f')(x):(\refine{(\tscope_g,\tscope_f)}{\phi}\concat \raisel(\tscope_t, pc))(x)$.

    \item $(\mem_g,\mem_f)(x) \neq (\mem_g',\mem_f')(x)$:
          This means that variable $x$ is in the domain of $\tscope_t$, thus it is changed, and the new type of it is in $\tscope_t$. Therefore, we can trivially conclude that $(\mem_g,\mem_f)(x):\tscope_t(x)$, Consequently, since we know that raise does not change the abstraction, and just change labels, we can therefore conclude that $(\mem_g',\mem_f')(x):(\refine{(\tscope_g,\tscope_f)}{\phi}\concat \raisel(\tscope_t, pc))(x)$
\end{enumerate}

These cases show
that we can select $\tscope_A$ such that $\tscope_A \in \Gamma'$ to be $(\refine{(\tscope_g,\tscope_f)}{\phi}\concat \raisel(\tscope_t, pc))$, because it can indeed type $(\mem_g',\mem_f')$. i.e. $(\refine{(\tscope_g,\tscope_f)}{\phi}\concat \raisel(\tscope_t, pc)) \vdash (\mem_g',\mem_f')$. For simplicity in the rest of the proof, let us rewrite $\tscope_A = (\tscope_{A_g},\tscope_{A_f})$, where $\tscope_{A_g}$ is the global part of the pair $(\refine{(\tscope_g,\tscope_f)}{\phi}\concat \raisel(\tscope_t, pc))$ and $\tscope_{A_f}$ is the local part of the pair.
Thus, we can say that $\tscope_{A_g}\vdash\mem_g'$ and $\tscope_{A_f}\vdash\mem_f'$.

Now we need to find $\tscope'' \in \Gamma''$ such that it types $(\mem'_{g}, \mem_{l})[e_{i} \mapsto \mem'_{f}(x_{i}) \mid \isOut(d_{i})]$ order to prove Goal 5.

Given from assumptions $\tscope_l\vdash\mem_l$, and we showed that $\tscope_{A_g}\vdash\mem_g'$ and $\tscope_{A_f}\vdash\mem_f'$. In 5 of the typing rules, we pick $(\tscope_g',\tscope_f')$ such that it is in  $\Gamma'$ to be $(\tscope_{A_g},\tscope_{A_f})$,
Now we conduct cases on the direction of the parameter $\isOut(d_i)$ being \texttt{out} or not.
\begin{enumerate}[label=case]
    \item $\neg \isOut(d_i)$:
          Then $(\tscope_{A_g},\tscope_l)$ are unchanged, similarly in the semantics $(\mem'_{g}, \mem_{l})$ are also unchanged. Thus they can still be typed as $(\tscope_{A_g},\tscope_l)\vdash(\mem'_{g}, \mem_{l})$
    \item $\isOut(d_i)$:
          Then $(\tscope_{A_g},\tscope_l)$ are updated with $e_i \mapsto \tscope_{A_f}(x)$, similarly the semantics state $(\mem_g',\mem_l)$ is updated with $e_i \mapsto \mem_f'(x)$. Previously we showed that $\tscope_{A_f}\vdash\mem_f'$, this entails by the definition of state typedness $\mem_f'(x):\tscope_{A_f}(x)$. This leads to the point that the modifications of $e_i$'s type in state type $(\tscope_{A_g},\tscope_l)$ and value in state $(\mem_g',\mem_l)$ keeps them well typed. Therefore, $(\tscope_{A_g},\tscope_l)\vdash(\mem'_{g}, \mem_{l})$ holds.
\end{enumerate}

We can finally conclude that goal 5 can be resolved by picking the $\tscope''$ to be $(\tscope_{A_g},\tscope_l)$, the goal is now proven.

\casesItem{Case table application}: Here $stmt$ is $\texttt{apply} \ tbl $. 
From table typing rule we know:
\begin{enumerate}
    \item $(\overline{e},\text{Cont}_{\mathrm{tbl}})= \tcv(tbl)$
    \item $\tscope \vdash {{e}_i} : {\type}_i$
    \item $\lbl = \bigsqcup_{i} \getlabel(\type_i)$
    \item $pc' = pc \sqcup \lbl $
    \item  $\forall (\phi_j, (a_j, \overline{\tau}_j)) \in \text{Cont}_{\mathrm{tbl}}. \
              (\tscope_{g_j},\tscope_{l_j})= \refine{\tscope}{\phi_j} \band \\
              (\stmt_j,(x_{j_1},\text{none}) \cons (x_{j_n}, \text{none})) = \tcv(a_j) \band \\
              \tscope_{a_j} =\{x_{j_i} \mapsto \type_{j_i}\} \band
              T, pc', (\tscope_{g_j},\tscope_{a_j}) \vdash \stmt_j : \Gamma_j$
    \item   $\Gamma' = \cup_j \{ (\tscope_{g_j}' , \tscope_{l_j}) | (\tscope_{g_j}',\tscope_{a_j}') \in \Gamma_j \}$
    \item   $\Gamma'' =  {\begin{cases}
                  \join(\Gamma') \hspace{15pt} \text{if } \lbl = {\high} \\
                  \Gamma'   \hspace{38pt}  \text{otherwise}
              \end{cases}}$
\end{enumerate}

In the following proof, we know that $(\tscope_g,\tscope_l) \vdash (\mem_g,\mem_l)$ and $\tcv  \vdash \stmtctxv$ hold.

We need to prove $\exists \mem' . \  {\stmtctx}: \sem{\mem}{\texttt{apply} \ tbl }{\mem'}$ and $\exists \tscope'' \in \Gamma''. \tscope'' \vdash \mem'$.

And from table reduction rule we can rewrite the goal to:
\begin{enumerate}
    \item  $ \exists \overline{e} \ sem_{tbl}. \  (\overline{e}, sem_{tbl}) = \stmtctxv(tbl) $
    \item  $ \exists (a, \overline{\val}) . \ sem_{tbl}(\expsem{(\mem_{g}, \mem_{l})}{e_{1}} \cons \expsem{(\mem_{g}, \mem_{l})}{e_{n}}) = (a, \overline{\val})$
    \item  $ \exists \stmt \  (x_{1} \cons x_{n}). \ (\stmt,(x_{1},\mathrm{none}) \cons (x_{n}, \mathrm{none})) = E(a)$
    \item  $ \exists \mem_a . \ \mem_a = \{x_{i} \mapsto {\val_i}\} $
    \item  $ \exists (\mem_{g'}, \mem_{a'}) . \  E:\sem{(\mem_{g}, \mem_{a})}{\stmt}{(\mem_{g'}, \mem_{a'})}$
    \item  $ \exists \tscope'' \in \Gamma''. \tscope'' \vdash (\mem_{g'}, \mem_{l})$
\end{enumerate}

In this proof, we will get an induction hypothesis for action call $a (\overline{\val})$ (formalized as a function call), call it \textsc{IH}.
\begin{align*}
	\forall \tcv \ \mem \ \tscope \ pc \ \Gamma . \ &\tcv, pc, \tscope \vdash a (\overline{\val}) : \Gamma \implies \\
	\tcv  \vdash &\stmtctxv \band \tscope \vdash \mem \implies \\
	\exists \mem' . &{\stmtctxv}: \sem{\mem}{a(\overline{\val})}{\mem'} \band \exists \tscope' \in \Gamma. \tscope' \vdash \mem'
\end{align*}

\textbf{Goal 1}:
Trivial, by the well-typedness condition $\tcv  \vdash \stmtctxv$, we know that if the table has a contract (from 1 in typing rule), then indeed there is semantics for it $sem_{tbl}$ and a key list $\overline{e}$ that matches the one in the table typing rule.

\textbf{Goal 2}:
From $\tcv  \vdash \stmtctxv$, we can deduct from condition $\text{tblWT } C \ X $ that indeed exists an action and value list pair $(a,\overline{\val})$ in the contract $\text{Cont}_\mathrm{tbl}$ correlated to a $\phi_j(\mem)$ that holds.

\textbf{Goal 3}:
From $\tcv  \vdash \stmtctxv$ we know indeed the actions's $a$ body and signature found in the semantics is the same found in the typing rule.

\textbf{Goal 4}:
Trivial, by setting $\mem_a$ to be $\{x_{i} \mapsto {\val_i}\}$.

\textbf{Goal 5}:
To prove this goal, we need to use \textsc{IH}. And in order to use \textsc{IH}, we must first show that $(\tscope_{g_j}, \tscope_{a_j}) \vdash (\mem_g,\mem_a)$ by proving $\tscope_{g_j} \vdash \mem_g$ where $(\tscope_{g_j},\tscope_{l_j})= \refine{\tscope}{\phi_j}$ and the copied in is well typed $\tscope_{a_j} \vdash \mem_a$.

Prove $\tscope_{g_j} \vdash \mem_g$:
Since initially given that $\tscope \vdash \mem$ i.e. $(\tscope_g,\tscope_l)\vdash(\mem_g,\mem_l)$,
also we know from $\text{tblWT } C \ X $ that there is $j$ such that the predicate $\phi_j$ is satisfied in $(\mem_g,\mem_l)$ i.e. $\phi_j(\mem_g,\mem_l)$, thus we can use Hyp \ref{hyp:interval_typedness_externs_and_tables_refinement} to deduct that $\refine{\tscope}{\phi_j} \vdash (\mem_g,\mem_l)$, i.e. we can infer that the refined state type is able to type the initial state. Given 5 in the typing rule, we know that $(\tscope_{g_j},\tscope_{l_j})= \refine{\tscope}{\phi_j}$ thus $(\tscope_{g_j},\tscope_{l_j}) \vdash (\mem_g,\mem_l)$, therefore $\tscope_{g_j} \vdash \mem_g$ holds.

Prove $\tscope_{a_j} \vdash \mem_a$:
This goal can be rewritten as $\{x_{i} \mapsto \type_{i}\} \vdash \{x_{i} \mapsto {\val_i}\}$. The proof is trivial by WF definition of tables we know that $\val_i:\type_i$, thus the variable $x_i$ is well typed.

Now, we can instantiate
\textsc{IH} to using $(\tcv,(\mem_g,\mem_a),(\tscope_{g_j}, \tscope_{a_j}), pc', \Gamma_j)$, so we can infer that exists $\mem'$ such that ${\stmtctxv}: \sem{\mem}{\stmt}{\mem'} $ (thus goal 5 is resolved).

\textbf{Goal 6}:
We can also infer from \text{IH} that exists $ \tscope' \in \Gamma_j $ such that $ \tscope' \vdash \mem'$. Let $(\tscope_{g_j}',\tscope_{a_j}') = \tscope'$ and $(\mem_g',\mem_{a}') = \mem'$, thus indeed $(\tscope_{g_j}',\tscope_{a_j}') \vdash (\mem_g',\mem_{a}')$ holds trivially. 

Line 6 of the typing rule iterates over each final state type set and collects the
modified global state and the refined local state, thus $(\tscope_{g_j}',\tscope_{l_j})$ is indeed in $\Gamma'$. Since we proved that $\tscope_{g_j}' \vdash \mem_g'$ holds in the previous step, and also proved $\tscope_{l_j} \vdash \mem_l$ in goal 5, therefore, $(\tscope_{g_j}',\tscope_{l_j}) \vdash (\mem_g',\mem_l)$.

Line 7 of the typing rule changes the labels but not the abstraction, thus the abstraction of the state type $(\tscope_{g_j}',\tscope_{l_j})$ in $\Gamma'$ indeed exists in $\Gamma''$ with labels changed so goal 6 holds.

\end{proof}

\subsection{Soundness of labeling}

\begin{theorem}
	\begin{align*}
		\forall \stmt \ \tc \ pc \ \mem_1 \ \mem_2 \ \mem_1' \ \mem_2' & \ \tscope_1 \ \tscope_2 \ \Gamma_1 \ \Gamma_2 \ E_1 \ E_2. \\
		\tc , pc, \tscope_1 \vdash \stmt : \Gamma_1 &\band \ \tc, pc,  \tscope_2 \vdash \stmt : \Gamma_2 \implies \\
		\tc \vdash \stmtctx_1 & \band \tc \vdash \stmtctx_2 \band \stmtctx_1 \lequiv{\tc} \stmtctx_2 \band \\
		\tscope_1 \vdash \mem_1 & \band \tscope_2 \vdash \mem_2 \band \mem_1 \lequiv{\tscope_1\sqcup\tscope_2} \mem_2 \band \\
		{\stmtctx_1}: \sem{\mem_1}{\stmt}{\mem_1'} & \band {\stmtctx_2}: \sem{\mem_2}{\stmt}{\mem_2'} \implies \\
		\big( \exists \tscope_1' \in \Gamma_1 &\band \tscope_2' \in \Gamma_2. \ \tscope_1' \vdash \mem_1' \\
		&\band \tscope_2' \vdash \mem_2' \band \mem_1' \lequiv{\tscope_1'\sqcup\tscope_2'} \mem_2' \big)
	\end{align*}
\end{theorem}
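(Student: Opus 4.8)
The plan is to proceed by induction on the structure of $\stmt$ (equivalently, on the typing derivation), running a two-execution noninterference argument. In each case the final state types $\tscope_1' \in \Gamma_1$ and $\tscope_2' \in \Gamma_2$ typing $\mem_1'$ and $\mem_2'$ can be supplied by applying the soundness-of-abstraction theorem to each run separately (using $\tscope_i \vdash \mem_i$ and the given reductions), so the genuinely new obligation is always the low-equivalence $\mem_1' \lequiv{\tscope_1' \sqcup \tscope_2'} \mem_2'$. The leaf case \texttt{skip} is immediate. For an assignment $\lval := e$ I would distinguish whether the label of the written slice, after the $pc$-raise, is $\low$ or $\high$ in $\tscope_1' \sqcup \tscope_2'$: in the low case $pc$ and both expression types must be $\low$, so Lemma~\ref{lemma:expression_evaluation_of_consistent_states} forces $\expsem{\mem_1}{e} = \expsem{\mem_2}{e}$; in the high case the slice is irrelevant to low-equivalence. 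Lemma~\ref{lemma:low_equivalence_update} then propagates the input low-equivalence across the update, and Lemma~\ref{lemma:state_equivalence_preservation} absorbs the $pc$-raise, which only moves labels upward.

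For sequential composition $\stmt_1;\stmt_2$ I would decompose each reduction as $\mem_i \to \mem_i'' \to \mem_i'$, apply the induction hypothesis to $\stmt_1$ to obtain intermediate types $\tscope_1'', \tscope_2''$ with $\mem_1'' \consistent{\tscope_1''}{\tscope_2''} \mem_2''$, and then, since \textsc{T-Seq} types $\stmt_2$ starting from exactly each intermediate state type, instantiate the induction hypothesis for $\stmt_2$ at $\tscope_1''$ and $\tscope_2''$. Determinism of the semantics identifies the intermediate states across the two decompositions, and the resulting final types lie in the union that forms $\Gamma'$.

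The conditional $\texttt{if } e \texttt{ then } \stmt_1 \texttt{ else } \stmt_2$ is the crux and splits on the guard label $\lbl = \getlabel(\type)$. When $\lbl = \low$, Lemma~\ref{lemma:expression_evaluation_of_consistent_states} makes the guard evaluate identically in both runs, so both take the same branch; the refinement hypotheses preserve consistency of the refined initial states and the induction hypothesis applies directly, with $\joinOnHigh{\cdot}{\low}$ acting as the identity. The hard part is $\lbl = \high$, where the two runs may take \emph{different} branches so that no shared induction hypothesis is available. Here I would take $\tscope_1', \tscope_2'$ from abstraction soundness, both lying in $\joinOnHigh{\Gamma_1 \cup \Gamma_2}{\high} = \join(\Gamma_1 \cup \Gamma_2)$, and prove low-equivalence pointwise: for any $\lval$ with $\labelof{\lval}{\tscope_1' \sqcup \tscope_2'} = \low$, Lemma~\ref{lemma:joins_low_label_implication} shows $\lval$ is already $\low$ in the pre-join branch types; Lemma~\ref{lemma:branch_on_high_state_preservation} then gives $\mem_i'(\lval) = \mem_i(\lval)$, since a $\low$ slice cannot be modified in a $\high$ context; and Lemma~\ref{lemma:high_program_final_types_subset_of_initial}, together with the label-monotonicity of refinement, pushes the $\low$ label back to $\tscope_1, \tscope_2$, so that the input consistency yields $\mem_1(\lval) = \mem_2(\lval)$. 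Chaining the three equalities gives $\mem_1'(\lval) = \mem_2'(\lval)$. Lemma~\ref{lemma:branch_on_high_is_never_empty} is needed to guarantee that the branch actually taken, even when refined to an empty type, still contributes a final state type to join against. The \texttt{transition} case reuses this argument verbatim, viewing the \texttt{select} arms as a multi-way conditional.

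The \textsc{T-Call}, \textsc{T-Extern} and \textsc{T-Table} cases combine these ingredients with the copy-in/copy-out convention: copy-in preserves consistency on the parameter types (low-typed arguments coincide by Lemma~\ref{lemma:expression_evaluation_of_consistent_states}), the body is handled by the induction hypothesis for \textsc{T-Call}, by the extern contract's low-equivalence guarantee for \textsc{T-Extern}, and by the table contract together with $E_1 \lequiv{T} E_2$ (equal actions with $\overline{\type}$-indistinguishable arguments) for \textsc{T-Table}; copy-out then re-establishes low-equivalence on the caller through Lemma~\ref{lemma:low_equivalence_update}. \textsc{T-Table} additionally inherits the $\joinOnHigh{\cdot}{\lbl}$ treatment when the key label is $\high$, reusing the branch-on-high argument above. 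I expect this high-guard conditional, and its reflection in the high-key table, to be the main obstacle, since it is the only situation where the two executions genuinely diverge and low-equivalence must be recovered from the join structure rather than from a common induction hypothesis.
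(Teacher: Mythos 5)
Your plan is, case for case and lemma for lemma, the paper's own proof: the same induction on the typing derivation, the same division of labour (abstraction soundness supplies the final state types, so low-equivalence is the real obligation), the same treatment of sequence, call, extern and table via copy-in/copy-out, and for the high conditional the same pointwise chain through Lemmas~\ref{lemma:joins_low_label_implication}, \ref{lemma:branch_on_high_state_preservation}, \ref{lemma:high_program_final_types_subset_of_initial} and \ref{lemma:branch_on_high_is_never_empty}.

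There is, however, one concrete gap. You treat the guard as having a single label $\lbl$, but in this two-derivation statement the guard is typed twice, $\tscope_1 \vdash e : \type_1$ and $\tscope_2 \vdash e : \type_2$, and the labels $\lbl_1 = \getlabel(\type_1)$ and $\lbl_2 = \getlabel(\type_2)$ need not agree. In the asymmetric case (say $\lbl_1 = \low$, $\lbl_2 = \high$) your high-case argument fails as written, in two places. First, only the second derivation joins: $\Gamma_1''$ is a plain union (since $\joinOnHigh{\cdot}{\low}$ is the identity), so the claim that both selected types lie in $\join(\Gamma_1 \cup \Gamma_2)$ is false on that side, and Lemma~\ref{lemma:joins_low_label_implication} is unavailable there. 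Second, and more importantly, the first run's branches are typed under $pc' = pc \sqcup \lbl_1$, which may be $\low$, so Lemma~\ref{lemma:branch_on_high_state_preservation} cannot be invoked with the first run's \emph{own} typing to get $\mem_1(\lval) = \mem_1'(\lval)$; ``a low slice cannot be modified in a high context'' simply does not describe that run. The paper closes this sub-case with a cross-configuration move: the \emph{second} derivation types both branches under a $\high$ context and leaves $\lval$ low in all its final types (non-emptiness via Lemma~\ref{lemma:branch_on_high_is_never_empty} is what makes this usable for the branch the first run took), and Lemma~\ref{lemma:branch_on_high_state_preservation} is then applied to the first run's execution paired with the second derivation's high-$pc$ typing of that same branch, with Lemma~\ref{lemma:branch_on_high_state_lemma} transferring the label information back to $\refine{\tscope_1}{e}$ and hence to $\tscope_1$. (You also silently need Lemma~\ref{lemma:join_does_not_modify_the_intervals} to lift the types produced by abstraction soundness into the joined set.) The same asymmetry recurs in \textsc{T-Table} when the key labels differ between the two derivations, and your proposal needs the identical repair there.
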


\begin{proof}
In this proof we assume that the program is well typed and does not get stuck according to HOL4P4 type system.
by induction on the typing tree of the program $stmt$ i.e. $\stmt$. 
Note that, initially $\tscope = (\tscope_g,\tscope_m)$ and $\mem=(m_g,m_l)$. 
In the following proof, we know that $\mem_1\lequiv{\tscope_1\sqcup\tscope_2}\mem_2$, we also know that $\tscope_1\vdash\mem_1$ and $\tscope_2\vdash\mem_2$.

\casesItem{Case assignment}: Here $stmt$ is $\lval := e$. 
From assignment typing rule we know:
\begin{enumerate}
    \item $\tscope_1 \vdash {e} : \type_1$
    \item $\type_1' = \raisel(\type_1, pc)$
    \item $\tscope_1' = \tscope_1[\lval \mapsto \type_1']$
    \item $\Gamma_1 = \{\tscope_1'\}$
    \item $\tscope_2 \vdash {e} : \type_2$
    \item $\type_2' = \raisel(\type_2, pc)$
    \item $\tscope_2' = \tscope_2[\lval \mapsto \type_2']$
    \item $\Gamma_2 = \{\tscope_2'\}$
\end{enumerate}

And from assignment reduction rule we know:
\begin{enumerate}
    \item $\expsem{\mem_1}{e}={\val_1}$
    \item $\mem_1'= \mem_1[\lval \mapsto \val_1]$
    \item $\expsem{\mem_2}{e}={\val_2}$
    \item $\mem_2'= \mem_2[\lval \mapsto \val_2]$
\end{enumerate}

Prove $\exists \tscope_1' \in \Gamma_1 \band \tscope_2' \in \Gamma_2. \
    \tscope_1' \vdash \mem_1' \band \tscope_2' \vdash \mem_2' \band \mem_1' \lequiv{\tscope_1'\sqcup\tscope_2'} \mem_2'$.
Since that assignment typing rule produces one state type (from 3,4,7,and 8), then from \textsc{Soundness of Abstraction}, we can infer $\tscope_1'\vdash\mem_1'$ and $\tscope_2'\vdash\mem_2'$, therefore the first two conjunctions of the goal holds.

Now, the final remaining goal to prove is that $\mem_1' \lequiv{\tscope_1'\sqcup\tscope_2'} \mem_2'$. This entails proving that all $\lval'$ in both $\mem_1'$ and $\mem_2'$ are low equivalent with respect to the least upper bound of the final state types that types the final states. 

Now we do cases on the label of $\lval'$ being {\high} or {\low} in $\tscope_1'\sqcup\tscope_2'$. 

\begin{enumerate}[label=\textbf{case}]
    \item label of $\lval'$ is {\high}:
          If the label is {\high}, i.e. $\tscope_1'\sqcup\tscope_2' \vdash \lval' : \type \band  \getlabel(\type) = {\high} $, then the property of labeling soundness holds after the assignment trivially. That's because soundness property checks the equality of the state type's low ranges only. 

    \item label of $\lval'$ is {\low}:
          If the label is {\low}, i.e., $\tscope_1' \sqcup \tscope_2' \vdash \lval' : \type \band \getlabel(\type) = {\low}$, this implies that in each state type $\tscope_1'$ and $\tscope_2'$ individually, the typing label of $\lval'$ is {\low}. 

          In this section, we conduct a case analysis on possible sub-cases relations between of $\lval$ being assigned and $\lval'$, thus we will have the following subcases: $\lval' \subsetneqq \lval$, $\lval \subsetneqq \lval'$, $\lval' = \lval$, $\lval' \subseteqq \lval$, and $\lval \subseteqq \lval'$. 

          \begin{enumerate}[label=\textbf{case}]
              \item $\lval' \subsetneqq \lval$ and $\lval \subsetneqq \lval'$: \\
                    Now for all $\lval'$ that are not equal to the $\lval$ we assign to, or not sub-lval of it:
                    our goal is to show $\mem_1'(\lval') = \mem_2'(\lval')$ by demonstrating that initially $\mem_1(\lval') = \mem_2(\lval')$ also holds.
                    We know that the assignment doesn't alter those parts of the states. Thus, the semantic update should keep the values of $\lval'$ the same, so $\mem_1(\lval') = \mem_1'(\lval')$ and $\mem_2(\lval') = \mem_2'(\lval')$. Likewise, the typing update should keep the type of $\lval'$ unchanged, so $\tscope_1'(\lval')  = \tscope_1(\lval')$ and $\tscope_2'(\lval')  = \tscope_2(\lval')$. This indeed mean that the labels of $\lval'$ were also {\low} in both $\tscope_1$ and $\tscope_2$, and given the assumption that $\mem_1\lequiv{\tscope_1\sqcup\tscope_2}\mem_2$, thus indeed $\mem_1(\lval')=\mem_2(\lval')$. 

              \item $\lval' = \lval$: \\
                    For $\lval' = \lval$, we know that in $\tscope_1'\sqcup\tscope_2'$ we type the $\lval$ as {\low}, i.e. $\tscope_1'\sqcup\tscope_2' (\lval) = \type \band \getlabel(\type) = {\low} $ thus the same property holds in individual state types $\tscope_1' (\lval) = \type' \band \getlabel(\type') = {\low} $ and also $\tscope_2' (\lval) = \type'' \band \getlabel(\type'') = {\low} $. We need to prove $\mem_1'(\lval) = \mem_2'(\lval)$.
                    For $\lval$ to be {\low} after the update function in either $\tscope_1'$ or $\tscope_2'$, it is necessary for the types of the expression $e$ to be {\low} in both initial state types $\tscope_1$ in 1 and $\tscope_2$ in 5 in typing rules, i.e., ($\getlabel(\type_1) = {\low}$ and $\getlabel(\type_2) = {\low}$).
                    This condition holds because otherwise, if the typing labels of $e$ were {\high}, then the goal would be trivially true (as the update would make $\lval$ {\high} in $\tscope_1'$ and $\tscope_2'$, contradicting the assumptions).
                    Since the typing label of $e$ is {\low} in 1 and 5 of typing rule, when reduced to a value in 1 and 3 of the semantics rule, this indicates that they reduce to the same value $\val_1=\val_2$ (using Lemma \ref{lemma:expression_evaluation_of_consistent_states}).  Since we update $\lval$ in $\mem_1$ and $\mem_2$ with the same value such that we produce $\mem_1'$ and $\mem_2'$ respectively, it follows that $\mem_1'(\lval) = \mem_2'(\lval)$. 

              \item $\lval' \subseteqq \lval$: \\
                    For $\lval' \subseteqq \lval$, we know that $\lval'$ can be a shorter variation of the $\lval$. The proof is the same as the previous case.

              \item $\lval \subseteqq \lval'$: \\
                    For $\lval \subseteqq \lval'$, we know that $\lval$ can be a shorter variation of the $\lval'$, thus the update of $\lval$ affects part of $\lval'$ type while the rest of it stays unchanged. Therefore, the proof is straightforward by conducing the same steps of the first two cases. 

          \end{enumerate}

          Given the last four subcases, we can now show that $\mem_1' \lequiv{\tscope'} \mem_2'$.

\end{enumerate}

\casesItem{Case condition}: Here $stmt$ is $\texttt{if } e \texttt{ then } \stmt_1 \texttt{ else } \stmt_2$. 
From conditional typing rule we know:
\begin{enumerate}
    \item $\tscope \vdash e : \type_1$
    \item $\lbl_1 = \getlabel(\type_1)$
    \item $pc_1 = pc \sqcup \lbl_1$
    \item $\tc, pc_1, (\refine{\tscope_1}{e}     ) \vdash \stmt_1 : \Gamma_1$
    \item $\tc, pc_1, (\refine{\tscope_1}{\neg e}) \vdash \stmt_2 : \Gamma_2$
    \item $\Gamma_3 =  {\begin{cases}
                  \text{join}(\Gamma_1 \cup \Gamma_2) \hspace{15pt} \text{if } \lbl_1 = {\high} \\
                  \Gamma_1 \cup \Gamma_2  \hspace{38pt}  \text{otherwise}
              \end{cases}}$
    \item $\tscope \vdash e : \type_2$
    \item $\lbl_2 = \getlabel(\type_2)$
    \item $pc_2 = pc \sqcup \lbl_2$
    \item $\tc, pc_2, (\refine{\tscope_2}{e}     ) \vdash \stmt_1 : \Gamma_4$
    \item $\tc, pc_2, (\refine{\tscope_2}{\neg e}) \vdash \stmt_2 : \Gamma_5$
    \item $\Gamma_6 =  {\begin{cases}
                  \text{join}(\Gamma_4 \cup \Gamma_5) \hspace{15pt} \text{if } \lbl_2 = {\high} \\
                  \Gamma_4 \cup \Gamma_5  \hspace{38pt}  \text{otherwise}
              \end{cases}}$
\end{enumerate}

We also know that both initial states are $\tscope_1 \vdash \mem_1$ and $\tscope_2 \vdash \mem_2$ and also $\mem_1 \lequiv{\tscope_1\sqcup\tscope_2} \mem_2$. And we know that the conditional statement is executed with $\mem_1$ and $\mem_2$ resulting $\mem_1'$ and $\mem_2'$ consequently.

In addition to that, we get induction hypothesis for $\stmt_1$ \textsc{IH1} and $\stmt_2$ \textsc{IH2} (we only show \textsc{IH1}):
\begin{align*}
	\forall \ \tc \ pc \ \mem_a \ \mem_b \ \mem_a' \ & \mem_b' \ \tscope_a \ \tscope_b \ \Gamma_a \ \Gamma_b \ \stmtctx_a \ \stmtctx_b. \\
	\ \tc, pc, \tscope_a \vdash \stmt_1 : \Gamma_a &\band \ \tc, pc, \tscope_b \vdash \stmt_1 : \Gamma_b
	\implies \\
	\ \tc \vdash \stmtctx_a \band \tc \vdash \stmtctx_b &\band \stmtctx_a \lequiv{\tc} \stmtctx_b
	\band \\
	\tscope_a \vdash \mem_a
	\band
	\tscope_b \vdash \mem_b
	&\band
	\mem_a \lequiv{\tscope_a\sqcup\tscope_b} \mem_b
	\band \\
	{\stmtctx_a}: \sem{\mem_a}{\stmt_1}{\mem_a'}
	&\band
	{\stmtctx_b}: \sem{\mem_b}{\stmt_1}{\mem_b'} \\
	&\implies \\
	\big(
	\exists \tscope_a' \in \Gamma_a \band \tscope_b' \in \Gamma_b. \ &
	\tscope_a' \vdash \mem_a' \band \tscope_b' \vdash \mem_b' \band \mem_a' \lequiv{\tscope_a'\sqcup\tscope_b'} \mem_b' \big)
\end{align*}

We start by cases on labels $\lbl_1$ and $\lbl_2$ of $e$.

\begin{enumerate}[label=\textbf{case}]
    \item $\lbl_1 = \lbl_2 = {\low}$: We need to prove that $\exists \tscope_1'\in\Gamma_3$ and $\exists \tscope_2'\in\Gamma_6$ it holds $\mem_1' \lequiv{\tscope_1'\sqcup\tscope_2'} \mem_2'$. We can directly use Lemma \ref{lemma:expression_evaluation_of_consistent_states} to infer that e is evaluation is indistinguishable in the states, and since $e$ is assumed to be typed as boolean, thus we get two subcases where: $\expsem{\mem_1}{e}={true}$ and $\expsem{\mem_2}{e}={true}$ , or $\expsem{\mem_1}{e}={false}$ and $\expsem{\mem_2}{e}={false}$.

          \begin{enumerate}[label=\textbf{case}]
              \item $\expsem{\mem_1}{e}={true}$ and $\expsem{\mem_2}{e}={true}$:
                    when looking into the reduction rule
                    of the both if statements in the assumption, we only reduce the first branch of each. Hence, we have ${\stmtctx_1}: \sem{\mem_1}{\stmt_1}{\mem_1'}$ and ${\stmtctx_2}: \sem{\mem_2}{\stmt_1}{\mem_2'}$.

                    From Hyp \ref{hyp:label_typedness_boolean_expressions_refinement}, we can show that $\mem_1 \lequiv{(\refine{\tscope_1}{e})\sqcup(\refine{\tscope_2}{e})} \mem_2$.
                    Additionally, we can infer that $(\refine{\tscope_1}{e})\vdash\mem_1$ and $(\refine{\tscope_2}{e})\vdash\mem_2$ using Hyp \ref{hyp:interval_typedness_boolean_expressions_refinement}.

                    Now we can directly instantiate and apply \textsc{IH1} using the following $(\tc ,pc\sqcup {\low}, \mem_1, \mem_2, \mem_1', \mem_2',$ $(\refine{\tscope_1}{e}), (\refine{\tscope_2}{e}),$ $\Gamma_1, \Gamma_4, \stmtctx_1, \stmtctx_2)$
                    to infer that the states after executing $\stmt_1$ are low equivalent, i.e.
                    exists $\tscope_1'' \in \Gamma_1$ and exists $\tscope_2'' \in \Gamma_4$ such that $\tscope_1'' \vdash \mem_1'$ and $\tscope_2'' \vdash \mem_2'$ and  $\mem_1' \lequiv{\tscope_1''\sqcup\tscope_2''} \mem_2'$. Since $\Gamma_1\subseteq\Gamma_3$ and $\Gamma_4\subseteq\Gamma_6$, thus the goal holds.

              \item $\expsem{\mem_1}{e}={false}$ and $\expsem{\mem_2}{e}={false}$ same proof as the previous case.
          \end{enumerate}

    \item $\lbl_2 = {\high}$:
          We initiate the proof by fixing $\lbl_2$ to be {\high}, and the value of $e$ to be reduced to $false$, thus it executes $\stmt_2$ (starting from configuration $\mem_2$, and yields $\mem_2'$, note that if $e$ reduces to $true$ the proof is identical as this case). In this proof, we refer to these as the second configuration.

          Now, consider the following scenario where we start from $\mem_1$ in the
          semantics rule and $\tscope_1$ in typing rule, we generalize the proof for any boolean expression $e_i$ such that $i$ ranges over $true$ and $false$, where $e_{true}$ is $e$, $e_{false}$ is $\neg e$, $\stmt_{true}$ is the first branch $\stmt_1$, and $\stmt_{false}$ is the second branch $\stmt_2$. In this sub-case of the proof, $\lbl_1$ denotes the label associated $e_i$'s typing label, and let the refinement of the initial typing scope $\tscope_1$ to be represented as $\refine{\tscope_1}{e_i}$. Suppose the executed branch is $\stmt_i$, yielding a final set of state types denoted as $\Gamma_i$. In this proof, we refer to these as the first configuration.

          Given the previous generalizations, we can rewrite the assumptions to:
          \begin{enumerate}[label=(\alph*)]
              \item $\tc, \lbl_1, (\refine{\tscope_1}{e_i}) \vdash \stmt_i : \Gamma_i$
              \item ${\stmtctx_1}: \sem{\mem_1}{\stmt_i}{\mem_1'}$
              \item $\tc, {\high}, (\refine{\tscope_2}{\neg e}) \vdash \stmt_2 : \Gamma_5$
              \item $\tc, {\high}, (\refine{\tscope_2}{e}) \vdash \stmt_1 : \Gamma_4$
              \item ${\stmtctx_2}: \sem{\mem_2}{\stmt_2}{\mem_2'}$
          \end{enumerate}

          What we aim to prove is the existence of $\tscope_1' \in \Gamma_3$
          such that $\tscope_1' \vdash \mem_1'$. Additionally, we need to establish the existence of $\tscope_2' \in \Gamma_6$ where $\Gamma_6 = \text{join}(\Gamma_4 \cup \Gamma_5)$, such that $\tscope_2' \vdash \mem_2'$. Furthermore, we must prove that $\mem_1' \lequiv{\tscope_1' \sqcup \tscope_2'} \mem_2'$.

          From the \textsc{Soundness of Abstraction}, we know that for the second configuration
          indeed exists $\tscope_2''\in\Gamma_5$ such that it types $\mem_2'$ (i.e. $\tscope_2''\vdash\mem_2'$).

          Given that $\tscope_2'' \in \Gamma_5$ and
          $\Gamma_6 = \text{join}(\Gamma_4 \cup \Gamma_5)$, we can deduce (by Lemma \ref{lemma:join_does_not_modify_the_intervals}) the existence of $\overline{\tscope_2''} \in \text{join}(\Gamma_4 \cup \Gamma_5)$ such that it is more restrictive than $\tscope_2''$, denoted as $\tscope_2'' \sqsubseteq \overline{\tscope_2''}$. Using the same lemma, we conclude that $\overline{\tscope_2''} \vdash \mem_2'$. Now on, we choose $\overline{\tscope_2''}$ to be used in the proof and resolve the second conjunction of the goal.

          From the \textsc{Soundness of Abstraction}, we know that for the first configuration
          indeed exists $\tscope_i''\in\Gamma_i$ such that it types $\mem_1'$ (i.e. $\tscope_i''\vdash\mem_1'$).

          In the first configuration, we generalized the proof according to the evaluation of
          $e_i$. Consequently, the final state type set $\Gamma_3$ can be either a union (if
          $\lbl_1 = {\low}$) or a join (if $\lbl_1 = {\high}$) of all final state type sets $\forall i \leq 1 . \ \Gamma_i$ resulting from typing their corresponding $\stmt_i$.
          In either case (union or join), we can establish the existence of
          $\overline {\tscope_i''} \in \Gamma_3$ such that $\tscope_i'' \sqsubseteq \overline{\tscope_i''}$ and indeed $\overline{\tscope_i''} \vdash \mem_1'$. Note that if $\Gamma_3$ resulted from a join, we infer this using Lemma \ref{lemma:join_does_not_modify_the_intervals}; otherwise, if it resulted from a union, it is trivially true. In fact, we can directly choose $\tscope_i'' = \overline{\tscope_i''}$ when union the final state types sets.

          Next, we proceed to implement cases based on whether an $\lval$'s type label is {\high} or {\low}.

          \begin{enumerate}[label=\textbf{case}]
              \item $(\overline{\tscope_i''}\sqcup\overline{\tscope_2''}(\lval) = \type) \band \getlabel(\type) =  {\high}$: the goal holds trivially.

              \item $(\overline{\tscope_i''}\sqcup\overline{\tscope_2''}(\lval) = \type) \band \getlabel(\type) =  {\low} $:
                    this case entails that each state type individually holds $\overline{\tscope_i''}(\lval) = \type_1' \band \getlabel(\type_1') =  {\low}$ and also $\overline{\tscope_2''}(\lval) = \type_2'' \band \getlabel(\type_2'') =  {\low}$. \\

                    Given that the $\lval$'s type is {\low} in $\overline{\tscope_2''}$, and
                    considering $\overline{\tscope_2''} \in \text{join}(\Gamma_4 \cup \Gamma_5)$, it follows that the $\lval$ is also {\low} in any state type within $\text{join}(\Gamma_4 \cup \Gamma_5)$. Consequently, the $\lval$ is {\low} in the state types of both $\Gamma_5$ and $\Gamma_4$ (if not empty) individually before the join operation. Hence, since $\tscope_2'' \in \Gamma_5$, it implies that $\lval$ is also {\low} in $\tscope_2''$  expressed as $\tscope_2''(\lval) = \type_2''' \band \getlabel(\type_2''') = {\low}$ (using Lemma \ref{lemma:joins_low_label_implication}). We also know that typing a statement in a {\high} context in (d) entails that the final $\Gamma_4$ is not empty (using Lemma \ref{lemma:branch_on_high_is_never_empty}), thus $\lval$'s type label is also {\low} in all the state types in $\Gamma_4$.

                    In the second configuration, we type the statement $\stmt_2$ with a
                    {\high} context in (c), and $\stmt_2$ reduces to $\mem_2'$ in (e). Furthermore, from the previous step, we inferred that the $\lval$'s type is {\low} in $\tscope_2''$. Hence, we can use Lemma \ref{lemma:branch_on_high_state_preservation} to infer that the initial and final states remain unchanged for {\low} lvalues, which means $\mem_2(\lval) = \mem_2'(\lval)$.
                    Then we can use Lemma \ref{lemma:high_program_final_types_subset_of_initial} to infer that $\refine{\tscope_2}{\neg e} \sqsubseteq \tscope_2''$, this entails that the $\lval$'s type label is indeed {\low} in the refined state $\refine{\tscope_2}{\neg e}$.
                    It is easy to see that $\tscope_2\sqsubseteq\refine{\tscope_2}{\neg e}$, thus $\lval$'s type is also {\low} in the initial state type $\tscope_2$, i.e. $\tscope_2(\lval) = \type_2' \band \getlabel(\type_2') = {\low}$.

                    For the first configuration, in this sub-case, we have
                    $\overline{\tscope_i''}(\lval)= \type_1' \band \getlabel(\type_1') = {\low}$, and $\overline{\tscope_i''} \in \Gamma_3$.
                    We previously showed $\tscope_i''\in\Gamma_i$ and $\tscope_i'' \sqsubseteq \overline{\tscope_i''}$, where $\Gamma_i$ such that  is the final state type of typing $\stmt_i$ according to the evaluation of $e_i$.
                    Since $\lval$'s typing label is {\low} in $\overline{\tscope_i''}$ in $\Gamma_3$ and we know that the state types in $\overline{\tscope_i''}$ are more restrictive than the state types in $\tscope_i''$, we can conclude that $\tscope_i'' \in \Gamma_i$ also types $\lval$ as {\low} $\tscope_i''(\lval) = \type_1'' \band \getlabel(\type_1'') = {\low}$.

                    Previously, we demonstrated that the $\lval$'s typing label is
                    {\low} in all final state types in $\Gamma_4$ and $\Gamma_5$, and we showed that $\Gamma_4$ is not empty.
                    Consequently, neither $\stmt_1$ nor $\stmt_2$ can modify $\lval$. Considering the assumptions (a) and (b) (related to the first configuration), where $\stmt_i$ can be either $\stmt_1$ or $\stmt_2$, we conclude that the $\lval$ remains unchanged there as well. Here, we can apply Lemma \ref{lemma:branch_on_high_state_lemma} to deduce that the $\lval$'s typing label in
                    $\tscope_i'' \in \Gamma_i$ are more restrictive than the one we find in the refined state $\refine{\tscope_1}{e_i}$, thus $\refine{\tscope_1}{e_i}(\lval) = \type_i' \band \getlabel(\type_i') = {\low}$.
                    Now we can apply Lemma \ref{lemma:branch_on_high_state_preservation} for any $\stmt_i$ to show that $\mem_1(\lval)=\mem_1'(\lval)$.

                    Finally, since the typing label of $\lval$ in $\refine{\tscope_1}{e_i}$ is {\low}, then trivially we know that the typing label of $\lval$ in ${\tscope_1}$ is also {\low} because $\tscope_1 \sqsubseteq \refine{\tscope_1}{e_i}$. We previously showed that $\lval$'s typing label is {\low} in $\tscope_2$,
                    thus we now can show that $\tscope_1\sqcup\tscope_2(\lval) = \type' \band \getlabel(\type') = {\low}$.
                    Now, we can deduce that $\mem_1(\lval)=\mem_2(\lval)$ from the definition of the assumption $\mem_1\lequiv{\tscope_1\sqcup\tscope_2}\mem_2$. Thus, the goal holds.
          \end{enumerate}

    \item $\lbl_1 = {\high}$: when fixing the first configuration, we implement same proof as previous case.
\end{enumerate}

\casesItem{Case sequence}: Here $stmt$ is $\stmt_1;\stmt_2$. 
From sequence typing rule we know:
\begin{enumerate}
    \item $\tc, pc, \tscope_1 \vdash {\stmt_1} : \Gamma_1 $
    \item $\forall \tscope_1' \in \Gamma_1 . \ \tc, pc, \tscope_1' \vdash \stmt_2 :\Gamma_2^{\tscope_1'}$
    \item $\Gamma' =  \bigcup_{\tscope_1' \in \Gamma_1} \Gamma_2^{\tscope_1'} $
    \item $\tc, pc, \tscope_2 \vdash {\stmt_1} : \Gamma_3 $
    \item $\forall \tscope_2' \in \Gamma_3 . \ \tc, pc, \tscope_2' \vdash \stmt_2 :\Gamma_4^{\tscope_2'}$
    \item $\Gamma'' =  \bigcup_{\tscope_2' \in \Gamma_3} \Gamma_4^{\tscope_2'}$
\end{enumerate}

And from sequence reduction rule we know:
\begin{enumerate}
    \item ${\stmtctx}_1: \sem{\mem_1}{\stmt_1}{\mem_1'}$
    \item ${\stmtctx}_1: \sem{\mem_1'}{\stmt_2}{\mem_1''}$
    \item ${\stmtctx}_2: \sem{\mem_2}{\stmt_1}{\mem_2'}$
    \item ${\stmtctx}_2: \sem{\mem_2'}{\stmt_2}{\mem_2''}$
\end{enumerate}

We initially know that : $\tscope_1 \vdash \mem_1$, $\tscope_2 \vdash \mem_2$, and $\mem_1 \lequiv{\tscope_1\sqcup\tscope_2} \mem_2$. 

In addition to that, we get induction hypothesis for $\stmt_1$ \textsc{IH1} and $\stmt_2$ \textsc{IH2} (we only show \textsc{IH1}):
\begin{align*}
	\forall \ \tc \ pc \ \mem_a \ \mem_b \ \mem_a' \ & \mem_b' \ \tscope_a \ \tscope_b \ \Gamma_a \ \Gamma_b \ \stmtctx_a \ \stmtctx_b. \\
	\ \tc, pc, \tscope_a \vdash \stmt_1 : \Gamma_a &\band \ \tc, pc, \tscope_b \vdash \stmt_1 : \Gamma_b
	\ \implies \  \\
	\ \tc \vdash \stmtctx_a &\band \tc \vdash \stmtctx_b \band \stmtctx_a \lequiv{\tc} \stmtctx_b
	\band \\
	\tscope_a \vdash \mem_a
	&\band
	\tscope_b \vdash \mem_b
	\band
	\mem_a \lequiv{\tscope_a\sqcup\tscope_b} \mem_b
	\band \\
	{\stmtctx_a}: \sem{\mem_a}{\stmt_1}{\mem_a'}
	&\band
	{\stmtctx_b}: \sem{\mem_b}{\stmt_1}{\mem_b'} \\
	&\implies \\
	\big( \exists \tscope_a' \in \Gamma_a \band \tscope_b' \in \Gamma_b. &\
	\tscope_a' \vdash \mem_a' \band \tscope_b' \vdash \mem_b' \band \mem_a' \lequiv{\tscope_a'\sqcup\tscope_b'} \mem_b' \big)
\end{align*}

We need to prove there are two state types $\tscope_1''\in\Gamma'$ and $\tscope_2''\in\Gamma''$
such they type the final states $\tscope_1''\vdash\mem_1''$ and $\tscope_2''\vdash\mem_2''$ and indeed $\mem_1'' \lequiv{\tscope_1''\sqcup\tscope_2''} \mem_2''$ holds. 

We start by using \textsc{IH1}, and instantiating it with
$(\tc , pc, \mem_1,\mem_2,\mem_1',\mem_2',\tscope_1,\tscope_2,\Gamma_1,\Gamma_3, \stmtctx_1, \stmtctx_2)$ to infer that there exists $\tscope_1'\in\Gamma_1$ and $\tscope_2'\in\Gamma_3$ such that $\tscope_1'\vdash\mem_1' \band \tscope_2'\vdash\mem_2'$ and also $\mem_1'\lequiv{\tscope_1'\sqcup\tscope_2'}\mem_2'$. 

Then, in the typing rule, we instantiate 2 with $\tscope_1'$ and 5 with $\tscope_2'$.
Now we can use \textsc{IH2}, and instantiating it with $(\tc , pc, \mem_1',\mem_2',\mem_1'',\mem_2'',\tscope_1',\tscope_2',\Gamma_2^{\tscope_1'},\Gamma_4^{\tscope_2'}, \stmtctx_1, \stmtctx_2)$. From that we can infer that indeed there exists state types $\tscope_1''\in\Gamma_2^{\tscope_1'}$ and $\tscope_2''\in\Gamma_4^{\tscope_2'}$ such that they type the final states $\tscope_1''\vdash\mem_1''$ and $\tscope_2''\vdash\mem_2''$, where they keep the states low equivalent as $\mem_1''\lequiv{\tscope_1''\sqcup\tscope_2''}\mem_2''$.

We know that the final set of state type of interest is simply the union of all state types that can type the second statement in 3 and 6.  It is easy to see that since $\tscope_1''\in\Gamma_2^{\tscope_1'}$ then $\tscope_1''\in\Gamma'$. Similarly, $\tscope_2''\in\Gamma_4^{\tscope_2'}$ then $\tscope_2''\in\Gamma''$. Thus, the goal is proven.

\casesItem{Case function call}: Here $stmt$ is $f(e_{1} \cons e_{n})$. 
From call typing rule we know (note that here we explicitly write the global and local state type):

\begin{enumerate}
    \item $(\tscope_{g1}, \tscope_{l1}) \vdash  e_{i} : \type_{i1} $
    \item $(\stmt,(x_{1},d_{1}) \cons (x_{n}, d_{n})) = \tcv(f) $
    \item $\tscope_{f1} =\{x_{i} \mapsto \type_{i1}\}$
    \item $ \tcv, pc, (\tscope_{g1}, \tscope_{f1}) \vdash \stmt : \Gamma_1' $
    \item $\Gamma_1'' =
              \{(\tscope'_{g1}, \tscope_{l1})  [e_{i} \mapsto \tscope'_{f1}(x_{i}) \mid \isOut(d_{i})]
              \mid (\tscope'_{g1}, \tscope'_{f1}) \in \Gamma_1' \}$
    \item $(\tscope_{g2}, \tscope_{l2}) \vdash  e_{i} : \type_{i2} $
    \item $\tscope_{f2} =\{x_{i} \mapsto \type_{i2}\}$
    \item $ \tcv, pc, (\tscope_{g2}, \tscope_{f2}) \vdash \stmt : \Gamma_2' $
    \item $\Gamma_2'' =
              \{(\tscope'_{g2}, \tscope_{l2})  [e_{i} \mapsto \tscope'_{f2}(x_{i}) \mid \isOut(d_{i})]
              \mid (\tscope'_{g2}, \tscope'_{f2}) \in \Gamma_2' \}$
\end{enumerate}

And from call reduction rule we know:
\begin{enumerate}
    \item $(\stmt,(x_{1},d_{1}) \cons (x_{n}, d_{n})) = \stmtctxvn{1}(f) $
    \item $\mem_{f1} =\{x_{i} \mapsto \expsem{(\mem_{g1}, \mem_{l1})}{e_{i}}\}$
    \item ${\stmtctxvn{1}}: \sem{(\mem_{g1}, \mem_{f1})}{s}{(\mem'_{g1}, \mem'_{f1})}$
    \item $ \mem_1'' = (\mem'_{g1}, \mem_{l1})[e_{i} \mapsto \mem'_{f1}(x_{i}) \mid \isOut(d_{i})]$
    \item $(\stmt,(x_{1},d_{1}) \cons (x_{n}, d_{n})) = \stmtctxvn{2}(f) $
    \item $\mem_{f2} =\{x_{i} \mapsto \expsem{(\mem_{g2}, \mem_{l2})}{e_{i}}\}$
    \item ${\stmtctxvn{2}}: \sem{(\mem_{g2}, \mem_{f2})}{s}{(\mem'_{g2}, \mem'_{f2})}$
    \item $ \mem_2'' = (\mem'_{g2}, \mem_{l2})[e_{i} \mapsto \mem'_{f2}(x_{i}) \mid \isOut(d_{i})]$
\end{enumerate}

Let $\mem_1= (\mem_{g1}, \mem_{l1})$, $\mem_2= (\mem_{g2}, \mem_{l2})$, let  $\tscope_1 = (\tscope_{g1}, \tscope_{l1})$, and  $\tscope_1 = (\tscope_{g2}, \tscope_{l2})$.
We initially know that: $\tscope_1 \vdash \mem_1$, $\tscope_2 \vdash \mem_2$, and $\mem_1 \lequiv{\tscope_1\sqcup\tscope_2} \mem_2$.

In addition to that, we get induction hypothesis for $\stmt$ \textsc{IH}:
\begin{align*}
	\forall \ \tc \ pc \  \mem_a \ \mem_b \ \mem_a' \ &\mem_b' \ \tscope_a \ \tscope_b \ \Gamma_a \ \Gamma_b \ \stmtctx_a \ \stmtctx_b . \\
	\ \tc, pc,  \tscope_a \vdash \stmt : \Gamma_a &\band \ \tc, pc, \tscope_b \vdash \stmt : \Gamma_b
	\implies \\
	\tc \vdash \stmtctx_a \band \tc \vdash \stmtctx_b &\band \stmtctx_a \lequiv{\tc} \stmtctx_b
	\band \\
	\tscope_a \vdash \mem_a
	\band
	\tscope_b \vdash \mem_b
	&\band
	\mem_a \lequiv{\tscope_a\sqcup\tscope_b} \mem_b
	\band \\
	{\stmtctx_a}: \sem{\mem_a}{\stmt}{\mem_a'}
	&\band
	{\stmtctx_b}: \sem{\mem_b}{\stmt}{\mem_b'} \\
	&\implies \\
	\big( \exists \tscope_a' \in \Gamma_a \band \tscope_b' \in \Gamma_b. \
	\tscope_a' \vdash \mem_a' &\band \tscope_b' \vdash \mem_b' \band \mem_a' \lequiv{\tscope_a'\sqcup\tscope_b'} \mem_b' \big)
\end{align*}

We need to prove there are two state types $\tscope_1''\in\Gamma_1''$ and $\tscope_2''\in\Gamma_2''$
such they type the final states $\tscope_1''\vdash\mem_1''$ and $\tscope_2''\vdash\mem_2''$ and indeed $\mem_1'' \lequiv{\tscope_1''\sqcup\tscope_2''} \mem_2''$ holds. 

First we need to prove that the resulted copy-in map is also well-typed i.e. $\tscope_{f1}\vdash\mem_{f1}$. Note that the same proof applies to prove $\tscope_{f2}\vdash\mem_{f2}$:

Given that initially
the typing state types the state $(\tscope_{g1},\tscope_{l1}) \vdash (\mem_{g1},\mem_{l1})$ from assumptions and given that the expressions $e_i$ have a type $\type_{i1}$ from typing rule 1, now we can use Lemma \ref{lemma:expression_reduction_preserves_type} to infer that for all $\val_i : \type_i$ such that $\val_i$ is the evaluation of $\expsem{(\mem_{g1}, \mem_{l1})}{e_{i}}$. This leads us to trivially infer that $\forall i . \ \{x_{i} \mapsto \type_{i}\} \vdash \{x_{i} \mapsto \val_i \}$ holds, thus $\tscope_{f1}\vdash\mem_{f1}$. \\

Now we want to show that $(\mem_{g1},\mem_{f1})\lequiv{(\tscope_{g1},\tscope_{f1})\sqcup(\tscope_{g2},\tscope_{f2})}(\mem_{g2},\mem_{f2})$.
This can be broken down and rewritten into two goals according to Lemma \ref{lemma:low_equivalence_distribution}:
\begin{enumerate}[label=\textbf{Goal}]
    \item \textbf{1.} $\mem_{g1}\lequiv{\tscope_{g1} \sqcup \tscope_{g2}} \mem_{g2}$: \\
          We know that the domains of $\mem_{g1}$ and $\mem_{l1}$
          are distinct and do not intersect (similarly for $\mem_{g2}$ and $\mem_{l2}$),
          and given that they are initially low equivalent with respect to
          $(\tscope_{g1}, \tscope_{l1})\sqcup(\tscope_{g2}, \tscope_{l2})$ as
          $(\mem_{g1},\mem_{l1}) \lequiv{(\tscope_{g1}, \tscope_{l1})\sqcup(\tscope_{g2}, \tscope_{l2})} (\mem_{g2},\mem_{l2})$.
          Then we can use Lemma \ref{lemma:low_equivalence_distribution} to show that the goal holds.
    \item \textbf{2.} $\mem_{f1}\lequiv{\tscope_{f1} \sqcup \tscope_{f2}} \mem_{f2}$: \\
          First we start by rewriting the goal as following: \\
          $\{x_{i} \mapsto \expsem{(\mem_{g1}, \mem_{l1})}{e_{i}}\}
              \lequiv{\{x_{i} \mapsto (\tscope_{g1}, \tscope_{l1})(e_i)\} \sqcup \{x_{i} \mapsto (\tscope_{g2}, \tscope_{l2})(e_i)\}}
              \{x_{i} \mapsto \expsem{(\mem_{g2}, \mem_{l2})}{e_{i}}\}$. It is easy to see that the empty map is low equivalent as: $\{\} \lequiv{\{\} \sqcup \{\} } \{\}$.
          Now we can use Lemma \ref{lemma:low_equivalence_update} to show that the goal holds.
\end{enumerate}

Now we can use \textsc{IH}, and instantiate it with:
$(\tcv,pc, (\mem_{g1},\mem_{f1}) ,(\mem_{g2},\mem_{f2}),(\mem_{g1}',\mem_{f1}'),(\mem_{g2}',\mem_{f2}'),$ $(\tscope_{g1},\tscope_{f1}), (\tscope_{g2},\tscope_{f2}), \Gamma_1', \Gamma_2', \stmtctxvn{1}, \stmtctxvn{2})$,
so that we can deduct that exists $\overline{\tscope_1'} \in \Gamma_1'$ such that $\overline{\tscope_1'}\vdash(\mem_{g1}',\mem_{f1}')$, also exists $\overline{\tscope_2'} \in \Gamma_2'$ such that $\overline{\tscope_2'}\vdash(\mem_{g2}',\mem_{f2}')$.
We can also infer that $(\mem_{g1}',\mem_{f1}')\lequiv{\overline{\tscope_1'}\sqcup\overline{\tscope_2'}}(\mem_{g2}',\mem_{f2}')$.

Let $\overline{\tscope_1'} = (\overline{\tscope_{g1}'}, \overline{\tscope_{f1}'} )$
and $\overline{\tscope_2'} = (\overline{\tscope_{g2}'}, \overline{\tscope_{f2}'} )$ then we can also conclude from the previous:
\begin{enumerate}[label=(\alph*)]
    \item $\overline{\tscope_{g1}'} \vdash \mem_{g1}'$
    \item $\overline{\tscope_{g2}'} \vdash \mem_{g2}'$
    \item $\overline{\tscope_{f1}'} \vdash \mem_{f1}'$
    \item $\overline{\tscope_{f2}'} \vdash \mem_{f2}'$
    \item $\mem_{g1}'\lequiv{\overline{\tscope_{g1}'}\sqcup\overline{\tscope_{g2}'}}\mem_{g2}'$
    \item $\mem_{f1}'\lequiv{\overline{\tscope_{f1}'}\sqcup\overline{\tscope_{f2}'}}\mem_{f2}'$
\end{enumerate}

Since the final goal is to prove there are two state types $\tscope_1''\in\Gamma_1''$ and $\tscope_2''\in\Gamma_2''$
such they type the final states $\tscope_1''\vdash\mem_1''$ and $\tscope_2''\vdash\mem_2''$ and indeed
$\mem_1'' \lequiv{\tscope_1''\sqcup\tscope_2''} \mem_2''$ holds, then we can select $\tscope_1''$ to be
$(\overline{\tscope_{g1}'}, \overline{\tscope_{f1}'} )$, i.e. the copied-out map is
$(\overline{\tscope_{g1}'}, \tscope_{l1})  [e_{i} \mapsto \overline{\tscope_{f1}'}(x_{i}) \mid \isOut(d_{i})]$.

Similarly, we can select $\tscope_2''$ to be
$(\overline{\tscope_{g2}'}, \overline{\tscope_{f2}'} )$,  i.e. the copied-out map is
$(\overline{\tscope_{g2}'}, \tscope_{l2})  [e_{i} \mapsto \overline{\tscope_{f2}'}(x_{i}) \mid \isOut(d_{i})]$. \\

\begin{enumerate}[label=\textbf{Goal}]
    \item \textbf{1:} 
    \begin{align*}
    	(\overline{\tscope_{g1}'}, \tscope_{l1}) [e_{i} \mapsto \overline{\tscope_{f1}'}(x_{i})] \vdash (\mem'_{g1}, \mem_{l1})[e_{i} \mapsto \mem'_{f1}(x_{i})  ]
    \end{align*}
          Since we know $\overline{\tscope_{g1}'} \vdash \mem_{g1}'$ ((a) from previous step) and ${\tscope_{l1}} \vdash \mem_{l1}$ from assumptions rewrites,
          this entails that $(\overline{\tscope_{g1}'}, \tscope_{l1}) \vdash (\mem'_{g1}, \mem_{l1})$.
          We also proved that $\overline{\tscope_{f1}'} \vdash \mem_{f1}'$ ((c) from previous step), thus for all $x \in \text{domain}(\tscope_{f1}')$ holds $\mem_{f1}'(x) : \overline{\tscope_{f1}'}(x)$. Now we can use Lemma \ref{lemma:lval updates_preserves_type} to show that the goal holds.

    \item \textbf{2:} 
    \begin{align*}
    	(\overline{\tscope_{g2}'}, \tscope_{l2})  [e_{i} \mapsto \overline{\tscope_{f2}'}(x_{i})  ] \vdash (\mem'_{g2}, \mem_{l2})[e_{i} \mapsto \mem'_{f2}(x_{i})  ]
    \end{align*}
          Same proof as the previous sub goal, using (b) and (d) the previous step, and the initial assumption $\tscope_{l2}\vdash\mem_{l2}$.

    \item \textbf{3:} 
    \begin{align*}
    	&(\mem'_{g1}, \mem_{l1})[e_{i} \mapsto \mem'_{f1}(x_{i})  ] \\
    	&\lequiv{((\overline{\tscope'_{g1}}, \tscope_{l1})[e_{i} \mapsto \tscope'_{f1}(x_{i})  ]) \sqcup ((\overline{\tscope_{g2}'}, \tscope_{l2})  [e_{i} \mapsto \overline{\tscope_{f2}'}(x_{i})  ])} \\
    	&(\mem'_{g2}, \mem_{l2})[e_{i} \mapsto \mem'_{f2}(x_{i})  ] 
    \end{align*}

          We know that $\mem_{g1}' \lequiv{\overline{\tscope_{g1}'}\sqcup\overline{\tscope_{g2}'}}\mem_{g2}'$ (from (e) of previous step),
          we also know that $\mem_{l1}' \lequiv{\overline{\tscope_{l1}}\sqcup\overline{\tscope_{l2}}}\mem_{l2}$ (from assumptions), now using Lemma \ref{lemma:low_equivalence_distribution}
          we can combine them to infer an equivalence before copying out or (no copy-out because there are not out directed parameters), i.e. :

          ${(\mem'_{g1}, \mem_{l1})}
              \lequiv{(\overline{\tscope_{g1}'}, \tscope_{l1}) \sqcup (\overline{\tscope_{g2}'}, \tscope_{l2})}
              {(\mem'_{g2}, \mem_{l2})}$.

          We previously proved that $\mem_{f1}\lequiv{\tscope_{f1} \sqcup \tscope_{f2}} \mem_{f2}$,
          now we can prove this goal directly using Lemma \ref{lemma:low_equivalence_update}.
\end{enumerate}

\casesItem{Case extern}: Here $stmt$ is $f(e_{1} \cons e_{n})$. 
From call typing rule we know (Note that here we explicitly write the global and local state type):

\begin{enumerate}
    \item $(\tscope_{g1}, \tscope_{l1}) \vdash  e_{i} : \type_{i1} $
    \item $(\text{Cont}_{\mathrm{E}},(x_{1},d_{1}) \cons (x_{n}, d_{n})) = \tcv(f) $
    \item $\tscope_{f1} =\{x_{i} \mapsto \type_{i1}\}$
    \item $\forall (\tscope_i, \phi,\tscope_t) \in \text{Cont}_{\mathrm{E}}. \ (\tscope_{g1}, \tscope_{l1}) \sqsubseteq \tscope_i$
    \item $\Gamma_1' =
              \{ \tscope_1' \concat \raisel(\tscope_t, pc) \ | \ (\tscope_i,\phi,\tscope_t) \in \text{Cont}_{\mathrm{E}} \band \refine{(\tscope_{g1}, \tscope_{f1})}{\phi} = \tscope_1' \neq \bullet  \}$
    \item $\Gamma_1'' = \{(\tscope'_{g}, \tscope_{l1}) [e_{i} \mapsto \tscope'_{f}(x_{i}) \mid \isOut(d_{i})]
              \mid (\tscope'_{g}, \tscope'_{f}) \in \Gamma_1'\}$
    \item $(\tscope_{g2}, \tscope_{l2}) \vdash  e_{i} : \type_{i2} $
    \item $\tscope_{f2} =\{x_{i} \mapsto \type_{i2}\}$
    \item $\forall (\tscope_i, \phi,\tscope_t) \in \text{Cont}_{\mathrm{E}}. \ (\tscope_{g2}, \tscope_{l2}) \sqsubseteq \tscope_i$
    \item $\Gamma_2' =
              \{ \tscope_2' \concat \raisel(\tscope_t, pc) \ | \ (\tscope_i,\phi,\tscope_t) \in \text{Cont}_{\mathrm{E}} \band \refine{(\tscope_{g2}, \tscope_{f2})}{\phi} = \tscope_2' \neq \bullet  \}$
    \item $\Gamma_2'' = \{(\tscope'_{g}, \tscope_{l2}) [e_{i} \mapsto \tscope'_{f}(x_{i}) \mid \isOut(d_{i})]
              \mid (\tscope'_{g}, \tscope'_{f}) \in \Gamma_2'\}$
\end{enumerate}

And from extern reduction rule we know:
\begin{enumerate}
    \item $(sem_{f},(x_{1},d_{1}) \cons (x_{n}, d_{n})) = \stmtctxvn{1}(f) $
    \item $\mem_{f1} =\{x_{i} \mapsto \expsem{(\mem_{g1}, \mem_{l1})}{e_{i}}\}$
    \item $(\mem'_{g1}, \mem'_{f1}) = sem_{f}(\mem_{g1}, \mem_{f1})$
    \item $ \mem_1'' = (\mem'_{g1}, \mem_{l1})[e_{i} \mapsto \mem'_{f1}(x_{i}) \mid \isOut(d_{i})]$
    \item $(sem_{f},(x_{1},d_{1}) \cons (x_{n}, d_{n})) = \stmtctxvn{2}(f) $
    \item $\mem_{f2} =\{x_{i} \mapsto \expsem{(\mem_{g2}, \mem_{l2})}{e_{i}}\}$
    \item $(\mem'_{g2}, \mem'_{f2}) = sem_{f}(\mem_{g2}, \mem_{f2})$
    \item $ \mem_2'' = (\mem'_{g2}, \mem_{l2})[e_{i} \mapsto \mem'_{f2}(x_{i}) \mid \isOut(d_{i})]$
\end{enumerate}

Let $\mem_1= (\mem_{g1}, \mem_{l1})$, $\mem_2= (\mem_{g2}, \mem_{l2})$, let  $\tscope_1 = (\tscope_{g1}, \tscope_{l1})$, and  $\tscope_1 = (\tscope_{g2}, \tscope_{l2})$.
We initially know that : $\tscope_1 \vdash \mem_1$, $\tscope_2 \vdash \mem_2$, and $\mem_1 \lequiv{\tscope_1\sqcup\tscope_2} \mem_2$.

We need to prove there are two state types $\tscope_1''\in\Gamma_1''$ and $\tscope_2''\in\Gamma_2''$
such they type the final states $\tscope_1''\vdash\mem_1''$ and $\tscope_2''\vdash\mem_2''$ and indeed $\mem_1'' \lequiv{\tscope_1''\sqcup\tscope_2''} \mem_2''$ holds. 

First we need to prove that the resulted copy-in map is also well-typed i.e. $\tscope_{f1}\vdash\mem_{f1}$ and $\tscope_{f2}\vdash\mem_{f2}$: same proof as the function call case. 

Now we want to show that $(\mem_{g1},\mem_{f1})\lequiv{(\tscope_{g1},\tscope_{f1})\sqcup(\tscope_{g2},\tscope_{f2})}(\mem_{g2},\mem_{f2})$: same proof as the function call case. 

We know from $\text{extWT } C \ X_1$ and $\text{extWT } C \ X_2$ relation in the well-typedness $\stmtctxvn{1}\vdash\tcv$ and $\stmtctxvn{2}\vdash\tcv$ that indeed there exists an input state type, condition and output state type i.e. $(\tscope_i,\phi,\tscope_t)$ in the contract of the extern such that the condition satisfies the input state $\phi(\mem_{g1},\mem_{f1})$.

From \textsc{Soundness of Abstraction} proof, we know that $ (\tscope_{g1},\tscope_{f1}) \vdash (\mem_{g1},\mem_{f1})$.
Using 4 of extern's typing rule we know that $(\tscope_{g1},\tscope_{f1})$ is no more restrictive than $\tscope_i$, i.e. $(\tscope_{g1},\tscope_{f1}) \sqsubseteq \tscope_i$ holds. Using the same strategy and 9 of extern's typing rule, we can also prove $(\tscope_{g2},\tscope_{f2}) \sqsubseteq \tscope_i$. We can also deduct that $(\mem_{g1},\mem_{f1}) \lequiv{\tscope_i} (\mem_{g2},\mem_{f2})$ using Lemma \ref{lemma:state_equivalence_preservation}.

Let the variables $\{x_1 \cons x_n\}$ be the ones used in the condition $\phi$. Using the definition of extern's well-typedness again, we know that the least upper bound of typing label of $\{x_1 \cons x_n\}$ in $\tscope_i$ is no more restrictive that the lower bound of the output state type $\tscope_t$. The entails that since $\phi$ holds on $\mem_1$ (i.e. $\phi(\mem_1)$)then it indeed holds for $\mem_2$ (i.e. $\phi(\mem_2)$).

Now we split the proof into two cases:

\textbf{A.} For the changed variables by extern in the state: given 3 and 7 of the extern reduction rule, and using $\phi(\mem_1)$
we can use the definition of extern's well-typedness again to infer that the variables that are changed by the semantics are a subset of the domain of $\tscope_t$ and low equivalent with respect to $\tscope_t$: 
\begin{align*}
	((\mem'_{g1}, \mem'_{f1}) \setminus (\mem_{g1}, \mem_{f1}))
	\lequiv{\tscope_t}
	((\mem'_{g2}, \mem'_{f2}) \setminus(\mem_{g2}, \mem_{f2}))
\end{align*}

\textbf{B.} Now for the unchanged variables by extern in the state: it is easy to see that the refined state $\refine{(\tscope_{g1},\tscope_{f1})}{\phi}$ is more restrictive than $(\tscope_{g1},\tscope_{f1})$, thus we can say $(\tscope_{g1},\tscope_{f1}) \sqsubseteq \refine{(\tscope_{g1},\tscope_{f1})}{\phi} $. Now we can use Lemma \ref{lemma:state_equivalence_preservation} to show:
\begin{align*}
	(\mem_{g1},\mem_{f1}) \lequiv{\refine{(\tscope_{g1},\tscope_{f1})}{\phi}\sqcup\refine{(\tscope_{g1},\tscope_{f1})}{\phi}}(\mem_{g2},\mem_{f2})
\end{align*}
This property also holds on the variable names $x$ that are unchanged by the behavior of the extern, i.e.,:
\begin{align*}
	(\mem'_{g1},\mem'_{f1}) \lequiv{\refine{(\tscope_{g1},\tscope_{f1})}{\phi}\sqcup\refine{(\tscope_{g1},\tscope_{f1})}{\phi}}(\mem'_{g2},\mem'_{f2})
\end{align*}

We can rename $\refine{(\tscope_{g1},\tscope_{f1})}{\phi} \concat\tscope_t$ to $\overline{\tscope_{3}}$ and $\refine{(\tscope_{g2},\tscope_{f2})}{\phi} \concat\tscope_t$ to $\overline{\tscope_{4}}$, and we can easily infer from A and B that :
$(\mem'_{g1},\mem'_{f1}) \lequiv{\overline{\tscope_{3}}\sqcup\overline{\tscope_{4}}}(\mem'_{g2},\mem'_{f2})$.

The rest of the proof is the same as the function call case.

\casesItem{Case table application}: Here $stmt$ is $\texttt{apply} \ tbl $. 
From table rule we know:

\begin{enumerate}
    \item $(\overline{e},\text{Cont}_{\mathrm{tbl}})= \tcv(tbl)$
    \item $\tscope_1 \vdash {{e}_i} : {\type}_{i1}$
    \item $\lbl_1 = \bigsqcup_{i} \getlabel(\type_{i1})$
    \item $pc'_1 = pc \sqcup \lbl_1 $
    \item  $\forall (\phi_j, (a_j, \overline{\tau}_j)) \in \text{Cont}_{\mathrm{tbl}}. \
              (\tscope_{g_j},\tscope_{l_j})= \refine{\tscope_1}{\phi_j} \band \\
              (\stmt_j,(x_{j_1},\text{none}) \cons (x_{j_n}, \text{none})) = \tcv(a_j) \band \\
              \tscope_{a_j} =\{x_{j_i} \mapsto \type_{j_i}\} \band
              \tcv, pc'_1, (\tscope_{g_j},\tscope_{a_j}) \vdash \stmt_j : \Gamma_j$
    \item   $\Gamma'_1 = \cup_j \{ (\tscope_{g_j}' , \tscope_{l_j}) | (\tscope_{g_j}',\tscope_{a_j}') \in \Gamma_j \}$
    \item   $\Gamma''_1 =  {\begin{cases}
                  \join(\Gamma'_1) \hspace{15pt} \text{if } \lbl_1 = {\high} \\
                  \Gamma'_1   \hspace{38pt}  \text{otherwise}
              \end{cases}}$
    \item $\tscope_2 \vdash {{e}_i} : {\type}_{i2}$
    \item $\lbl_2 = \bigsqcup_{i} \getlabel(\type_{i2})$
    \item $pc'_2 = pc \sqcup \lbl_2 $
    \item  $\forall (\phi_k, (a_k, \overline{\tau}_k)) \in \text{Cont}_{\mathrm{tbl}}. \
              (\tscope_{g_k},\tscope_{l_k})= \refine{\tscope_2}{\phi_k} \band \\
              (\stmt_k,(x_{k_1},\text{none}) \cons (x_{k_n}, \text{none})) = \tcv(a_k) \band \\
              \tscope_{a_k} =\{x_{k_i} \mapsto \type_{k_i}\} \band
              \tcv, pc'_2, (\tscope_{g_k},\tscope_{a_k}) \vdash \stmt_k : \Gamma_k$
    \item   $\Gamma'_2 = \cup_k \{ (\tscope_{g_k}' , \tscope_{l_k}) | (\tscope_{g_k}',\tscope_{a_k}') \in \Gamma_k \}$
    \item   $\Gamma''_2 =  {\begin{cases}
                  \join(\Gamma'_2) \hspace{15pt} \text{if } \lbl_2 = {\high} \\
                  \Gamma'_2   \hspace{38pt}  \text{otherwise}
              \end{cases}}$
\end{enumerate}

And from table reduction rule we know:
\begin{enumerate}
    \item $(\overline{e}, sem_{tbl1}) = \stmtctxvn{1}(tbl)$
    \item $sem_{tbl1}(\expsem{(\mem_{g1}, \mem_{l1})}{e_{1}} \cons \expsem{(\mem_{g1}, \mem_{l1})}{e_{n}}) = (a_1, \overline{\val})$
    \item $(\stmt_1,(x_{1},\mathrm{none}) \cons (x_{n}, \mathrm{none})) = \stmtctxvn{1}(a_1)$
    \item $\mem_{a1} =\{x_{i} \mapsto {\val_i}\}$
    \item $\stmtctxvn{1}:\sem{(\mem_{g1}, \mem_{a1})}{\stmt_1}{(\mem_{g1}', \mem_{a1}')}$
    \item $\mem_{final1} = (\mem_{g1}',\mem_{l1})$
    \item $(\overline{e}', sem_{tbl2}) = \stmtctxvn{2}(tbl)$
    \item $sem_{tbl2}(\expsem{(\mem_{g2}, \mem_{l2})}{e'_{1}} \cons \expsem{(\mem_{g2}, \mem_{l2})}{e'_{n}}) = (a_2, \overline{\val}')$
    \item $(\stmt_2,(x'_{2},\mathrm{none}) \cons (x'_{n}, \mathrm{none})) = \stmtctxvn{2}(a_2)$
    \item $\mem_{a2} =\{x'_{i} \mapsto {\val'_i}\}$
    \item $\stmtctxvn{2}:\sem{(\mem_{g2}, \mem_{a2})}{\stmt_2}{(\mem_{g2}', \mem_{a2}')}$
    \item $\mem_{final2} = (\mem_{g2}',\mem_{l2})$
\end{enumerate}

In addition to that, we get induction hypothesis for $\stmt_1$ \textsc{IH1} (similarly for $\stmt_2$ \textsc{IH2}):
\begin{align*}
	\forall \ \tc \ pc \  \mem_a \ \mem_b \ \mem_a' \ & \mem_b' \ \tscope_a \ \tscope_b \ \Gamma_a \ \Gamma_b \ \stmtctx_a \ \stmtctx_b . \\
	\ \tc, pc,  \tscope_a \vdash \stmt : \Gamma_a &\band \ \tc, pc, \tscope_b \vdash \stmt : \Gamma_b
	\implies \\
	\tc \vdash \stmtctx_a &\band \tc \vdash \stmtctx_b \band \stmtctx_a \lequiv{\tc} \stmtctx_b
	\band \\
	\tscope_a \vdash \mem_a
	&\band
	\tscope_b \vdash \mem_b
	\band
	\mem_a \lequiv{\tscope_a\sqcup\tscope_b} \mem_b
	\band \\
	{\stmtctx_a}: \sem{\mem_a}{\stmt_1}{\mem_a'}
	&\band
	{\stmtctx_b}: \sem{\mem_b}{\stmt_1}{\mem_b'} \\
	&\implies \\
	\big(
	\exists \tscope_a' \in \Gamma_a \band \tscope_b' \in \Gamma_b. &\
	\tscope_a' \vdash \mem_a' \band \tscope_b' \vdash \mem_b' \band \mem_a' \lequiv{\tscope_a'\sqcup\tscope_b'} \mem_b' \big)
\end{align*}

Let $\mem_1= (\mem_{g1}, \mem_{l1})$,
$\mem_2= (\mem_{g2}, \mem_{l2})$, let
$\tscope_1 = (\tscope_{g1}, \tscope_{l1})$, and
$\tscope_2 = (\tscope_{g2}, \tscope_{l2})$.
We initially know that :
$\tscope_1 \vdash \mem_1$,
$\tscope_2 \vdash \mem_2$, and
$\mem_1 \lequiv{\tscope_1 \sqcup \tscope_2} \mem_2$.

The final goal is to prove there
are two state types $\tscope_1''\in\Gamma_1''$ and $\tscope_2''\in\Gamma_2''$
such they type the final states $\tscope_1''\vdash(\mem_{g1}', \mem_{l1})$ and $\tscope_2''\vdash(\mem_{g2}', \mem_{l2})$ and indeed $(\mem_{g1}', \mem_{l1}) \lequiv{\tscope_1''\sqcup\tscope_2''} (\mem_{g2}', \mem_{l2})$ holds.

Initially,
we prove expression (table keys) found in 1 in reduction rule, with 1 and 7 of typing rule are the same. We use $\stmtctxvn{1}\lequiv{\tcv}\stmtctxvn{2}$ to show that 1 and 7 in the typing rule have the same expression (i.e. $\overline{e} = \overline{e}'$). Then, using $\text{tblWT } C \ X_1$ and $\text{tblWT } C \ X_2$, we confirm that the expression in rule 1 of the reduction rule matches those in rules 1 and 7 of the typing rule. Therefore, all relevant expressions are equivalent.

First, we conduct a case analysis on $\lbl_1$ and $\lbl_2$ being equivalent:
\begin{enumerate}[label=\textbf{case}]
    \item $\lbl_1 = \lbl_2 = {\low}$:
          Given that $\lbl_1$ and $\lbl_2$ are {\low} in 3 and 9 of the typing rule, respectively, we can conclude that the evaluation of the key expressions $\overline{e}$ in states $\mem_1$ and $\mem_2$ are identical. This follows directly from Lemma \ref{lemma:expression_evaluation_of_consistent_states}, which establishes that $\mem_1(e_i)=\mem_2(e_i)$ for all $e_i$. 

          By the definition of $\stmtctxvn{1}\lequiv{\tcv}\stmtctxvn{2}$, we know that for any memory states $\mem_1$ and $\mem_2$, if $\mem_1(e_i)=\mem_2(e_i)$ for all table's keys $e_i$, then $\mem_1(\phi) \Leftrightarrow \mem_2(\phi)$. This implies that the condition for both tables to match is identical.
          Applying this definition again, we deduct that the actions $a_1$ and $a_2$ in 2 and 8 of the reduction rule are identical i.e. $a_1 = a_2 = a$, therefore their corresponding action bodies and signatures must also be the same in $\stmtctxvn{1}$ and $\stmtctxvn{2}$, thus $\stmt_1=\stmt_2=s$ and $\overline{x}=\overline{x}'$.
          Applying this definition $\stmtctxvn{1}\lequiv{\tcv}\stmtctxvn{2}$, yet again, we can also deduct that the action's values $\overline{\val}$ and $\overline{\val}'$ are low equivalent wrt. $\overline{\type}$ i.e. $\overline{\val}\lequiv{\overline{\type}}\overline{\val}'$.

          Now we can rewrite the table reduction rule to:
          \begin{enumerate}
              \item $(\overline{e}, sem_{tbl1}) = \stmtctxvn{1}(tbl)$
              \item $sem_{tbl1}(\expsem{(\mem_{g1}, \mem_{l1})}{e_{1}} \cons \expsem{(\mem_{g1}, \mem_{l1})}{e_{n}}) = (a, \overline{\val})$
              \item $(\stmt,(x_{1},\mathrm{none}) \cons (x_{n}, \mathrm{none})) = \stmtctxvn{1}(a)$
              \item $\mem_{a1} =\{x_{i} \mapsto {\val_i}\}$
              \item $\stmtctxvn{1}:\sem{(\mem_{g1}, \mem_{a1})}{\stmt}{(\mem_{g1}', \mem_{a1}')}$
              \item $\mem_{final1} = (\mem_{g1}',\mem_{l1})$
              \item $(\overline{e}, sem_{tbl2}) = \stmtctxvn{2}(tbl)$
              \item $sem_{tbl2}(\expsem{(\mem_{g2}, \mem_{l2})}{e_{1}} \cons \expsem{(\mem_{g2}, \mem_{l2})}{e_{n}}) = (a, \overline{\val}')$
              \item $(\stmt,(x_{2},\mathrm{none}) \cons (x_{n}, \mathrm{none})) = \stmtctxvn{2}(a)$
              \item $\mem_{a2} =\{x_{i} \mapsto {\val'_i}\}$
              \item $\stmtctxvn{2}:\sem{(\mem_{g2}, \mem_{a2})}{\stmt}{(\mem_{g2}', \mem_{a2}')}$
              \item $\mem_{final2} = (\mem_{g2}',\mem_{l2})$
          \end{enumerate}  

          Given $\tcv\vdash\stmtctxvn{1}$
          that indeed exists condition $\phi_j$ in $\text{Cont}_{\mathrm{tbl}}$ that satisfies the table input state $(\mem_{g1}, \mem_{l1})$, and also exists a list of types $\overline{\type}$ such that it can type that values of the table's semantics (in 2 of the typing rule) i.e. ${\val_i:\type_i}$. Similarly, from $\tcv\vdash\stmtctxvn{2}$, we know that exists $\phi_k$ that satisfies the table input state $(\mem_{g2}, \mem_{l2})$, and exists $\overline{\type}'$ such that ${\val_i':\type_i'}$. 

          We can instantiate 5 from
          the table rule with $(\phi_j, (a,\overline{\type}))$, and instantiate 11 with $(\phi_k, (a,\overline{\type}'))$.

          We need to prove $(\mem_{g1},\mem_{a1}) \lequiv{(\tscope_{g_j},\tscope_{a_j})\sqcup(\tscope_{g_k},\tscope_{a_k})} (\mem_{g2},\mem_{a2})$ using the following sub-goals:

          \begin{enumerate}[label=\textbf{Goal}]
              \item \textbf{1.} prove $\mem_{g1}\lequiv{\tscope_{g_j}\sqcup\tscope_{g_k}} \mem_{g2}$ and $\mem_{l1}\lequiv{\tscope_{l_j}\sqcup\tscope_{l_k}} \mem_{l2}$: \\
                    It is easy to see that
                    $\tscope_1 \sqsubseteq \refine{\tscope_1}{\phi_j}$ and $\tscope_2 \sqsubseteq \refine{\tscope_2}{\phi_k}$ trivially hold, and can be rewritten to $\tscope_1 \sqsubseteq (\tscope_{g_j},\tscope_{l_j})$ and $\tscope_2 \sqsubseteq (\tscope_{g_k},\tscope_{l_k})$. Since initially we know that $(\mem_{g1}, \mem_{l1})\lequiv{\tscope_1\sqcup\tscope_2} (\mem_{g2}, \mem_{l2})$, therefore using Lemma \ref{lemma:state_equivalence_preservation}
                    $(\mem_{g1}, \mem_{l1})\lequiv{(\tscope_{g_j},\tscope_{l_j})\sqcup(\tscope_{g_k},\tscope_{l_k})} (\mem_{g2}, \mem_{l2})$.
                    This entails using Lemma \ref{lemma:low_equivalence_distribution} that
                    $\mem_{g1}\lequiv{\tscope_{g_j}\sqcup\tscope_{g_k}} \mem_{g2}$ and also $\mem_{l1}\lequiv{\tscope_{l_j}\sqcup\tscope_{l_k}} \mem_{l2}$ hold. 

                    Additionally, note that Hyp \ref{hyp:interval_typedness_externs_and_tables_refinement} states also that $(\tscope_{g_j},\tscope_{l_j})$ can still type the state $(\mem_{g1}, \mem_{l1})$, i.e.
                    $(\tscope_{g_j},\tscope_{l_j}) \vdash (\mem_{g1}, \mem_{l1})$.
                    Similarly,
                    $(\tscope_{g_k},\tscope_{l_k}) \vdash (\mem_{g2}, \mem_{l2})$.

              \item \textbf{2.} prove $\mem_{a1} \lequiv{\tscope_{a_j}\sqcup\tscope_{a_k}} \mem_{a2}$: %
                    From $\stmtctxvn{1}\lequiv{\tcv}\stmtctxvn{2}$
                    we know it holds $\overline{\val} \lequiv{\overline{\type}} \overline{\val}'$, note that these are the table's semantic output (i.e. will be action's arguments).
                    Thus, whenever $\tau_i$ is {\low}, then the values of arguments are equivalent $\val_i = \val'_i$.
                    This entails that constructing states $\mem_{a1}=\{x_{i} \mapsto {\val_i}\}$ and $\mem_{a2}=\{x_{i} \mapsto {\val'_i}\}$ must be low equivalent with respect to $\tscope_{a_j} =\{x_{i} \mapsto \type_{i}\}$, i.e. $\mem_{a1} \lequiv{\tscope_{a_j}} \mem_{a2}$.
                    Similarly, from $\stmtctxvn{1}\lequiv{\tcv}\stmtctxvn{2}$ and whenever $\tau'_i$ is {\low}, we can also deduct that $\tscope_{a_k} =\{x_{i} \mapsto \type_{i}'\}$, i.e. $\mem_{a1} \lequiv{\tscope_{a_k}} \mem_{a2}$. Therefore, this entails (using Lemma \ref{lemma:low_equivalence_distribution}) that $\mem_{a1} \lequiv{\tscope_{a_j}\sqcup\tscope_{a_k}} \mem_{a2}$.

          \end{enumerate}

          From the last two sub-goals,
          we can use Lemma \ref{lemma:low_equivalence_distribution} to deduct $(\mem_{g1},\mem_{a1}) \lequiv{(\tscope_{g_j},\tscope_{a_j})\sqcup(\tscope_{g_k},\tscope_{a_k})} (\mem_{g2},\mem_{a2})$.

          Now we can use \textsc{IH} and instantiate it with
          $\big(\tcv, {pc \sqcup \lbl}, (\mem_{g1},\mem_{a1}) , (\mem_{g2},\mem_{a2}), (\mem_{g1}',\mem_{a1}'),$ $(\mem_{g2}',\mem_{a2}'), (\tscope_{g_j},\tscope_{a_j}), (\tscope_{g_k},\tscope_{a_k}), \Gamma_j,\Gamma_k, \stmtctxvn{1}, \stmtctxvn{2} \big)$
          to deduct that indeed exist $\overline{\tscope_{1}}\in\Gamma_j$ and $\overline{\tscope_{2}}\in\Gamma_k$ such that $\overline{\tscope_{1}}\vdash(\mem_{g1}', \mem_{a1}')$ and
          $\overline{\tscope_{2}}\vdash(\mem_{g2}', \mem_{a2}')$ and indeed $(\mem_{g1}', \mem_{a1}') \lequiv{\overline{\tscope_{1}}\sqcup\overline{\tscope_{2}}} (\mem_{g2}', \mem_{a2}')$.
          In the following, let $\overline{\tscope_{1}} = (\overline{\tscope_{1_g}}, \overline{\tscope_{1_l}})$ and $\overline{\tscope_{2}} = (\overline{\tscope_{2_g}}, \overline{\tscope_{2_l}})$.
          We can rewrite the \textsc{IH} results as following: exist
          $(\overline{\tscope_{1_g}}, \overline{\tscope_{1_l}})\in\Gamma_j$ and
          $(\overline{\tscope_{2_g}}, \overline{\tscope_{2_l}})\in\Gamma_k$ such that
          $(\overline{\tscope_{1_g}}, \overline{\tscope_{1_l}})\vdash(\mem_{g1}', \mem_{a1}')$ and
          $(\overline{\tscope_{2_g}}, \overline{\tscope_{2_l}})\vdash(\mem_{g2}', \mem_{a2}')$
          and indeed
          $(\mem_{g1}', \mem_{a1}') \lequiv{(\overline{\tscope_{1_g}}, \overline{\tscope_{1_l}})\sqcup(\overline{\tscope_{2_g}}, \overline{\tscope_{2_l}})} (\mem_{g2}', \mem_{a2}')$.

          Clearly, from 6 in the typing rule, and we know that $\Gamma_1'$ is the union of all the changed global state types by the action's body, with the refined starting local state type $(\tscope_{g_j},\tscope_{l_j})= \refine{\tscope_1}{\phi_j}$. Thus, we know that indeed $(\overline{\tscope_{1_g}},\tscope_{l_j}) \in \Gamma_1'$.
          Similarly, from 12 in the typing rule, we know that $(\overline{\tscope_{2_g}},\tscope_{l_k}) \in \Gamma_2'$. 

          We choose $(\overline{\tscope_{1_g}},\tscope_{l_j})$ and $(\overline{\tscope_{2_g}},\tscope_{l_k})$ to finish the proof of the goal in this subcase.

          Since $\lbl_1 = \lbl_2$ and is {\low} in this sub-case, then trivially $\Gamma_1'' = \Gamma_1'$ in 7 of the typing rule, and $\Gamma_2'' = \Gamma_2'$ in 13 of the typing rule. Since we proved that $\tscope_{l_j} \vdash \mem_{l1}$ and $\tscope_{l_k} \vdash \mem_{l2}$, therefore
          $(\overline{\tscope_{1_g}},\tscope_{l_j})\vdash(\mem_{g1}', \mem_{l1})$
          and
          $(\overline{\tscope_{2_g}},\tscope_{l_k})\vdash(\mem_{g2}', \mem_{l2})$ hold.
          Additionally, using Lemma \ref{lemma:low_equivalence_distribution}
          $(\mem_{g1}', \mem_{l1}) \lequiv{(\overline{\tscope_{1_g}},\tscope_{l_j})\sqcup(\overline{\tscope_{2_g}},\tscope_{l_k})} (\mem_{g2}', \mem_{l2})$.

    \item $\lbl_1 \neq \lbl_2$:
          This proof is similar to the conditional case.
          We initiate the proof by fixing $\lbl_2$ to be {\high} while $\lbl_1$ can be either {\high} or {\low}, thus the evaluation of $\overline{e}$ in the states $(\mem_{g1}, \mem_{l1})$ and $(\mem_{g2}, \mem_{l2})$ differs. Consequently, we (possibly) end up with two different actions and their corresponding arguments $(a_1,\overline{\val})$ and $(a_2,\overline{\val}')$.

          From $\tcv\vdash\stmtctxvn{1}$
          we know that indeed exists condition $\phi_j$ in $\text{Cont}_{\mathrm{tbl}}$ that satisfies the table input state $(\mem_{g1}, \mem_{l1})$, and also exists a list of types $\overline{\type}$ such that it can type that values of the table's semantics (in 2 of the typing rule) i.e. ${\val_i:\type_i}$.
          Similarly, from $\tcv\vdash\stmtctxvn{2}$, we know that exists $\phi_k$ that satisfies the table input state $(\mem_{g2}, \mem_{l2})$, and exists $\overline{\type}'$ such that ${\val_i':\type_i'}$.

          We can instantiate 5 (we refer to these as the first configuration) from
          the table rule with $(\phi_j, (a_1,\overline{\type}))$, and instantiate 11 (we refer to these as the second configuration) with $(\phi_k, (a_2,\overline{\type}'))$.
          Since the actions are different, then we let $\stmt_1$ be the body of action $a_1$, and $\stmt_2$ be the body of action $a_2$ (we refer to these as the first configuration).

          Using the same steps in the previous sub-case, we know that $\mem_{g1}\lequiv{\tscope_{g_j}\sqcup\tscope_{g_k}} \mem_{g2}$ and $\mem_{l1}\lequiv{\tscope_{l_j}\sqcup\tscope_{l_k}} \mem_{l2}$.

          The final goal is to prove there
          are two state types $\tscope_1''\in\Gamma_1''$ and $\tscope_2''\in\Gamma_2''$
          such they type the final states $\tscope_1''\vdash(\mem_{g1}', \mem_{l1})$ and $\tscope_2''\vdash(\mem_{g2}', \mem_{l2})$ and indeed $(\mem_{g1}', \mem_{l1}) \lequiv{\tscope_1''\sqcup\tscope_2''} (\mem_{g2}', \mem_{l2})$ holds.

          From the \textsc{Soundness of Abstraction},
          we know that for the second configuration indeed exists $\tscope_2'\in\Gamma_k$ such that it types the action's resulted state $(\mem_{g2}', \mem_{a2}')$ (i.e. $\tscope_2'\vdash(\mem_{g2}', \mem_{a2}')$).
          In the following, let $(\tscope_{g2}',\tscope_{l2}') = \tscope_2'$, thus indeed
          $(\tscope_{g2}',\tscope_{l2}')\vdash(\mem_{g2}', \mem_{a2}')$.
          Given
          that $\Gamma_2'$ in 12 of the typing rule is a union of all $\Gamma_i$; thus indeed it includes $\Gamma_k$ that has the resulted global state type $\tscope_{g2}'$, and the refined caller's local state type $\tscope_{l_k}$ while removing the callee's resulted local state type $\tscope_{a_k}$.
          In short, we know that indeed for the second configuration will find $(\tscope_{g2}',\tscope_{l_k}) \in \Gamma_2'$.
          Note that in the \textsc{Soundness of Abstraction} previously we proved that $\tscope_{l_k} \vdash \mem_{l2}$, therefore $(\tscope_{g2}',\tscope_{l_k}) \vdash (\mem_{g2}', \mem_{l2})$.
          Since $\lbl_2$ is {\high}, then
          $\Gamma_2'' = \text{join}(\Gamma_2')$, so we can deduce (by Lemma \ref{lemma:join_does_not_modify_the_intervals}) the existence of $\overline{\tscope_2'} \in \text{join}(\Gamma_2')$ such that it is more restrictive than $\tscope_2'$, denoted as $\tscope_2' \sqsubseteq \overline{\tscope_2'}$. Using the same lemma, we conclude that $\overline{\tscope_2'} \vdash (\mem_{g2}', \mem_{l2})$. Now on, we choose $\overline{\tscope_2'}$ to be used in the proof and resolve the second conjunction of the goal.

          From the \textsc{Soundness of Abstraction},
          we know that for the first configuration indeed exists $\tscope_1'\in\Gamma_j$ such that it types action's resulted state $(\mem_{g1}', \mem_{a1})$ (i.e. $\tscope_1'\vdash(\mem_{g1}', \mem_{a1})$).
          In the first configuration, the final state type set $\Gamma_1''$ can be either a union (if $\lbl_1 = {\low}$) or a join (if $\lbl_1 = {\high}$) of all final state type sets and including $\Gamma_1'$.
          In either case (union or join), we can establish the existence of
          $\overline {\tscope_1'} \in \Gamma_1''$ such that $\tscope_1' \sqsubseteq \overline{\tscope_1'}$ and indeed $\overline{\tscope_1'} \vdash (\mem_{g1}', \mem_{l1})$. Note that if $\Gamma_1''$ resulted from a join, we follow the same steps of the (second configuration) in the previous step; otherwise, if it resulted from a union, it is trivially true. Thus, the first conjunction of the final goal is proved.

          The final goal left to prove is $(\mem_{g1}', \mem_{l1}) \lequiv{\overline{\tscope_1'}\sqcup\overline{\tscope_2'}} (\mem_{g2}', \mem_{l2})$ holds. In the following, let $(\overline{\tscope_{g1}'},\overline{\tscope_{l1}'}) = \overline{\tscope_1'}$ and $(\overline{\tscope_{g2}'},\overline{\tscope_{l2}'}) = \overline{\tscope_2'}$. 

          Next, we proceed to implement cases based on whether an $\lval$'s type label is {\high} or {\low}. 

          \begin{enumerate}[label=\textbf{case}]
              \item $(\overline{\tscope_1'}\sqcup\overline{\tscope_2'}(\lval) = \type) \band \getlabel(\type) =  {\high}$: holds trivially.

              \item $(\overline{\tscope_1'}\sqcup\overline{\tscope_2'}(\lval) = \type) \band \getlabel(\type) =  {\low} $:
                    this case entails that each state type individually holds $\overline{\tscope_1'}(\lval) = \type_1' \band \getlabel(\type_1') =  {\low}$ and also $\overline{\tscope_2'}(\lval) = \type_2'' \band \getlabel(\type_2'') =  {\low}$.

                    Given that the $\lval$'s type is {\low} in $\overline{\tscope_2'}$, and
                    considering $\overline{\tscope_2'} \in \text{join}(\Gamma_2')$, it follows that the $\lval$ is also {\low} in any state type within $(\Gamma_2')$.
                    Consequently, the $\lval$ is {\low} in $\Gamma_k$, thus {\low} in $(\tscope_{g2}',\tscope_{l_k})$.

                    In the second configuration, we type the action's body $\stmt_2$ with a
                    {\high} context, where $\stmt_2$ reduces to $(\mem_{g2}',\mem_{a2}')$.
                    And since we previously showed that $\lval$ is {\low} in $(\tscope_{g2}',\tscope_{l_k})$ such that $(\tscope_{g2}',\tscope_{l_k}) \vdash (\mem_{g2}', \mem_{l2})$, then $\lval$ is in $\mem_{g2}'$ or $\mem_{l2}$, however not in $\mem_{a2}'$.

                    Since $\lval$'s type is {\low} in $\tscope_{g2}'$,
                    we can use Lemma \ref{lemma:branch_on_high_state_preservation} to infer that the global initial and final states remain unchanged for {\low} lvalues, which means $\mem_{g2}(\lval) = \mem_{g2}'(\lval)$.
                    Then we can use Lemma \ref{lemma:high_program_final_types_subset_of_initial} to infer that $\refine{\tscope_2}{\phi_k} \sqsubseteq (\tscope_{g2}',\tscope_{l2}')$, this entails that the $\lval$'s type label is indeed {\low} in the global of refined state $\tscope_{g_k}$.
                    It is easy to see that $\tscope_2\sqsubseteq\refine{\tscope_2}{\phi_k}$, thus  $(\tscope_{g2}, \tscope_{l2}) \sqsubseteq (\tscope_{g_k},\tscope_{l_k})$, therefore
                    $\lval$'s type is also {\low} in the global initial state type $\tscope_{g2}$, i.e. $\tscope_{g2}(\lval) = \type_2' \band \getlabel(\type_2') = {\low}$.

                    For the first configuration, in this sub-case, we have
                    $\overline{\tscope_1'}(\lval)= \type_1' \band \getlabel(\type_1') = {\low}$, and $\overline{\tscope_1'} \in \Gamma_1''$.
                    We previously showed $\tscope_1' \sqsubseteq \overline{\tscope_1'}$.
                    Since $\lval$'s typing label is {\low} in $\overline{\tscope_1'}$ and we know that the state types in $\overline{\tscope_1'}$ are more restrictive than the state types in $\tscope_1'$, we can conclude that $\tscope_1' \in \Gamma_j$ also types $\lval$ as {\low}.

                    Now, we will prove that $\mem_{g1}'\lequiv{\overline{\tscope_{g1}'}\sqcup\overline{\tscope_{g2}'}} \mem_{g2}'$ by conducting cases on $\lbl_1$:

                    \begin{enumerate}[label=\textbf{case}]
                        \item If $\lbl_1$ is {\high}: \\
                              We can replicate the same exact steps done for the second configuration to deduct $\mem_{g1}(\lval)=\mem_{g1}'(\lval)$ and $(\tscope_{g1}, \tscope_{l1}) \sqsubseteq (\tscope_{g_j},\tscope_{l_j})$ and the $\lval$'s type is {\low} in the global initial state type $\tscope_{g1}$, i.e. $\tscope_{g1}(\lval) = \type_2' \band \getlabel(\type_2') = {\low}$.

                        \item If $\lbl_1$ is {\low}:\\
                              Previously, we demonstrated that the $\lval$'s typing label is
                              {\low} in all final state types in all $\Gamma_i$ in $\Gamma_2$.
                              Consequently, none of the actions' bodies can modify $\lval$, this is true because all actions in the first and second semantics and the contracts are identical, therefore $\mem_{g1}(\lval)=\mem_{g1}'(\lval)$. Thus, we know that indeed $\stmt_1$ is typed under a high $pc$ in the second configuration, we conclude that the $\lval$ remains unchanged there as well. Consequently, we can now apply Lemma \ref{lemma:branch_on_high_state_lemma} to deduce that the $\lval$'s typing label in the first configuration $\tscope_1' \in \Gamma_j$ are more restrictive than the one we find in the refined state $\refine{\tscope_1}{\phi_j}$, and because $(\tscope_{g_j},\tscope_{l_j})= \refine{\tscope_1}{\phi_j}$ then $(\tscope_{g_j},\tscope_{l_j})(\lval) = \type_1' \band \getlabel(\type_1') = {\low}$.
                              Therefore, $\lval$'s type is also {\low} in the global initial state type $\tscope_{g2}$, i.e. $\tscope_{g2}(\lval) = \type_2' \band \getlabel(\type_2') = {\low}$
                    \end{enumerate}

                    Then,
                    we need to prove $\mem_{l1} \lequiv{\overline{\tscope_{l1}'}\sqcup\overline{\tscope_{l2}'}} \mem_{l2}$. 
                    First,
                    we prove that $\overline{\tscope_{l1}'} = \tscope_{l_j}$ directly from the definition of join as $\join(\tscope_{l_j},\tscope_{l_j}) = \tscope_{l_j}$ and we know $\join(\tscope_{l_j},\tscope_{l_j}) = \overline{\tscope_{l1}'}$ thus $\overline{\tscope_{l1}'} = \tscope_{l_j}$ holds. Similarly, we know that $\overline{\tscope_{l2}'} = \tscope_{l_k}$ holds.
                    Since
                    $\lval$ is {\low} in $\overline{\tscope_{l2}'}$ then it is also {\low} in $\tscope_{l_k}$. And since $\lval$ is {\low} in $\overline{\tscope_{l2}'}$, then it is also {\low} in $\tscope_{l_j}$.
                    From the previous subgoal, we proved $\mem_{l1}\lequiv{\tscope_{l_j}\sqcup\tscope_{l_k}} \mem_{l2}$. This is property hold.

                    Finally, since we proved $\mem_{g1}'\lequiv{\overline{\tscope_{g1}'}\sqcup\overline{\tscope_{g2}'}} \mem_{g2}'$ and $\mem_{l1} \lequiv{\overline{\tscope_{l1}'}\sqcup\overline{\tscope_{l2}'}} \mem_{l2}$ now we can use Lemma \ref{lemma:low_equivalence_distribution} to deduct that $(\mem_{g1}', \mem_{l1}) \lequiv{\overline{\tscope_1'}\sqcup\overline{\tscope_2'}} (\mem_{g2}', \mem_{l2})$ holds.
          \end{enumerate}

\end{enumerate}

\casesItem{Case transition}: similar to the conditional statement proof. 

\end{proof}

\else
\fi

\end{document}